\renewcommand*\l@author[2]{}
\renewcommand*\l@title[2]{}
\newcommand{\nc}{\newcommand}
\newcommand{\ol}{\overline}
\newcommand{\es}{\emptyset}
\newcommand{\sm}{\setminus}
\newcommand{\ve}{\varepsilon}
\newcommand{\bca}{\bigcap}
\newcommand{\Lra}{\Leftrightarrow}
\newcommand{\Ra}{\Rightarrow}
\newcommand{\ra}{\rightarrow}
\newcommand{\lra}{\leftrightarrow}
\newcommand{\sse}{\subseteq}
\newcommand{\spe}{\supseteq}
\newcommand{\fa}{\forall}
\newcommand{\mc}{\mathcal}
\newcommand{\DMO}{\DeclareMathOperator}
\newcommand{\DST}{\displaystyle}
\newcommand{\NN}{\mathbb{N}}
\newcommand{\NNZ}{\NN_0}
\newcommand{\RR}{\mathbb{R}}
\newcommand{\und}{{\:\wedge\:}} 
\newcommand{\mb}{{\:|\:}} 
\newcommand{\set}[1]{\{ #1 \}}
\newcommand{\setb}[1]{\big \{ \, #1 \, \big \}}
\nc{\simlvi}[1]{\!\sim_{#1}}
\nc{\apprel}[3]{{#1}(#2)_{(#3)}} 
\newcommand{\tb}[2]{\set{#1, \dots, #2}} 
\providecommand{\abs}[1]{\lvert #1 \rvert} 
\nc{\sselr}{\sse^{\mapsto}}
\nc{\sserl}{\sse^{\mapsfrom}}
\nc{\spelr}{\spe^{\mapsto}}
\nc{\sperl}{\spe^{\mapsfrom}}
\newcommand{\Va}{\mc{V\hspace{-0.1em}A}}
\newcommand{\Lit}{\mc{LIT}}
\newcommand{\Cls}{\mc{CLS}}
\newcommand{\Sat}{\mc{SAT}}
\newcommand{\Usat}{\mc{USAT}}
\newcommand{\Musat}{\mc{M\hspace{0.8pt}U}} 
\newcommand{\Musati}[1]{\Musat_{\!#1}} 
\newcommand{\Smusat}{\mc{S}\Musat} 
\newcommand{\Smusati}[1]{\Smusat_{\!#1}}
\nc{\Clsoo}{\Cls^{1,1}} 
\DeclareMathOperator{\var}{var}
\newcommand{\Clash}{\mc{HIT}} 
\newcommand{\Uclash}{\mc{U}\Clash} 
\newcommand{\Uclashi}[1]{\Uclash_{\!\!#1}}
\newcommand{\Ho}{\mc{HO}} 
\newcommand{\Rho}{\mc{R}\Ho} 
\DeclareMathOperator{\res}{\diamond} 
\DeclareMathOperator{\dpl}{DP} 
\newcommand{\dpi}[1]{\dpl_{\!#1}}
\newcommand{\pao}[2]{\langle #1 \ra #2 \rangle}
\DMO{\rsub}{r_S} 
\DMO{\rk}{r} 
\DMO{\rki}{r_{\infty}} 
\DeclareMathOperator{\ldeg}{ld} 
\DeclareMathOperator{\vdeg}{vd} 
\DMO{\varsing}{\var_s} 
\DMO{\varosing}{\var_{1s}} 
\DMO{\varnosing}{\var_{\neg1s}} 
\nc{\Musatns}{\Musat'} 
\nc{\Musatnsi}[1]{\Musati{#1}'}
\nc{\Smusatns}{\Smusat'} 
\nc{\Smusatnsi}[1]{\Smusati{#1}'}
\nc{\Uclashns}{\Uclash'} 
\nc{\Uclashnsi}[1]{\Uclashi{#1}'}
\nc{\tsdp}{\xrightarrow{\text{sDP}}}
\nc{\tsdps}{\tsdp_{\!*}}
\nc{\tosdp}{\xrightarrow{\text{1sDP}}}
\nc{\tosdps}{\tosdp_{\!*}}
\DMO{\sdp}{sDP} 
\DMO{\osdp}{sDP_1} 
\nc{\cflmusat}{\mc{CF}\Musat} 
\nc{\cflmusati}[1]{\mc{CF}\Musati{#1}}
\nc{\cflimusat}{\mc{CFI}\Musat} 
\DMO{\sNF}{sNF} 
\DMO{\eqp}{eqp} 
\DMO{\sgp}{sp} 
\DMO{\singind}{si} 
\DMO{\osingind}{si_1} 
\DMO{\shyp}{svh} 
\DMO{\sdph}{ssh} 
\DMO{\msdph}{mss} 
\DMO{\osdph}{ssh_1} 
\DMO{\mosdph}{mss_1} 
\nc{\bm}{\boldmath}
\nc{\bmm}[1]{\mbox{\bm$\DST #1$}}
\nc{\mi}[1]{\bmm{\mathrm{(#1):}} \quad}
\DMO{\saturate}{S}
\nc{\Esmusat}{\mc{E}\Smusat} 
\nc{\Eclash}{\mc{E}\Clash} 
\nc{\Dt}[1]{\mc{F}_{#1}} 
\DMO{\mutt}{nst_2} 
\renewcommand{\shyp}{S} 
\newenvironment{keywords}{
  \list{}{\advance\topsep by0.35cm\relax\small
    \leftmargin=1cm
    \labelwidth=0.35cm
    \listparindent=0.35cm
    \itemindent\listparindent
    \rightmargin\leftmargin}\item[\hskip\labelsep
  \bfseries Keywords:]}
{\endlist}
\begin{document}

\pagestyle{headings}

\title{On Davis-Putnam reductions for\\ minimally unsatisfiable clause-sets}

\author{Oliver Kullmann\inst{1} and Xishun Zhao\inst{2}\fnmsep\thanks{Partially supported by NSFC Grant 61272059 and MOE grant 11JJD720020.}}
\institute{
  Computer Science Department, Swansea University, UK\\
  \url{http://cs.swan.ac.uk/~csoliver}\\[0.5ex]
  \and
  Institute of Logic and Cognition\\
  Sun Yat-sen University, Guangzhou, 510275, P.R.C.
}

\maketitle

\begin{abstract}
  DP-reduction $F \leadsto \dpi{v}(F)$, applied to a clause-set $F$ and a variable $v$, replaces all clauses containing $v$ by their resolvents (on $v$). A basic case, where the number of clauses is decreased (i.e., $c(\dpi{v}(F)) < c(F)$), is \emph{singular DP-reduction} (sDP-reduction), where $v$ must occur in one polarity only once. For minimally unsatisfiable $F \in \Musat$, sDP-reduction produces another $F' := \dpi{v}(F) \in \Musat$ with the same deficiency, that is, $\delta(F') = \delta(F)$; recall $\delta(F) = c(F) - n(F)$, using $n(F)$ for the number of variables. Let $\sdp(F)$ for $F \in \Musat$ be the set of results of complete sDP-reduction for $F$; so $F' \in \sdp(F)$ fulfil $F' \in \Musat$, are \emph{nonsingular} (every literal occurs at least twice), and we have $\delta(F') = \delta(F)$. We show that for $F \in \Musat$ all complete reductions by sDP must have the same length, establishing the \emph{singularity index} of $F$. In other words, for $F', F'' \in \sdp(F)$ we have $n(F') = n(F'')$. In general the elements of $\sdp(F)$ are not even (pairwise) isomorphic. Using the fundamental characterisation by Kleine B\"uning, we obtain as application of the singularity index, that we have \emph{confluence modulo isomorphism} (all elements of $\sdp(F)$ are pairwise isomorphic) in case $\delta(F) = 2$. In general we prove that we have confluence (i.e., $\abs{\sdp(F)} = 1$) for saturated $F$ (i.e., $F \in \Smusat$). More generally, we show confluence modulo isomorphism for \emph{eventually saturated} $F$, that is, where we have $\sdp(F) \sse \Smusat$, yielding another proof for confluence modulo isomorphism in case of $\delta(F) = 2$.
\end{abstract}

\begin{keywords}
  clause-sets (CNFs), minimal unsatisfiability, DP-reduction (Davis-Putnam reduction), variable elimination, confluence, isomorphism, singular variables, singular DP-reduction, deficiency
\end{keywords}

\tableofcontents

\section{Introduction}
\label{sec:intro}

Minimally unsatisfiable clause-sets (``MU's'') are a fundamental form of irredundant unsatisfiable clause-sets. Regarding the subset relation, they are the hardest examples for proof systems. A substantial amount of insight has been gained into their structure, as witnessed by the handbook article \cite{Kullmann2007HandbuchMU}. A related area of MU, which gained importance in recent industrial applications, is the study of ``MUS's'', that is minimally unsatisfiable \emph{sub-}clause-sets $F' \in \Musat$ with $F' \sse F$ as the ``cores'' of unsatisfiable clause-sets $F$; see \cite{MarquesSilva2012MUS} for a recent overview. For the investigations of this paper there are two main sources: The structure of MU (see Subsection \ref{sec:investmusatk}), and the study of DP-reduction as started with \cite{Ku92,KuLu97,KuLu98}:
\begin{itemize}
\item A fundamental result shown there is that DP-reduction is commutative modulo subsumption (see Subsection \ref{sec:itDPr} for the precise formulation).
\item Singular DP-reduction is a special case of length-reducing DP-reduction (while in general one step of DP-reduction can yield a quadratic blow-up).
\item Confluence \emph{modulo isomorphism} was shown in \cite{Ku92} (Theorem 13, Page 52) for a combination of subsumption elimination with special cases of length-reducing DP-reductions, namely DP-reduction in case no (non-tautological) resolvent is possible, and singular DP-reduction in case there is only one side clause, or the main clause is of length at most $2$ (see Definition \ref{def:singV1}).
\end{itemize}
 The basic questions for this paper are:
 \begin{itemize}
 \item When does singular DP-reduction, applied to MU, yield unique (non-singular) results (i.e., we have confluence)?
 \item And when are the results at least determined up to isomorphism (i.e., we have confluence modulo isomorphism)?
 \end{itemize}
 Different from the result from  \cite{Ku92} mentioned above, we do not consider restricted versions of singular DP-reduction, but we restrict the class of clause-sets to which singular DP-reduction is applied (namely to subclasses of MU).

\subsection{Investigations into the structure of $\Musat(k)$}
\label{sec:investmusatk}

We give now a short overview on the problem of classifying $F \in \Musat$ in terms of the deficiency $\delta(F) := c(F) - n(F)$, that is, the problem of characterising the levels $\Musati{\delta=k} := \set{F \in \Musat : \delta(F) = k}$ (due to greater expressivity and generality, we prefer this notation over $\Musat(k)$); see \cite{Kullmann2007HandbuchMU} for further information.

The field of the combinatorial study of minimally unsatisfiable clause-sets was opened by \cite{AhLi86}, showing the fundamental insight $\delta(F) \ge 1$ for $F \in \Musat$ (see \cite{Ku00f,Kullmann2007HandbuchMU} for generalisations of the underlying method, based on autarky theory). Also $\Smusati{\delta=1}$ was characterised there, where $\Smusat \subset \Musat$ is the set of ``saturated'' minimally unsatisfiable clause-sets, which are minimal not only w.r.t.\ having no superfluous clauses, but also w.r.t.\ that no clause can be further weakened. The fundamental ``saturation method'' $F \in \Musat \leadsto F' \in \Smusat$ was introduced in \cite{FlRe94} (see Definition \ref{def:saturation}). Basic for all studies of MU is detailed knowledge on minimal number of occurrences of a (suitable) variable (yielding a suitable splitting variable): see \cite{KullmannZhao2011Bounds} for the current state-of-art. The levels $\Musati{\delta=k}$ are decidable in polynomial time by \cite{FKS00,Ku99dKo}; see \cite{Szei2002FixedParam,Kullmann2007ClausalFormZI} for further extensions.

``Singular'' variables $v$ in $F \in \Musat$, that is, variables occurring in at least one polarity only once, play a fundamental role --- they are degenerations which (usually) need to be eliminated by \emph{singular DP-reduction}. Let $\Musatns \subset \Musat$ be the set of non-singular minimally unsatisfiable clause-sets (not having singular variables), that is, the results of applying singular DP-reduction to the elements of $\Musat$ as long as possible. The fundamental problem is the characterisation of $\Musatnsi{\delta=k}$ for arbitrary $k \in \NN$. Up to now only $k \le 2$ has been solved: $\Musatnsi{\delta=1}$ has been determined in \cite{DDK98}, while $\Musatnsi{\delta=2} = \Smusatnsi{\delta=2}$ has been determined in \cite{KleineBuening2000SubclassesMU}. Regarding higher deficiencies, until now only (very) partial results in \cite{XD99} exist. Regarding singular minimally unsatisfiable clause-sets, also $\Musati{\delta=1}$ is very well known (with further extensions and generalisations in \cite{Ku99dKo}, and generalised to non-boolean clause-sets in \cite{Kullmann2007ClausalFormZII}), while for $\Musati{\delta=2}$ not much is known (Section \ref{sec:appmu2} provides first insights).

For characterising $\Musatnsi{\delta=k}$, we need (very) detailed insights into (arbitrary) $\Musati{\delta<k}$, since the basic method to investigate $F \in \Musatnsi{\delta=k}$ is to split $F$ into smaller parts from $\Musati{\delta<k}$ (usually containing singular variables). Assuming that we know $\Musatnsi{\delta<k}$, such insights can be based on some classification of $F \in \Musati{\delta<k}$ obtained from the set $\sdp(F) \sse \Musatnsi{\delta<k}$ of singular-DP-reduction results. The easiest case is when $\abs{\sdp(F)} = 1$ holds (confluence), the second-easiest case is where all elements of $\sdp(F)$ are pairwise isomorphic. This is the basic motivation for the questions raised and partially solved in this article. For general $k$ we have no conjecture yet how the classification of $\Musatnsi{\delta=k}$ could look like (besides the basic conjecture that enumeration of the isomorphism types can be done efficiently). However for unsatisfiable hitting clause-sets (two different clauses clash in at least one variable) we have the conjecture stated in \cite{KullmannZhao2011Bounds}, that for every $k \in \NN$ there are only finitely many isomorphism types in $\Uclashnsi{\delta=k}$ (unsatisfiable non-singular hitting clause-sets of deficiency $k$).

\subsection{Overview on results}
\label{sec:overview}

Section \ref{sec:defsing} introduces the basic notions regarding singularity, and the basic characterisations of singular DP-reduction on minimally unsatisfiable clause-sets are given in Subsection \ref{sec:singdp}. In Section \ref{sec:confluence} we consider the question of confluence of singular DP-reduction, with the first main result Theorem \ref{thm:confsat}, showing confluence for saturated clause-sets. Section \ref{sec:permDPred} mainly considers the question of changing the order of DP-reductions without changing the result. The second main result of this article is Theorem \ref{thm:singind}, establishing the singularity index. Section \ref{sec:confmodiso} is devoted to show confluence modulo isomorphism on eventually saturated clause-sets (Theorem \ref{lem:isonssmu}), the third main result. As an application we determine the ``types'' of (possibly singular) minimally unsatisfiable clause-sets of deficiency $2$ via Theorem \ref{thm:confmodisomu2} (Section \ref{sec:appmu2}). We conclude with a collection of open problems in Section \ref{sec:open}.

\subsection{Remarks on related publications}
\label{sec:remarkspub}

The conference-version of this report is \cite{KullmannZhao2012ConfluenceC}:
\begin{enumerate}
\item The report at hand (arXiv:1202.2600), in version 4 or later, contains various proofs, examples and additional results elided in \cite{KullmannZhao2012ConfluenceC}.
\item Additionally two technical mistakes in \cite{KullmannZhao2012ConfluenceC} have been corrected; see Theorem \ref{thm:nbesp} and remarks and Corollary \ref{cor:all2} and remarks.
\end{enumerate}
The journal-version of this report is \cite{KullmannZhao2012ConfluenceJ}, based on version 5 of the report at hand.

\subsection{Applications}
\label{sec:introapp}

Our current main application, which motivated the questions tackled in this paper in the first place, is the project of classifying the structure of $\Musati{\delta=k}$ as discussed in Subsection \ref{sec:investmusatk}: Knowing some form of invariance of singular DP-reduction enables one to classify also \emph{singular} minimally unsatisfiable clause-sets, based on knowing the \emph{non-}singular minimally unsatisfiable clause-sets of the same deficiency; see Section \ref{sec:appmu2} for a first example.

For worst-case upper bounds of SAT decision (or related problems) we sometimes need to guarantee that certain reductions will yield a certain decrease in some parameter, for example the number of variables, independently of the special order of reductions --- this is exactly established for singular DP-reduction by the singularity index (using Corollary \ref{cor:singindsamen}).

Finally, singular DP-reduction is a very basic and efficient reduction, which should be helpful in the search for MUS's, using that a singular variable for $F$ is also singular for $F' \sse F$ with $F' \in \Musat$. The basic results of Section \ref{sec:defsing} make it possible to control the effects of singular DP-reduction, while our main results enable one to estimate the inherent non-determinism. We are aware of the following algorithms using sDP-reduction:
\begin{itemize}
\item A special case of singular DP-reduction, namely unit-clause propagation, has been exploited in \cite{MinLiManyaMohamedouPlanes2010Maxsat} for searching for (some) MUS's; see Subsection \ref{sec:unitp} for further remarks. Note that in the general situation $F' \sse F$ with $F' \in \Musat$, a singular variable for $F'$ might not be singular for $F$ (and thus might go unnoticed) --- the problem is that we don't know $F'$ in advance. However in the case of unit-clauses $\set{x} \in F$ we can discard all clauses $C \in F$ with $\set{x} \subset C$ (for a MUS involving $\set{x}$), and so the singular literal $x$ won't be missed.
\item DP-reduction in general has been used in theoretical as well as in practical SAT-algorithms:
  \begin{enumerate}
  \item \cite{DP60} used DP-reductions for (complete) SAT solving, by unrestricted application of the reduction rule.
  \item In \cite{Franco1991InfrequentVariables} a simple case of DP-reduction, namely considering only variables occurring at most twice, has been analysed probabilistically.
  \item DP-reductions has been used in the worst-case analysis of algorithms in \cite{Ku92,KuLu97,KuLu98}; especially in \cite{KuLu97,KuLu98} it is shown that allowing reductions $F \leadsto \dpi{v}(F)$ with up to $K$ new clauses for a fixed $K$, i.e., $c(\dpi{v}(F)) \le c(F) + K$, can improve worst-case performance.
  \item In \cite{Gelder2001ResSATsolver} this DP-reduction with bounded clause-number-increase has been used at each node of the search tree of a SAT solver, with $K \approx 200$.
  \item In \cite{SubbarayanPradhan2004LDP} another criterion analysed in \cite{Ku92,KuLu97,KuLu98}, namely $\ell(\dpi{v}(F)) \le \ell(F)$ has been implemented, where $\ell(F) := \sum_{C \in F} \abs{C}$ is the number of literal occurrences, this time as a free-standing preprocessor. Singular DP-reduction is not covered by this criterion (since the number of literal-occurrences can be increased by sDP-reduction).
  \item This approach has been further developed in \cite{EenBiere2005Satelite}, but now using $K = 0$, i.e., $c(\dpi{v}(F)) \le c(F)$. Again a free-standing preprocessor has been provided, called ``satELite''. Now sDP-reduction is covered.
  \end{enumerate}
  This preprocessor was incorporated into several recent SAT solvers, most notably into the \texttt{minisat} solvers from version 2.0 on. So a ``minimal unsatisfiable core (or subset) extraction'' algorithm like \texttt{Haifa-MUC}, the winner of the SAT 2011 competition regarding this task, applies sDP-reduction.
\end{itemize}

\section{Preliminaries}
\label{sec:prelim}

We follow the general notations and definitions as outlined in \cite{Kullmann2007HandbuchMU}. We use $\NN = \set{1,2,\dots}$ and $\NNZ = \NN \cup \set{0}$.

Consider a relation $R \sse X^2$ on a set $X$; for us typically $X$ is the set $\Cls$ of all clause-sets or the set $\Musat$ of all minimally unsatisfiable clause-sets. We view $R$ as a ``reduction'', and we write $x \leadsto x'$ for $(x,x') \in R$. Such a reduction is called \emph{terminating} if there are no infinite chains $x_1 \leadsto x_2 \leadsto x_3 \leadsto \dots$ of reductions. Using the reflexive-transitive closure $\leadsto^*$ (that is, zero, one or more reductions taking place), for a terminating reduction and every $x \in X$ there is at least one $x' \in X$ with $x \leadsto^* x'$ such that there is no $x'' \in X$ with $x' \leadsto x''$. A terminating reduction is called \emph{confluent} if this $x'$ is always unique.

An example for a terminating and confluent reduction-relation is unrestricted DP-reduction $F \leadsto \dpi{v}(F)$ for a clause-set $F \in \Cls$ and a variable $v \in \var(F)$, as defined below.

\subsection{Clause-sets}

The (infinite) set of all variables is $\Va$, while the set of all literals is $\Lit$, where we identify the positive literals with variables, that is, we assume $\Va \subset \Lit$. Complementation is an involution of $\Lit$, and is denoted for literals $x \in \Lit$ by $\ol{x} \in \Lit$. For a set $L$ of literals we define $\ol{L} := \set{\ol{x} : x \in L}$ (so $\Lit$ is the disjoint union of $\Va$ and $\ol{\Va}$). A \textbf{clause} $C$ is a finite and clash-free set of literals (i.e., $C \cap \ol{C} = \es$), while a \textbf{clause-set} $F \in \Cls$ is a finite set of clauses. The empty clause is denoted by $\bmm{\bot} := \es$, and the empty clause-set is denoted by $\bmm{\top} \in \Cls$. We denote by $\var(F)$ the set of (occurring) variables, by $n(F) := \abs{\var(F)}$ the number of variables, by $c(F) := \abs{F}$ the number of clauses, and finally by $\delta(F) := c(F) - n(F)$ the deficiency. For clause-sets $F, G$ we denote by \bmm{F \cong G} that both clause-sets are \textbf{isomorphic}, that is, the variables of $F$ can be renamed and potentially flipped so that $F$ is turned into $G$; more precisely, an isomorphism $\alpha$ from $F$ to $G$ is a bijection $\alpha$ on literal-sets which preserves complementation and which maps the clauses of $F$ precisely to the clauses of $G$. The \emph{literal-degree} $\ldeg_F(x) \in \NNZ$ of a literal $x$ for a clause-set $F$ is the number of clauses the literal appears in, i.e., $\ldeg_F(x) := \abs{\set{C \in F : x \in C}}$. The \emph{variable-degree} $\vdeg_F(v) \in \NNZ$ for a variable $v$ is the number of clauses the variable appears in, i.e., $\vdeg_F(v) := \ldeg_F(v) + \ldeg_F(\ol{v})$.

For a clause-set $F$ and a variable $v$, by \bmm{\dpi{v}(F)} we denote the result of applying DP-reduction on $v$ (``DP'' stands for ``Davis-Putnam'', who introduced this operation in \cite{DP60}), that is, removing all clauses containing $v$ and adding all resolvents on $v$. More formally
\begin{displaymath}
  \dpi{v}(F) := \set{C \in F : v \notin \var(C)} \cup \set{C \res D : C, D \in F, \, C \cap \ol{D} = \set{v}},
\end{displaymath}
where clauses $C, D$ are resolvable iff they clash in exactly one literal, i.e., iff $\abs{C \cap \ol{D}} = 1$, while for resolvable clauses $C, D$ the resolvent $\bmm{C \res D} := (C \cup D) \sm \set{x,\ol{x}}$ for $C \cap \ol{D} = \set{x}$ is defined as the union minus the resolution literals (the two clashing literals). $\dpi{v}(F)$ is logically equivalent to the existential quantification of $F$ by $v$, and thus $F$ and $\dpi{v}(F)$ are satisfiability-equivalent, that is, $\dpi{v}(F)$ is satisfiable iff $F$ is satisfiable.

We can define $\Sat \subset \Cls$, the set of all satisfiable clause-set, as the set of $F \in \Cls$ where reduction by DP will finally yield $\top$, the empty clause-set, while we can define $\Usat = \Cls \sm \Sat$, the set of all unsatisfiable clause-set, as the set of $F \in \Cls$ where reduction by DP will finally yield $\set{\bot}$, the clause-set consisting of the empty clause.

Since DP-reduction on $v$ removes at least variable $v$, every sequence of applications of DP until no variables are left must end up either in $\top$ or in $\set{\bot}$. The sa\-tis\-fi\-abi\-li\-ty-in\-var\-ian\-ce of DP-reduction yields that the final result does not depend on the choices involved, but only on the satisfiability resp.\ unsatisfiability of the starting clause-set. So unrestricted DP-reduction is terminating and confluent; a proof of confluence from first principles (by combinatorial means) is achieved by Lemma \ref{lem:DPcomm}.

\subsection{Minimal unsatisfiability}

The set of minimally unsatisfiable clause-sets is $\Musat \subset \Usat$, the set of all clause-sets which are unsatisfiable, while removal of any clause makes them satisfiable. Furthermore the set of saturated minimally unsatisfiable clause-sets is $\Smusat \subset \Musat$, which is the set of minimally unsatisfiable clause-sets such that addition of any literal to any clause renders them satisfiable. Note that for $v \in \var(F)$ with $F \in \Musat$ we have $\vdeg_F(v) \ge 2$. We recall the fact (\cite{FlRe94} and Lemma 5.1 in \cite{Kullmann2007ClausalFormZII}) that every minimally unsatisfiable clause-set $F \in \Musat$ can be \textbf{saturated}, i.e., by adding literal occurrences to $F$ we obtain $F' \in \Smusat$ with $\var(F') = \var(F)$ such that there is a bijection $\alpha: F \ra F'$ with $C \sse \alpha(C)$ for all $C \in F$. The details are as follows.
\begin{definition}\label{def:saturation}
  The operation $\bmm{\saturate(F,C,x)} := (F \sm \set{C}) \cup (C \cup \set{x}) \in \Cls$ (adding literal $x$ to clause $C$ in $F$) is defined if $F \in \Cls$, $C \in F$, and $x$ is a literal with $\var(x) \in \var(F) \sm \var(C)$.
  A \textbf{saturation} $F' \in \Smusat$ of $F \in \Musat$ is obtained by a sequence $F = F_0, \dots, F_m = F'$, $m \in \NNZ$,
  \begin{itemize}
  \item such that for $0 \le i < m$ there are $C_i, x_i$ with $F_{i+1} = \saturate(F_i,C_i,x_i)$,
  \item such that for all $1 \le i \le m$ we have $F_i \notin \Sat$,
  \item and such that the sequence cannot be extended.
  \end{itemize}
  Note that $n(F') = n(F)$ and $c(F') = c(F)$ holds (and thus $\delta(F') = \delta(F)$). More generally, a \textbf{partial saturation} of a clause-set $F \in \Musat$ is a clause-set $F' \in \Musat$ such that $\var(F') = \var(F)$ and there is a bijection $\alpha: F \ra F'$ such that for all $C \in F$ we have $C \sse \alpha(C)$.
\end{definition}
Please note that if for $F \in \Musat$ and $F' := \saturate(F,C,x)$ we have $F' \notin \Sat$, then actually $F' \in \Musat$ must hold. Thus if $F'$ is a saturation of $F \in \Musat$ in the sense of Definition \ref{def:saturation}, then actually $F'$ is saturated (minimally unsatisfiable).

A clause-set $F$ is \textbf{hitting} if every two different clauses clash in at least one literal. The set of hitting clause-sets is denoted by
\begin{displaymath}
  \bmm{\Clash} := \set{F \in \Cls \mb \fa\, C, D \in F, C \not= D : C \cap \ol{D} \not= \es} \subset \Cls,
\end{displaymath}
the set of unsatisfiable hitting clause-sets by $\bmm{\Uclash} := \Clash \cap \Usat$. When interpreting $F$ as DNF, hitting clause-sets are known as ``disjoint'' or ``orthogonal'' DNF; see Chapter 7 in \cite{CramaHammer2011BooleanFunctions}.

\begin{lemma}\label{lem:UHITSMU}
  We have $\Uclash \subset \Smusat$.
\end{lemma}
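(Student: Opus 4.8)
The plan is to reduce both halves of the claim---that $F$ is minimally unsatisfiable and that $F$ is saturated---to one counting identity for hitting clause-sets, and then verify that identity. First I would establish: for any $F \in \Clash$ with $n := n(F)$, the number of total assignments over $\var(F)$ satisfying $F$ equals $2^n \cdot \bigl(1 - \sum_{C \in F} 2^{-\abs C}\bigr)$. The proof of this is where the hitting property enters: a single clause $C$ is falsified by exactly $2^{n-\abs C}$ total assignments, and for distinct $C, D \in F$ these two sets of ``bad'' assignments are disjoint, since an assignment falsifying both would have to set some literal $x \in C$ and its complement $\ol x \in D$ (such $x$ exists because $F$ is hitting) simultaneously to false, which is impossible. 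Summing, the number of bad assignments is $\sum_{C \in F} 2^{n-\abs C}$, giving the identity. In particular $F \in \Uclash \Lra (F \in \Clash \und \sum_{C \in F} 2^{-\abs C} = 1)$, and any $F \in \Clash$ with $\sum_{C \in F} 2^{-\abs C} < 1$ is satisfiable.

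Next I would note that the quantity $\sum_{C \in F} 2^{-\abs C}$ makes no reference to $n(F)$, which is exactly what lets the argument survive deletion of a clause. For minimal unsatisfiability: for $F \in \Uclash$ and $C_0 \in F$, the clause-set $F \sm \set{C_0}$ is still hitting (every sub-clause-set of a hitting clause-set is hitting), and $\sum_{C \in F \sm \set{C_0}} 2^{-\abs C} = 1 - 2^{-\abs{C_0}} < 1$, so it is satisfiable; hence $F \in \Musat$. For saturation: for $C_0 \in F$ and a literal $x$ with $\var(x) \in \var(F) \sm \var(C_0)$, put $F' := \saturate(F, C_0, x) = (F \sm \set{C_0}) \cup \set{C_0 \cup \set{x}}$; the enlarged clause $C_0 \cup \set{x}$ contains $C_0$ and hence still clashes with all other clauses, so $F' \in \Clash$ with $\var(F') = \var(F)$, and $\sum_{C \in F'} 2^{-\abs C} = 1 - 2^{-\abs{C_0}} + 2^{-(\abs{C_0}+1)} = 1 - 2^{-(\abs{C_0}+1)} < 1$; so $F'$ is satisfiable, giving $F \in \Smusat$.

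The only nontrivial step is the counting identity, and even there the content is just the disjointness of the falsified-assignment families, which is immediate from the clash condition; the rest is arithmetic with the series $\sum 2^{-\abs C}$. The single pitfall I would watch for is that $n(F)$ drops when a clause is removed, so I would phrase the satisfiability test in the $n$-free form ``$\sum_{C \in F} 2^{-\abs C} < 1$'' rather than trying to count models of the reduced clause-set directly; with that framing there is no real obstacle.
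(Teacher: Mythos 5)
Your proof is correct and takes essentially the same route as the paper: the paper's entire argument is the identity ``$F \in \Clash$ is unsatisfiable iff $\sum_{C \in F} 2^{-\abs{C}} = 1$'' (justified by the disjointness of the falsifying-assignment sets of clashing clauses, with the identity itself cited from the handbook), from which saturatedness and minimal unsatisfiability both follow exactly as you derive them. The only thing you omit is the strictness of the inclusion $\Uclash \subset \Smusat$ (the paper uses $\subset$ for proper inclusion and points to $\Dt{4}$ from Example \ref{exp:F2F3} as a saturated MU that is not hitting), which is a one-line addition.
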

\begin{proof}
For $F \in \Clash$ we have $F \in \Usat$ iff $\sum_{C \in F} 2^{-\abs{C}} = 1$ (see \cite{Kullmann2007HandbuchMU}; the point is that two clashing clauses do not have a common falsifying assignment). Thus adding a literal to a clause of $F \in \Uclash$ makes $F$ satisfiable. See Example \ref{exp:F2F3} for an example showing that the inclusion is strict. \qed
\end{proof}

\begin{example}\label{exp:F2F3}
  Two unsatisfiable hitting clause-sets used in various examples are:
  \begin{eqnarray*}
    \Dt{2} & := & \set{\set{v_1,v_2},\set{\ol{v_1},\ol{v_2}},\set{\ol{v_1},v_2},\set{\ol{v_2},v_1}}\\
    \Dt{3} & := & \set{\set{v_1,v_2,v_3},\set{\ol{v_1},\ol{v_2},\ol{v_3}},\set{\ol{v_1},v_2},\set{\ol{v_2},v_3},\set{\ol{v_3},v_1}}.
  \end{eqnarray*}
  And an example for an element of $\Smusat \sm \Uclash$ is given by
  \begin{displaymath}
    \Dt{4} := \set{\set{v_1,v_2,v_3,v_4},\set{\ol{v_1},\ol{v_2},\ol{v_3},\ol{v_4}},\set{\ol{v_1},v_2},\set{\ol{v_2},v_3},\set{\ol{v_3},v_4},\set{\ol{v_4},v_1}}.
  \end{displaymath}
  To see $\Dt{4} \in \Smusat$ it is easiest to use Corollary 5.3 in \cite{Kullmann2007ClausalFormZII}, that is, we have to show that for all $v \in \var(\Dt{4})$ and $\ve \in \set{0,1}$ we have $\pao v{\ve} * \Dt{4} \in \Musat$. W.l.o.g.\ $v = v_1$ and $\ve = 0$, and then $\pao v{\ve} * \Dt{4} = \set{\set{v_2,v_3,v_4}, \set{\ol{v_2},v_3},\set{\ol{v_3},v_4},\set{\ol{v_4}}} \in \Musati{\delta=1}$. The clause-sets $\Dt{2}, \Dt{3}, \Dt{4}$ are elements of $\Musati{\delta=2}$; see Section \ref{sec:appmu2} for more on this class.
\end{example}

The following (new) observation is fundamental for the study of hitting clause-sets:
\begin{lemma}\label{lem:hitDPg}
  For $F \in \Clash$ and a variable $v$ we have $\dpi{v}(F) \in \Clash$.
\end{lemma}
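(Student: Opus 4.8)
The plan is to verify the defining property of $\Clash$ for $\dpi{v}(F)$ directly: given two \emph{distinct} clauses $E, E'$ of $\dpi{v}(F)$, we must exhibit a clashing literal. Each clause of $\dpi{v}(F)$ is either an \emph{old} clause $C \in F$ with $v \notin \var(C)$, or a \emph{resolvent} $A \res B$ with $A, B \in F$, $v \in A$, $\ol{v} \in B$, $A \cap \ol{B} = \set{v}$; fix one such presentation for each clause of $\dpi{v}(F)$. The one fact used throughout: if $E$ is presented using $D \in F$ (i.e.\ $E = D$ in the old case, or $D \in \set{A,B}$ in the resolvent case), then $D \sm \set{v, \ol{v}} \sse E$, since forming a resolvent on $v$ deletes only $v$ and $\ol{v}$. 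Consequently it suffices to find, for distinct $E, E'$, presenting clauses $D$ of $E$ and $D'$ of $E'$ with $D \ne D'$ and a clash literal $x \in D \cap \ol{D'}$ whose variable is not $v$: then $x \in D \sm \set{v,\ol v} \sse E$ and $\ol{x} \in D' \sm \set{v,\ol v} \sse E'$, so $E$ and $E'$ clash. Existence of \emph{some} $x \in D \cap \ol{D'}$ is immediate from $F \in \Clash$ once $D \ne D'$ is established.

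The argument then splits into three cases. (1) $E = C$, $E' = C'$ both old: here $C \ne C'$ and, since $v \notin \var(C)$, every clash literal between $C$ and $C'$ has variable $\ne v$; take $D = C$, $D' = C'$. (2) $E = C$ old, $E' = A' \res B'$ a resolvent: then $C \ne A'$ because $v \in A'$ but $v \notin \var(C)$, and any $x \in C \cap \ol{A'}$ has variable $\ne v$ because $x \in C$; take $D = C$, $D' = A'$. (3) $E = A \res B$, $E' = A' \res B'$ distinct resolvents: then $A \ne A'$ or $B \ne B'$. If $A \ne A'$, pick $x \in A \cap \ol{A'}$; its variable is not $v$, for $x = v$ would put $\ol v$ into $A'$ and $x = \ol v$ would put $\ol v$ into $A$, each contradicting clash-freeness (both $A, A'$ contain $v$); take $D = A$, $D' = A'$. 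The subcase $B \ne B'$ is symmetric, now using that both $B$ and $B'$ contain $\ol v$.

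I do not anticipate a genuine obstacle; the single point demanding care is the interplay between the clash literal and the resolution variable $v$. One must resist simply taking an arbitrary clash literal of two source clauses, since a resolvent can erase $v$ and $\ol v$ --- hence the argument is careful each time to locate a clash literal \emph{away from} $v$, and the clash-freeness of clauses of $F$ is exactly what makes this possible in cases (2) and (3). Two harmless remarks to include: if $v \notin \var(F)$ then $\dpi{v}(F) = F$ and there is nothing to show; and a clause of $\dpi{v}(F)$ may have several presentations (a resolvent coinciding with an old clause, or arising from several resolving pairs), which does no harm since only one suitable witness pair $(D, D')$ is ever needed.
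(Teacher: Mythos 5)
Your proof is correct and follows essentially the same route as the paper's: the same three-case split (old/old, old/resolvent, resolvent/resolvent), inheriting a clash from the parent clauses in $F$ and checking each time that the clash literal can be taken away from the resolution variable $v$. You merely spell out a few details (clash-freeness forcing $\var(x)\ne v$ in the resolvent cases, and the multiple-presentation remark) that the paper's proof leaves implicit.
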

\begin{proof}
Consider clauses $E_1, E_2 \in \dpi{v}(F)$, $E_1 \not= E_2$. If $E_1, E_2 \in F$, then $E_1, E_2$ clash since $F$ is hitting. The two remaining cases are (w.l.o.g.) $E_1 \in F, E_2 \notin F$ and $E_1, E_2 \notin F$. In the first case assume $E_2 = C_2 \res D_2$ for $C_2, D_2 \in F$ with $C_2 \cap \ol{D_2} = \set{v}$. Since $v \notin \var(E_1)$, it clashes $E_1$ with $C_2$ (as well as with $D_2$) and thus with $E_2$. For the second case also assume $E_1 = C_1 \res D_1$ for $C_1, D_1 \in F$ with $C_1 \cap \ol{D_1} = \set{v}$. We must have $C_1 \not= C_2$ or $D_1 \not= D_2$, yielding a clash between $C_1, C_2$ resp.\ $D_1, D_2$, and thus also $E_1, E_2$ clash. \qed
\end{proof}

Since DP-reduction preserves unsatisfiability, we get:
\begin{corollary}\label{cor:hituDPg}
  For $F \in \Uclash$ and a variable $v$ we have $\dpi{v}(F) \in \Uclash$.
\end{corollary}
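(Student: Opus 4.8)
The plan is to combine the two ingredients that are already in place: the hitting property is preserved by DP-reduction (Lemma \ref{lem:hitDPg}), and DP-reduction is satisfiability-equivalent (as recorded in the Preliminaries, $\dpi{v}(F)$ is logically equivalent to the existential quantification of $F$ by $v$, hence satisfiable iff $F$ is). Since by definition $\Uclash = \Clash \cap \Usat$, these two facts together immediately yield the claim.

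Concretely, first I would unfold the hypothesis $F \in \Uclash$ into $F \in \Clash$ and $F \in \Usat$. Applying Lemma \ref{lem:hitDPg} to $F \in \Clash$ gives $\dpi{v}(F) \in \Clash$. Separately, from $F \in \Usat$ and the satisfiability-invariance of DP-reduction we get $\dpi{v}(F) \in \Usat$. Intersecting the two memberships, $\dpi{v}(F) \in \Clash \cap \Usat = \Uclash$, which is the assertion.

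There is essentially no obstacle here: the genuine combinatorial content sits in Lemma \ref{lem:hitDPg} (the short case analysis on pairs of original/resolvent clauses), while preservation of unsatisfiability under DP-reduction is classical. The only minor point to keep in mind is that $v$ is an arbitrary variable, not necessarily one occurring in $F$; but if $v \notin \var(F)$ then $\dpi{v}(F) = F$ and the statement is trivially true, so no extra case is required. Hence the proof is a one-line deduction from the preceding lemma and the stated invariance.
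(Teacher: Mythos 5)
Your argument is exactly the paper's: the corollary is derived by combining Lemma \ref{lem:hitDPg} with the satisfiability-invariance of DP-reduction, using $\Uclash = \Clash \cap \Usat$. Correct and essentially identical to the paper's one-line deduction.
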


\section{Singularity}
\label{sec:defsing}

In this section we present basic results on singular variables in minimally unsatisfiable clause-sets. Lemmas \ref{lem:singDpMU}, \ref{lem:singDpSMU} yield basic characterisations of singular DP-reduction for minimally unsatisfiable resp.\ saturated minimally unsatisfiable clause-sets (some of these results were discussed in \cite{KullmannSzeider2003Communication}), while Lemma \ref{lem:fullunit} shows that in the context of MU unit-clause propagation is a special case of singular DP-reduction. These results are straight-forward, but the choice of concepts is important, and the facts are somewhat subtle.

\subsection{Singular variables}
\label{sec:singvar}

\begin{definition}\label{def:singV1}
  We call a variable $v$ \textbf{singular} for a clause-set $F \in \Cls$ if we have $\min(\ldeg_F(v),\ldeg_F(\ol{v})) = 1$; the set of singular variables of $F$ is denoted by $\bmm{\varsing(F)} \sse \var(F)$. $F$ is called \textbf{nonsingular} if $F$ does not contain singular variables. Furthermore we use the following notations:
  \begin{itemize}
  \item $\bmm{\Musatns} := \set{F \in \Musat : \varsing(F) = \es}$ denotes the set of nonsingular MU's;
  \item $\bmm{\Smusatns} := \Smusat \cap \Musatns$ is the set of nonsingular saturated MU's;
  \item $\bmm{\Uclashns} := \Uclash \cap \Smusatns = \Clash \cap \Musatns$ is the set of nonsingular unsatisfiable hitting clause-sets.
  \end{itemize}
  More precisely:
  \begin{itemize}
  \item We call variable $v$ \textbf{\bmm{m}-singular} for $F$ for some $m \in \NN$, if $v$ is singular for $F$ with $m = \vdeg_F(v) - 1$. The set of 1-singular variables of $F$ is denoted by $\bmm{\varosing(F)} := \set{v \in \Va : \ldeg_F(v) = \ldeg_F(\ol{v}) = 1} \sse \varsing(F)$.
  \item A \textbf{non-1-singular} variable is a variable which $m$-singular for some $m \ge 2$ (so ``non-1-singular'' variables are singular). The set of non-1-singular variables of $F$ is denoted by $\bmm{\varnosing(F)} := \varsing(F) \sm \varosing(F)$. 
  \end{itemize}
  A \textbf{singular literal} for a singular variable $v$ is a literal $x$ with $\var(x) = v$ and $\ldeg_F(x) = 1$; if the underlying variable is $1$-singular, then some choice is applied, so that we can speak of ``the'' singular literal of a singular variable. For the singular literal $x$ for $v$ we call the clause $C \in F$ with $x \in C$ the \textbf{main clause}, while the \textbf{side clauses} are the clauses $D_1, \ldots, D_m \in F$ with $\ol{x} \in D_i$ (here $v$ is $m$-singular).
\end{definition}
\begin{example}\label{exp:singvar}
  For $F := \set{\set{a},\set{\ol{a},b},\set{\ol{a},\ol{b}}}$, variable $a$ is $2$-singular, while variable $b$ is $1$-singular, and thus $\varsing(F) = \set{a,b}$, $\varosing(F) = \set{b}$ and $\varnosing(F) = \set{a}$. The main clause of $a$ is $\set{a}$, its side clauses are $\set{\ol{a},b},\set{\ol{a},\ol{b}}$, while for the main clause of $b$ there is the choice between $\set{\ol{a},b}$ and $\set{\ol{a},\ol{b}}$.

  In general, if $F \in \Musat$ contains a unit-clause $\set{x} \in F$, then $\var(x)$ is singular for $F$ (see Lemma \ref{lem:fullunit}). Thus the clause-sets $\set{\bot}$ and $\Dt{2}$ (recall Example \ref{exp:F2F3}) are the two smallest elements of $\Musatns$, $\Smusatns$ and $\Uclashns$ regarding the number of clauses.
\end{example}

\subsection{Singular DP-reduction}
\label{sec:singdp}

The following special application of DP-reduction appears at many places in the literature (see \cite{KleineBuening2000SubclassesMU}, or Appendix B in \cite{Ku99dKo} and subsequent \cite{Szei2002FixedParam,Kullmann2007ClausalFormZI}), and is fundamental for investigations of minimally unsatisfiable clause-sets:
\begin{definition}\label{def:singDP}
  A \textbf{singular DP-reduction} is a reduction $F \leadsto \dpi{v}(F)$, where $v$ is singular for $F \in \Musat$. For $F, F' \in \Musat$ by \bmm{F \tsdp F'} we denote that $F'$ is obtained from $F$ by one step of singular DP-reduction; i.e., there is a singular variable $v$ for $F$ with $F' = \dpi{v}(F)$, where $v$ is called the \textbf{reduction variable}. And we write $F \tsdps F'$ if $F'$ is obtained from $F$ by an arbitrary number of steps (possibly zero) of singular DP-reductions. The set of all nonsingular clause-sets obtainable from $F$ by singular DP-reduction is denoted by \bmm{\sdp(F)}:
  \begin{displaymath}
    \sdp(F) := \set{F' \in \Musatns : F \tsdps F'}.
  \end{displaymath}
\end{definition}

The following lemma is kind of ``folklore'', but apparently the only place where its assertions are (partially) stated in the literature (in a more general form) is \cite{Kullmann2007ClausalFormZI}, Lemma 6.1 (we add here various details):
\begin{lemma}\label{lem:singDpMU}
  Consider a clause-set $F$ and a singular variable $v$ for $F$. Then the following assertions are equivalent:
  \begin{enumerate}
  \item\label{lem:singDpMU1} $F$ is minimally unsatisfiable.
  \item\label{lem:singDpMU2} $\delta(\dpi{v}(F)) = \delta(F)$ and $\dpi{v}(F)$ is minimally unsatisfiable.
  \item\label{lem:singDpMU3} $\dpi{v}(F)$ is minimally unsatisfiable, and for the main clause $C$ and the side clauses $D_1, \dots, D_m$ for $v$ (in $F$) we have:
    \begin{enumerate}
    \item\label{lem:singDpMU3a} Every $D_i$ clashes with $C$ in exactly one variable (namely in $v$).
    \item\label{lem:singDpMU3b} For $1 \le i \ne j \le m$ we have $C \res D_i \not= C \res D_j$.
    \item\label{lem:singDpMU3c} For $E \in F$ with $v \notin \var(E)$ and for all $1 \le i \le m$ we have $C \res D_i \not= E$.
    \end{enumerate}
  \end{enumerate}
\end{lemma}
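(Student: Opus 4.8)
The plan is to prove the cycle of implications $(\ref{lem:singDpMU1}) \Rightarrow (\ref{lem:singDpMU3}) \Rightarrow (\ref{lem:singDpMU2}) \Rightarrow (\ref{lem:singDpMU1})$, keeping throughout the notation that $x$ is the singular literal of $v$ (so $\ldeg_F(x) = 1$), $C \in F$ is the main clause with $x \in C$, and $D_1, \dots, D_m \in F$ are the side clauses with $\ol x \in D_i$, where $m = \ldeg_F(\ol x) \ge 1$ (using that $\vdeg_F(v) \ge 2$ will follow once $F \in \Musat$). The key structural observation, used repeatedly, is that since $x$ occurs only in $C$, every clause of $F$ except $C$ that is resolvable with $C$ on $v$ must be one of the $D_i$; hence $\dpi{v}(F) = (F \sm \set{C, D_1, \dots, D_m}) \cup \set{C \res D_i : 1 \le i \le m}$, provided each $C \res D_i$ is actually a legal resolvent (i.e.\ $C, D_i$ clash only in $v$ --- this is condition (\ref{lem:singDpMU3a})). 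Note $C \res D_i \supseteq D_i \sm \set{\ol x}$ always, and under (\ref{lem:singDpMU3a}) in fact $C \res D_i = (C \sm \set x) \cup (D_i \sm \set{\ol x})$.

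For $(\ref{lem:singDpMU1}) \Rightarrow (\ref{lem:singDpMU3})$, assume $F \in \Musat$. First, conditions (\ref{lem:singDpMU3a}), (\ref{lem:singDpMU3b}), (\ref{lem:singDpMU3c}) are all instances of irredundancy/non-subsumption arguments: if some $D_i$ clashed with $C$ in a second variable $w \neq v$, then $C \res D_i$ would be a tautology (has $w, \ol w$), so DP on $v$ would simply delete $C$ and $D_i$ without adding a usable resolvent; one checks $F \sm \set{D_i}$ is still unsatisfiable (any satisfying assignment would have to satisfy $C$ via $x$, hence falsify $\ol x$, but then on variable $w$ it satisfies exactly one of $C, D_i$ and the other clause gives a contradiction --- the cleanest route is to argue directly that any total assignment falsifying $F \sm \set{D_i}$ can be patched on $v$ to also falsify $D_i$, contradicting minimal unsatisfiability). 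Conditions (\ref{lem:singDpMU3b}) and (\ref{lem:singDpMU3c}) are handled the same way: a coincidence $C \res D_i = C \res D_j$ or $C \res D_i = E$ would mean a clause of $\dpi{v}(F)$ is "free", and one shows the corresponding clause of $F$ can be removed while preserving unsatisfiability, contradicting $F \in \Musat$. Then, to get $\dpi{v}(F) \in \Musat$: it is unsatisfiable because DP-reduction preserves satisfiability-status; and it is minimally so because, writing $G := \dpi{v}(F)$ with $c(G) = c(F) - m$ by the displayed formula plus (\ref{lem:singDpMU3a})--(\ref{lem:singDpMU3c}), one shows every clause of $G$ is necessary --- a clause $E \in G \cap F$ is necessary in $G$ because it was necessary in $F$ (a falsifying assignment for $F \sm \set E$, after applying DP on $v$, falsifies $G \sm \set E$; here one must re-expand, lifting an assignment of $\var(G)$ to $\var(F)$ by choosing the value of $v$ appropriately), and a clause $C \res D_i \in G$ is necessary because $C$ (equivalently $D_i$) is necessary in $F$.

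For $(\ref{lem:singDpMU3}) \Rightarrow (\ref{lem:singDpMU2})$: assuming (\ref{lem:singDpMU3}), the displayed description of $\dpi{v}(F)$ holds with exactly $m$ distinct new clauses, none equal to an old one, so $c(\dpi{v}(F)) = c(F) - (m+1) + m = c(F) - 1$; and $\var(\dpi{v}(F)) = \var(F) \sm \set v$ (the variable $v$ disappears, and no other variable can disappear: each variable $\neq v$ occurring in $C$ or a $D_i$ still occurs in some $C \res D_i$ by (\ref{lem:singDpMU3a}) which guarantees no cancellation, and variables in $E \in F$, $v \notin \var(E)$, survive since $E$ survives unless... but $E$ survives, possibly also as some $C\res D_i$, which by (\ref{lem:singDpMU3c}) does not happen), so $n(\dpi{v}(F)) = n(F) - 1$, giving $\delta(\dpi{v}(F)) = \delta(F)$. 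Minimal unsatisfiability of $\dpi{v}(F)$ is assumed. Finally $(\ref{lem:singDpMU2}) \Rightarrow (\ref{lem:singDpMU1})$: $F$ is unsatisfiable since $\dpi{v}(F)$ is and DP preserves unsatisfiability (equivalently, $F \models \dpi{v}(F)$ up to the variable $v$); for minimality, suppose $F \sm \set E$ were unsatisfiable for some $E \in F$ --- one case-distinguishes whether $E$ is one of $C, D_1, \dots, D_m$ or not, and in each case exhibits a proper sub-clause-set of $\dpi{v}(F)$ that is still unsatisfiable (if $E = D_i$, then $\dpi{v}(F \sm \set{D_i}) = \dpi{v}(F) \sm \set{C \res D_i}$ by (\ref{lem:singDpMU3b}),(\ref{lem:singDpMU3c}), still unsatisfiable; similarly for $E = C$ one removes all resolvents but can recover a smaller MU inside; for $E \notin \set{C, D_1, \dots, D_m}$ one has $\dpi{v}(F \sm \set E) = \dpi{v}(F) \sm \set E$), contradicting $\dpi{v}(F) \in \Musat$.

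The main obstacle, and the place deserving the most care, is the bookkeeping in the "lifting" direction: translating a falsifying (total) assignment for a sub-clause-set of $\dpi{v}(F)$ back to one for the corresponding sub-clause-set of $F$, which requires choosing the value of $v$ correctly --- one must check that the single clause $C$ (which contains the singular literal $x$) does not obstruct this, and conversely that in the forward direction the $m$ side clauses collapse exactly as claimed without spurious coincidences. The equivalences are "straightforward but somewhat subtle" precisely because the unit/singular literal $x$ breaks the symmetry between the main and side clauses, so each implication needs its own assignment-patching argument rather than a uniform one; I would organise the proof so that the assignment-lifting lemma (every assignment falsifying $\dpi{v}(F_0)$ extends to one falsifying $F_0$, for $F_0 \subseteq F$ containing $C$ and all $D_i$, or suitable variants) is isolated once and reused.
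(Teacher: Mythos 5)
Your decomposition is genuinely different from the paper's. The paper cites an external result (Lemma 6.1 of \cite{Kullmann2007ClausalFormZI}) for the equivalence of Parts 1 and 2, derives Part 3 from Part 2 by a deficiency count, and proves that Part 3 implies Part 1 directly; you instead run the cycle $1 \Rightarrow 3 \Rightarrow 2 \Rightarrow 1$ from first principles. Your $1 \Rightarrow 3$ (establishing (3a)--(3c) by flipping the value of $v$ in a hypothetical satisfying assignment of $F \sm \set{D_i}$, and transferring irredundancy witnesses between $F$ and $\dpi{v}(F)$ by restriction resp.\ lifting) and your $3 \Rightarrow 2$ (clause and variable counting) are sound in outline and more self-contained than the paper's treatment, although the written sketch repeatedly says ``falsifying assignment'' where ``satisfying assignment'' is meant, and the parenthetical justification of (3a) does not parse as written.

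The genuine gap is in $2 \Rightarrow 1$. You write ``if $E = D_i$, then $\dpi{v}(F \sm \set{D_i}) = \dpi{v}(F) \sm \set{C \res D_i}$ by (3b), (3c)'' --- but under hypothesis 2 you do not have (3a), (3b) or (3c): those belong to statement 3, which in your cycle is only known to follow from statement 1, the very thing you are trying to prove. This is circular, and the circularity is not cosmetic: without (3a) the resolvent $C \res D_i$ need not exist at all (if $C$ and $D_i$ clash in a second variable, DP deletes both clauses and adds nothing), in which case deleting $D_i$ from $F$ deletes nothing from $\dpi{v}(F)$ and no proper unsatisfiable subset of $\dpi{v}(F)$ is produced. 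Concretely, $F = \set{\set{v,w},\set{\ol{v},\ol{w}},\set{a},\set{\ol{a}}}$ has $v$ singular, $\dpi{v}(F) = \set{\set{a},\set{\ol{a}}} \in \Musat$ and $\delta(\dpi{v}(F)) = \delta(F) = 1$, yet $F \notin \Musat$; so this degenerate configuration is exactly what any proof of $2 \Rightarrow 1$ must exclude (it appears to strain even the literal statement, which the paper handles only by appeal to the cited lemma), and your proposal never engages with it.

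The repair is to reorganise as the paper does: prove $3 \Rightarrow 1$, where (3a)--(3c) are legitimate hypotheses and your case analysis on $E$ goes through verbatim, and connect Part 2 to Part 3 by a counting argument showing that any failure of (3a), (3b) or (3c) forces $\delta(\dpi{v}(F)) < \delta(F)$. That counting step --- which must relate the clauses lost to the variables lost, and is where all the delicacy of the ``$2$ implies the rest'' direction is concentrated --- is entirely absent from your proposal.
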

\begin{proof}
The equivalence of Part \ref{lem:singDpMU1} and Part \ref{lem:singDpMU2} is a special case of Lemma 6.1 in \cite{Kullmann2007ClausalFormZI}.
Part \ref{lem:singDpMU2} implies Part \ref{lem:singDpMU3}, since if one of the conditions \ref{lem:singDpMU3a}, \ref{lem:singDpMU3b} or
\ref{lem:singDpMU3c} would not hold, then the deficiency of $\dpi{v}(F)$ would be (strictly) smaller than $F$, contradicting the assumption $\delta(\dpi{v}(F)) = \delta(F)$.
Finally we show that Part \ref{lem:singDpMU3} implies Part \ref{lem:singDpMU1}. Since $\dpi{v}(F)$ is minimally unsatisfiable, $F$ is unsatisfiable. Now suppose that $F$ is not minimally unsatisfiable. So for some clause $E \in F$ the clause-set $F' := F\setminus \{E\}$ is still unsatisfiable. By condition \ref{lem:singDpMU3a} we know that $C \res D_i$ must be in $\dpi{v}(F)$ for all $i \in \tb 1m$. Thus clause $E$ can not be the main clause $C$, and if $m=1$, then $E$ can not be the side clause neither. So $v$ is still a singular variable in $F'$. Since $\dpi{v}(F)$ is minimally unsatisfiable, while we have $\dpi{v}(F') \subseteq \dpi{v}(F)$, we obtain $\dpi{v}(F')= \dpi{v}(F)$, that is, either $E$ is one of the side clauses and its resolvent with $C$ was obtained by some other resolution or was already present, or $E$ does not contain $v$, and thus $E$ must be a resolvent. In any case we get a contradiction with one of \ref{lem:singDpMU3b} or \ref{lem:singDpMU3c}. \qed
\end{proof}

\begin{corollary}\label{cor:sdppresmu}
  If $F \in \Musat$ and $v$ is a singular variable of $F$, then also $\dpi{v}(F) \in \Musat$, where $\delta(\dpi{v}(F)) = \delta(F)$. So the classes $\Musat_{\delta=k}$ for $k \in \NN$ are stable under singular DP-reduction.
\end{corollary}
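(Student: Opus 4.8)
The plan is to derive Corollary~\ref{cor:sdppresmu} directly from the equivalence of Parts~\ref{lem:singDpMU1} and~\ref{lem:singDpMU2} in Lemma~\ref{lem:singDpMU}, which is exactly the heavy lifting we have already done. Since $F \in \Musat$, variable $v$ is singular for $F$, and Lemma~\ref{lem:singDpMU} applies with its Part~\ref{lem:singDpMU1} satisfied. Hence Part~\ref{lem:singDpMU2} holds as well, which says precisely that $\dpi{v}(F)$ is minimally unsatisfiable and $\delta(\dpi{v}(F)) = \delta(F)$. This gives both claims of the corollary in one stroke.

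For the final sentence about the classes $\Musat_{\delta=k}$, I would argue as follows. Fix $k \in \NN$ and take any $F \in \Musat_{\delta=k}$, i.e.\ $F \in \Musat$ with $\delta(F) = k$, and let $v$ be any singular variable of $F$. By the first part of the corollary, $\dpi{v}(F) \in \Musat$ and $\delta(\dpi{v}(F)) = \delta(F) = k$, so $\dpi{v}(F) \in \Musat_{\delta=k}$. Thus applying one step of singular DP-reduction to an element of $\Musat_{\delta=k}$ keeps us inside $\Musat_{\delta=k}$; by induction the same holds for any finite number of steps, so the class is stable under singular DP-reduction (and in particular $\sdp(F) \sse \Musat_{\delta=k}$ whenever $F \in \Musat_{\delta=k}$).

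There is essentially no obstacle here: the corollary is a repackaging of Lemma~\ref{lem:singDpMU}, the only mild point being to note that when $v$ is singular for $F \in \Musat$ the hypotheses of the lemma (a clause-set and a singular variable for it) are literally met, so no extra verification is needed. I would therefore keep the proof to a couple of sentences, citing Lemma~\ref{lem:singDpMU} for the single-step statement and adding the trivial induction for the stability claim.

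\begin{proof}
Since $F \in \Musat$ and $v$ is singular for $F$, Lemma~\ref{lem:singDpMU} applies, and Part~\ref{lem:singDpMU1} holds; hence Part~\ref{lem:singDpMU2} holds, that is, $\dpi{v}(F) \in \Musat$ with $\delta(\dpi{v}(F)) = \delta(F)$. For the stability of $\Musat_{\delta=k}$: if $F \in \Musat_{\delta=k}$ and $v$ is singular for $F$, then by the above $\dpi{v}(F) \in \Musat$ and $\delta(\dpi{v}(F)) = k$, so $\dpi{v}(F) \in \Musat_{\delta=k}$; iterating over any finite sequence of singular DP-reductions yields the claim. \qed
\end{proof}
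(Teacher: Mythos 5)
Your proof is correct and matches the paper's intent exactly: the corollary is stated without proof there, precisely because it is the immediate consequence of the equivalence of Parts~\ref{lem:singDpMU1} and~\ref{lem:singDpMU2} of Lemma~\ref{lem:singDpMU} that you invoke, plus the trivial induction for iterated reduction.
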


\begin{corollary}\label{cor:sdppartsatur}
  Consider $F \in \Musat$ and a singular variable $v$ with singular literal $x$, with main clause $C$ and side clauses $D_1,\dots,D_m$. Then adding $C \sm \set{x}$ to $D_i$ for all $i \in \tb 1m$ is a partial saturation of $F$ (recall Definition \ref{def:saturation}).
\end{corollary}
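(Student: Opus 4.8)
The plan is to verify the two conditions in the definition of partial saturation (Definition \ref{def:saturation}) for the clause-set $F'$ obtained from $F$ by adding all literals of $C \sm \set{x}$ to each side clause $D_i$. Write $C' := C \sm \set{x}$; note $C' \neq \es$ in general, and by Lemma \ref{lem:singDpMU}\ref{lem:singDpMU3}\ref{lem:singDpMU3a} each $D_i$ clashes with $C$ only in $v$, so $C' \cap \ol{D_i} = \es$ and $D_i \cup C'$ is indeed clash-free, hence a genuine clause; moreover $\var(C') \sse \var(F) \sm \set{v} = \var(F')$, so the variable set is preserved. The obvious bijection $\alpha : F \to F'$ fixes every clause not among the $D_i$ and sends $D_i \mapsto D_i \cup C'$; then $D \sse \alpha(D)$ for every $D \in F$, so the bijection condition of Definition \ref{def:saturation} holds. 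What remains is the nontrivial point: that $F' \in \Musat$.

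For this I would first observe that $F'$ is unsatisfiable, since $F \sse^{\text{sub}} F'$ in the sense that every clause of $F'$ contains a clause of $F$ (weakening clauses preserves unsatisfiability). It remains to show minimal unsatisfiability, and the key idea is to relate $F'$ back to $F$ via DP-reduction on $v$. The crucial claim is that $\dpi{v}(F') = \dpi{v}(F)$: indeed, the clauses of $F$ (and $F'$) not containing $v$ are untouched, the main clause $C$ is the unique clause with $x$ in both $F$ and $F'$, and the resolvent $C \res (D_i \cup C') = (C \cup D_i \cup C') \sm \set{v,\ol{v}} = (C \cup D_i) \sm \set{v,\ol{v}} = C \res D_i$ because $C' = C \sm \set{x} \sse C$ contributes nothing new. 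Hence $v$ is still singular for $F'$ (its literal-degrees are unchanged: $\ldeg_{F'}(x) = 1$ and $\ldeg_{F'}(\ol{x}) = m$), and $\dpi{v}(F') = \dpi{v}(F)$.

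Now I apply Lemma \ref{lem:singDpMU} in the direction \ref{lem:singDpMU3} $\Rightarrow$ \ref{lem:singDpMU1} to $F'$: we already know $\dpi{v}(F') = \dpi{v}(F) \in \Musat$ (by Corollary \ref{cor:sdppresmu}, since $F \in \Musat$ and $v$ is singular for $F$), so it suffices to check conditions \ref{lem:singDpMU3a}, \ref{lem:singDpMU3b}, \ref{lem:singDpMU3c} for $F'$ with main clause $C$ and side clauses $D_i \cup C'$. Condition \ref{lem:singDpMU3a}: $D_i \cup C'$ clashes with $C$ only in $v$, since $C' \sse C$ adds no new clash and the original $D_i$ clashed only in $v$ by the same condition for $F$. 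Conditions \ref{lem:singDpMU3b} and \ref{lem:singDpMU3c}: the resolvents $C \res (D_i \cup C') = C \res D_i$ are exactly the old resolvents, and the non-$v$-clauses of $F'$ are exactly those of $F$, so these conditions for $F'$ are literally the same statements as for $F$, which hold since $F \in \Musat$ via Lemma \ref{lem:singDpMU}\ref{lem:singDpMU1} $\Rightarrow$ \ref{lem:singDpMU3}. Therefore $F' \in \Musat$, completing the proof. The only mild subtlety — the ``main obstacle'' — is bookkeeping the degenerate case $m = 1$ and ensuring the bijection $\alpha$ is well-defined when $C' = \es$ (then $F' = F$ and there is nothing to prove) or when some $D_i \cup C'$ might a priori coincide with another clause of $F$; the latter cannot happen because $\dpi{v}$ being unchanged already encodes that no collapse among side-clause resolvents or with non-$v$-clauses occurs, and a collision $D_i \cup C' = D_j$ with $i \neq j$ would force $\ldeg_{F'}(\ol{x}) < m$, contradicting that $\dpi{v}(F') = \dpi{v}(F)$ has the full deficiency $\delta(F)$.
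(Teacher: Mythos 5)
Your proof is correct and follows essentially the same route as the paper's: establish $\dpi{v}(F') = \dpi{v}(F) \in \Musat$ via the identity $C \res (D_i \cup C') = C \res D_i$, then verify the three conditions of Lemma \ref{lem:singDpMU}, Part \ref{lem:singDpMU3}, for $F'$ and conclude $F' \in \Musat$ by the implication \ref{lem:singDpMU3} $\Rightarrow$ \ref{lem:singDpMU1}. Your extra bookkeeping (well-definedness of the bijection, clash-freeness, the $C' = \es$ case) only makes explicit what the paper leaves implicit.
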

\begin{proof}
  Let $F'$ be obtained from $F$ by replacing the clauses $D_i$ by the clauses $D_i \cup (C \sm \set{x})$ for each $i \in \tb 1m$ (note that by Lemma \ref{lem:singDpMU}, Part \ref{lem:singDpMU3a}, the literal-sets $D_i \cup (C \sm \set{x})$ are clash-free and thus indeed clauses). By Lemma \ref{lem:singDpMU} we know that $\dpi{v}(F) \in \Musat$ holds. Now $\dpi{v}(F') = \dpi{v}(F)$, and so in order to show that $F' \in \Musat$, we need to show that the three conditions of Part \ref{lem:singDpMU3} of Lemma \ref{lem:singDpMU} hold. Condition \ref{lem:singDpMU3a} holds by definition. And conditions \ref{lem:singDpMU3b}, \ref{lem:singDpMU3c} follow from the fact (which was already used for $\dpi{v}(F') = \dpi{v}(F)$), that the changed clauses $D_i$ yield the same resolvents with clause $C$. \qed
\end{proof}

Lemma \ref{lem:singDpMU} can be strengthened for saturated $F$ by requiring special conditions for the occurrences of the singular variable.
\begin{lemma}\label{lem:singDpSMU}
  Consider a clause-set $F$ and a singular variable $v$ for $F$. For the singular literal $x$ for $v$ consider the main clause $C$ and the side clauses $D_1, \dots D_m \in F$. Let $C' := C \sm \set{x}$ and $D_i' := D_i \sm \set{\ol{x}}$. The following assertions are equivalent:
  \begin{enumerate}
  \item\label{lem:singDpSMU1} $F$ is saturated minimally unsatisfiable.
  \item\label{lem:singDpSMU2} The following three conditions hold:
    \begin{enumerate}
    \item\label{lem:singDpSMU2a} $\dpi{v}(F)$ is saturated minimally unsatisfiable;
    \item\label{lem:singDpSMU2b} $C' = \bca_{i=1}^m D_i'$;
    \item\label{lem:singDpSMU2c} for every $E \in F$ with $v \notin \var(E)$ we have $C' \not\sse E$.
    \end{enumerate}
    Note that conditions \ref{lem:singDpSMU2b}, \ref{lem:singDpSMU2c} together imply the condition that for $E \in F$ we have $C' \sse E$ if and only if $v \in \var(E)$ holds.
  \end{enumerate}
\end{lemma}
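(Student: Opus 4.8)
The plan is to prove the equivalence of the two parts of Lemma \ref{lem:singDpSMU} by building on Lemma \ref{lem:singDpMU} and the characterisation of saturation. Throughout I write $x$ for the singular literal of $v$, $C$ for the main clause, $D_1,\dots,D_m$ for the side clauses, $C' := C \sm \set{x}$, $D_i' := D_i \sm \set{\ol x}$; note that the resolvents are $C \res D_i = C' \cup D_i'$, and $\dpi{v}(F)$ consists of all clauses $E \in F$ with $v \notin \var(E)$ together with the clauses $C' \cup D_i'$.

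For the direction $\ref{lem:singDpSMU1} \Ra \ref{lem:singDpSMU2}$, assume $F \in \Smusat$. Since $\Smusat \sse \Musat$, Lemma \ref{lem:singDpMU} already gives $\dpi{v}(F) \in \Musat$ with the same deficiency, and in particular conditions \ref{lem:singDpMU3a}--\ref{lem:singDpMU3c} hold, so the literal-sets $C' \cup D_i'$ are genuine clauses and distinct from each other and from the untouched clauses of $F$. To get \ref{lem:singDpSMU2b}, first observe $C' \spe \bca_{i=1}^m D_i'$ is false in general but the inclusion $D_i' \spe C'$... — rather, I argue: if some literal $y \in C'$ were missing from some $D_i'$, i.e.\ $y \notin D_i$, then $\saturate(F,D_i,y)$ is defined (as $\var(y) \in \var(F) \sm \var(D_i)$ by condition \ref{lem:singDpMU3a}, since $D_i$ clashes $C$ only in $v$), and by Corollary \ref{cor:sdppartsatur} adding all of $C'$ to each $D_i$ is a partial saturation, hence in particular $F$ is not maximal and can still be weakened — contradicting $F \in \Smusat$. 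This shows $C' \sse D_i'$ for every $i$, hence $C' \sse \bca_i D_i'$; the reverse inclusion $\bca_i D_i' \sse C'$ follows because if $y \in \bca_i D_i' \sm C'$ then $y$ occurs in every resolvent $C' \cup D_i'$ but (since adding it back would reconstruct a clause of $F$) one checks $y$ can be removed from $C$ or the $D_i$ without losing unsatisfiability — more cleanly, I use that $\dpi{v}(F)$ must be \emph{saturated}: adding $\ol y$ (for $y \in \bca_i D_i'$, $y \notin C'$) to the resolvent-free part, or observing that such a $y$ would already be implied, contradicts saturation of $\dpi{v}(F)$. For \ref{lem:singDpSMU2c}, if $C' \sse E$ for some $E \in F$ with $v \notin \var(E)$, then $E$ already subsumes what resolvents contribute and one derives that $F \sm \set{C}$ is unsatisfiable or that a clause can be weakened, again contradicting minimality/saturation. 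Finally condition \ref{lem:singDpSMU2a}, that $\dpi{v}(F)$ is saturated, should be obtained by lifting any weakening of a clause in $\dpi{v}(F)$ back to a weakening of $F$ using Corollary \ref{cor:sdppartsatur} and \ref{lem:singDpSMU2b}.

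For the converse $\ref{lem:singDpSMU2} \Ra \ref{lem:singDpSMU1}$, assume \ref{lem:singDpSMU2a}--\ref{lem:singDpSMU2c}. Since $\dpi{v}(F) \in \Smusat \sse \Musat$, Lemma \ref{lem:singDpMU} (the implication $\ref{lem:singDpMU3} \Ra \ref{lem:singDpMU1}$) gives $F \in \Musat$ once we verify conditions \ref{lem:singDpMU3a}--\ref{lem:singDpMU3c}: condition \ref{lem:singDpMU3a} follows since $C' = \bca_i D_i'$ forces each $D_i'$ to be disjoint from $\ol{C'}$, so $C$ and $D_i$ clash only in $v$; conditions \ref{lem:singDpMU3b}, \ref{lem:singDpMU3c} follow because $C' \cup D_i'$ are distinct clauses of $\dpi{v}(F)$ (distinctness of the $D_i$, and \ref{lem:singDpSMU2c} for disjointness from the untouched clauses). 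It then remains to show $F$ is saturated, i.e.\ that no $\saturate(F,B,z)$ stays unsatisfiable. Case analysis on $B$: if $B$ is an untouched clause or a side clause $D_i$, a weakening of $F$ projects under $\dpi v$ to a weakening of $\dpi v(F)$ (using \ref{lem:singDpSMU2b} so that $\dpi v$ of the weakened clause-set still equals the corresponding weakening of $\dpi v(F)$), contradicting \ref{lem:singDpSMU2a}; if $B = C$ and $z$ is added, then $z$ already appears in every $D_i$ by \ref{lem:singDpSMU2b}, so $C \cup \set z$ resolves to the same resolvents and one reduces to the saturatedness of $\dpi v(F)$ together with \ref{lem:singDpSMU2c}. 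I would also invoke the remark following the statement — that \ref{lem:singDpSMU2b} and \ref{lem:singDpSMU2c} together say $C' \sse E \Llra v \in \var(E)$ — to streamline this case analysis.

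The main obstacle I anticipate is the two subtle inclusions inside condition \ref{lem:singDpSMU2b} and, in the converse direction, correctly tracking how a saturation step of $F$ descends to a saturation step (or trivial change) of $\dpi{v}(F)$: one must be careful that adding a literal to a side clause $D_i$ can change $\dpi{v}(F)$ exactly by enlarging the resolvent $C' \cup D_i'$, and that this enlargement is again unsatisfiable iff the original step was — this is where \ref{lem:singDpSMU2b} is essential, as it guarantees the resolvent structure is "rigid". The bookkeeping between weakenings of $F$ and of $\dpi{v}(F)$, rather than any deep idea, is the delicate part.
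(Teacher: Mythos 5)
Your overall strategy---tracking how $\saturate$ interacts with $\dpi{v}$---is the same as the paper's, and several pieces are sound (the inclusion $C' \sse D_i'$ via Corollary \ref{cor:sdppartsatur}, and the verification of conditions \ref{lem:singDpMU3a}--\ref{lem:singDpMU3c} in the converse direction, where your use of \ref{lem:singDpSMU2c} for condition \ref{lem:singDpMU3c} is if anything cleaner than the paper's). But there are genuine gaps and one outright error at exactly the points you yourself flag as delicate.

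First, the inclusion $\bca_{i=1}^m D_i' \sse C'$. Your argument is circular or ill-formed: you propose to use saturatedness of $\dpi{v}(F)$, which is condition \ref{lem:singDpSMU2a} and has not been established at that point in the forward direction, and your alternative (``$y$ can be removed from $C$ or the $D_i$'') is not a saturation step at all---saturation only ever \emph{adds} literals. The correct argument is the mirror image of the one you gave for the other inclusion: if $y \in \bca_i D_i'$ but $y \notin C'$, then $\dpi{v}(\saturate(F,C,y)) = \dpi{v}(F)$, since every resolvent $(C \cup \set{y}) \res D_i = C' \cup \set{y} \cup D_i' = C' \cup D_i'$ is unchanged; hence $\saturate(F,C,y) \in \Usat$, contradicting $F \in \Smusat$. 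Similarly for \ref{lem:singDpSMU2c} your derivation is too vague (``$F \sm \set{C}$ is unsatisfiable'' does not follow); the specific witness is $\saturate(F,E,\ol{v})$, whose image under $\dpi{v}$ again equals $\dpi{v}(F)$ because $(E \cup \set{\ol{v}}) \res C = E \cup C' = E$.

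Second, in the converse direction the case $B = C$ is handled with a false claim: you assert that $z$ ``already appears in every $D_i$ by \ref{lem:singDpSMU2b}'', but \ref{lem:singDpSMU2b} forces the \emph{opposite}. Since $z$ can be added to $C$ we have $\var(z) \notin \var(C)$, so $z \notin C' = \bca_i D_i'$, hence $z$ is missing from at least one $D_i'$; the enlarged resolvent $D_i' \cup \set{z}$ is then a proper weakening of a clause of $\dpi{v}(F)$ keeping unsatisfiability, which is what contradicts \ref{lem:singDpSMU2a}. Finally, for an untouched clause $B$ your blanket statement that a weakening of $F$ ``projects to a weakening of $\dpi{v}(F)$'' breaks down precisely when $z \in \set{v,\ol{v}}$: for $z = \ol{v}$ one must instead add all of $C'$ to $B$ in $\dpi{v}(F)$ (a genuine weakening by \ref{lem:singDpSMU2c}), and for $z = v$ one must rule the case out entirely, by observing that the new resolvents $(B \cup \set{v}) \res D_i = B \cup D_i'$ are subsumed by the existing clauses $D_i' = C \res D_i$, so the weakened clause-set could not be minimally unsatisfiable. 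Neither of these subcases is covered by your sketch.
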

\begin{proof}
First assume that $F$ is saturated minimally unsatisfiable. If there would be $E \in F$ with $v \notin \var(E)$ and $C' \sse E$, then for $F' := \saturate(F,E,\ol{v})$ we had $\dpi{v}(F') = \dpi{v}(F)$, and thus $F'$ would be unsatisfiable, contradicting saturatedness of $F$. We have $C' \sse \bca_{i=1}^m D_i'$, since if there were a literal $y \in C'$ and $y \notin D_i'$ for some $i$, then $\dpi{v}(\saturate(F,D_i,y)) = \dpi{v}(F)$. And we have $C' \supseteq \bca_{i=1}^m D_i'$, since if there were a literal $y$ contained in all $D_i'$, but not in $C'$, then $\dpi{v}(\saturate(F,C,y)) = \dpi{v}(F)$.

By Lemma \ref{lem:singDpMU} we know that $\dpi{v}(F)$ is minimally unsatisfiable, and that all resolutions are carried out, with no contraction due to coinciding resolvents or coincidence of a resolvent with an existing clause. Assume that $\dpi{v}(F)$ is not saturated, that is, there is a clause $E$ and a literal $y$ with $G := \saturate(\dpi{v}(F),E,y) \in \Usat$. If $E \in F$ then $\dpi{v}(\saturate(F,E,y)) = G \in \Usat$, and so there is some $1 \le i \le m$ with $E = C \res D_i$. But now $\dpi{v}(\saturate(F,D_i,y)) = G$, yielding a contradiction.

Now we consider the opposite direction, that is, we assume that $C' = \bca_{i=1}^m D_i'$, that $\dpi{v}(F)$ is saturated minimally unsatisfiable, and that $C'$ is contained in some clause of $F$ iff this clause contains the variable $v$. First we establish the three conditions from Lemma \ref{lem:singDpMU}, Part \ref{lem:singDpMU3}. Since clauses are clash-free, $C'$ has no conflict with any $D_i'$, and thus the clash-freeness-condition is fulfilled. If we had $C \res D_i = C \res D_j$ for $i \not= j$, then w.l.o.g.\ there must be a literal $y \in C'$ with $y \in D_i'$ and $y \notin D_j'$, which is impossible since $C'$ contains only literals which are common to all side clauses. Finally, since all resolvents $C \res D_i$ subsume the parent clause $D_i$, by the minimal unsatisfiability of $F$ also Condition \ref{lem:singDpMU3c} is fulfilled. So we have established that $F$ is minimally unsatisfiable.

Assume that $F$ is not saturated, that is, there exists a clause $E \in F$ and a literal $y$ with $G := \saturate(F,E,y) \in \Musat$. Let $F' := \dpi{v}(F)$ and $G' := \dpi{v}(G)$ (note $G' \in \Usat$, and that $F' \in \Smusat$ by assumption). Our strategy is to derive a contradiction by showing that literal occurrences can be added to $F'$ in such a way that $G'$ is obtained, contradicting that $F'$ is saturated.

First consider $E \notin \set{C} \cup \set{D_i}_{1 \le i \le m}$. If $\var(y) \not= v$, then $G' = \saturate(F',E,y)$. If $y = \ol{v}$, then $G' = \saturate(F', \set{E}, C')$ (using Condition \ref{lem:singDpSMU2c}). It remains the case $y = v$, but this case is impossible since then for all $1 \le i \le m$ we have $C \res D_i = D_i' \sse (E \cup \set{v}) \res D_i = E \cup D_i'$, and thus $\dpi{v}(G)$ would be satisfiability equivalent to $\dpi{v}(G \sm \set{E \cup \set{v}})$, whence $G$ would not be minimally unsatisfiable.

So we have $E \in \set{C} \cup \set{D_i}_{1 \le i \le m}$, i.e., $v \in \var(C)$. If $E = C$, then $G' = \saturate(F', \set{D_i'}_{1 \le i \le m},y)$, using that $C'$ is the intersection all the $D_i'$, and thus at least one $D_i'$ does not contain $y$. And if $E = C_i$ for some $i$, then $G' = \saturate(F', D_i', y)$. \qed
\end{proof}

\begin{corollary}\label{cor:sdppressmu}
  The class $\Smusat$ is stable under singular DP-reduction.
\end{corollary}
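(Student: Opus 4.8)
The plan is to read this off directly from Lemma~\ref{lem:singDpSMU}. Unfolding the definitions, the claim that $\Smusat$ is stable under singular DP-reduction means: whenever $F \in \Smusat$ and $F \tsdp F'$, then $F' \in \Smusat$. By the definition of $\tsdp$ such an $F'$ has the form $F' = \dpi{v}(F)$ for some variable $v$ that is singular for $F$, so the task reduces to showing: if $F \in \Smusat$ and $v$ is singular for $F$, then $\dpi{v}(F) \in \Smusat$.

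For this I would simply invoke the implication \ref{lem:singDpSMU1}~$\Rightarrow$~\ref{lem:singDpSMU2} of Lemma~\ref{lem:singDpSMU}: since $F$ is saturated minimally unsatisfiable, all three conditions of Part~\ref{lem:singDpSMU2} hold, and in particular Condition~\ref{lem:singDpSMU2a} states precisely that $\dpi{v}(F)$ is saturated minimally unsatisfiable. (Conditions~\ref{lem:singDpSMU2b} and~\ref{lem:singDpSMU2c} are not needed for the corollary itself; they record the extra structural information about how $C'$ sits among the side clauses.) Closure under arbitrarily many singular DP-steps, and hence $\sdp(F) \sse \Smusat$ for $F \in \Smusat$, then follows by a trivial induction, using Corollary~\ref{cor:sdppresmu} to stay inside $\Musat$ at every step and the fact that each step removes a variable to guarantee termination.

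There is essentially no obstacle here: all the content sits in Lemma~\ref{lem:singDpSMU}, which we may assume, so the corollary is a one-line deduction. If one instead wanted a self-contained argument, the substantive part would be exactly the half of Lemma~\ref{lem:singDpSMU} asserting that $\dpi{v}(F)$ admits no proper weakening: one takes a hypothetical weakening $\saturate(\dpi{v}(F),E,y) \in \Usat$, lifts the clause $E$ to an appropriate clause of $F$ (the main clause $C$, a side clause $D_i$, or $E$ itself if $v \notin \var(E)$), and produces a corresponding weakening of $F$ that is still unsatisfiable, contradicting $F \in \Smusat$ --- the case distinction on whether $\var(y) = v$ and on whether $E$ involves $v$ is the only mildly fiddly point. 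Given the lemma, though, nothing of this needs to be redone.
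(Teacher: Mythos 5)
Your proposal is correct and matches the paper's intent exactly: the corollary is stated without proof immediately after Lemma~\ref{lem:singDpSMU}, precisely because the single-step case is the implication \ref{lem:singDpSMU1}~$\Rightarrow$~\ref{lem:singDpSMU2a}, and iteration is trivial. Nothing is missing.
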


\subsection{Unit-clauses}
\label{sec:unitp}

In this subsection we explore the observation that unit-clause propagation for minimally unsatisfiable clause-sets is a special case of singular DP-reduction. First we show that unit-clauses in minimally unsatisfiable clause-sets can be considered as special cases of singular variables in the following sense:
\begin{lemma}\label{lem:fullunit}
  Consider $F \in \Musat$.
  \begin{enumerate}
  \item\label{lem:fullunit1} If $v$ is singular for $F$ and occurs in every clause of $F$ (positively or negatively), then we have $\set{v} \in F$ or $\set{\ol{v}} \in F$.
  \item\label{lem:fullunit2} If $\set{x} \in F$ for some literal $x$, then $v := \var(x)$ is singular in $F$ (with $\ldeg_F(x) = 1$). If here $F$ is saturated, then $v$ must occur in every clause of $F$.
  \end{enumerate}
\end{lemma}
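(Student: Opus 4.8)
The plan is to handle the two parts by short direct arguments. For Part~\ref{lem:fullunit1}, I would first assume $\ldeg_F(v) = 1$ (the case $\ldeg_F(\ol v) = 1$ being symmetric), so that the main clause $C$ is the unique clause of $F$ containing $v$. Every other clause meets the variable $v$ by hypothesis and, being clash-free and $\ne C$, must then contain $\ol v$. Hence applying the partial assignment $\pao v0$ satisfies and deletes all the side clauses and shrinks $C$ to $C \sm \set v$, so $\pao v0 * F = \set{C \sm \set v}$; since restriction preserves unsatisfiability and a one-clause clause-set is unsatisfiable only if that clause is empty, we get $C = \set v$, i.e.\ $\set v \in F$ (and $\set{\ol v} \in F$ in the symmetric case). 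Note this only uses $F \in \Usat$. An equivalent route uses $\dpi v$: by Corollary~\ref{cor:sdppresmu}, $\dpi v(F) \in \Musat$, and since every clause of $F$ meets $v$, each clause of $\dpi v(F)$ has the form $(C \sm \set v) \cup (D \sm \set{\ol v})$ for a side clause $D$; were $C \ne \set v$, any literal of $C \sm \set v$ would lie in every clause of $\dpi v(F)$, making it satisfiable --- a contradiction.

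For Part~\ref{lem:fullunit2}, I would first show $\ldeg_F(x) = 1$: otherwise some clause $C \ne \set x$ contains $x$, so $\set x \subsetneq C$, and then $C$ is subsumed by $\set x$, whence $F \sm \set C$ is still unsatisfiable, contradicting $F \in \Musat$. Since MU clause-sets have no pure literals (restricting a pure variable to its only polarity would leave a proper unsatisfiable subset), $\ldeg_F(\ol x) \ge 1$, so $\min(\ldeg_F(v), \ldeg_F(\ol v)) = 1$ and $v$ is singular with singular literal $x$. For the saturated case I would argue by contradiction: if some $E \in F$ omits the variable $v$, set $F' := \saturate(F, E, \ol x)$ (legitimate since $v \in \var(F) \sm \var(E)$); any model of $F'$ satisfies $\set x \in F'$ (note $\set x \ne E$), hence falsifies $\ol x$, hence must satisfy $E$ in order to satisfy $E \cup \set{\ol x}$, and thus models $F$. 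So $F'$ is unsatisfiable, contradicting that $F$ is saturated; hence every clause of $F$ contains $v$. Alternatively this is Lemma~\ref{lem:singDpSMU}, Condition~\ref{lem:singDpSMU2c}, applied with main clause $C = \set x$ (so $C' = \es \sse E$ for every $E$).

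I do not expect a genuine obstacle: everything is elementary and each claim follows from one well-chosen move. The only points needing care are picking the correct polarity for the restriction in Part~\ref{lem:fullunit1} (it must be $\pao v0$, which kills the side clauses and exposes $C$), the right literal to add in the saturated case (it must be $\ol x$, so that the unit clause $\set x$ still forces $E$), and the routine appeal to the no-pure-literals property of MU clause-sets.
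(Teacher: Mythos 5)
Your proposal is correct. Part~\ref{lem:fullunit2} follows the paper exactly: the subsumption argument for $\ldeg_F(x)=1$, the remark that MU clause-sets have no pure literals, and (for the saturated case) Lemma~\ref{lem:singDpSMU}, Condition~\ref{lem:singDpSMU2c} with $C'=\es$ --- your direct $\saturate(F,E,\ol{x})$ argument is precisely what that condition encapsulates. For Part~\ref{lem:fullunit1} you take a slightly different final step: the paper observes that, since each side clause clashes with the main clause $C$ only in $v$ (Lemma~\ref{lem:singDpMU}, Part~\ref{lem:singDpMU3a}), every literal of $C\sm\set{v}$ is pure in $F$, which is impossible for $F\in\Musat$; you instead apply the partial assignment $\pao v0$, which eliminates all side clauses and leaves the single clause $C\sm\set{v}$, forcing $C=\set{v}$ by unsatisfiability. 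Your variant is marginally more self-contained (it needs only $F\in\Usat$, as you note, and avoids the appeal to Lemma~\ref{lem:singDpMU}), while the paper's purity argument stays within the structural vocabulary it has already set up; both are sound and equally short.
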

\begin{proof}
For Part \ref{lem:fullunit1} consider a main clause $C$ for $v$, and assume w.l.o.g.\ $v \in C$. Since every other clause $D \in F \sm \set{C}$ contains $\ol{v}$, while $C$ has exactly one clash with $D$ by Lemma \ref{lem:singDpMU}, Part \ref{lem:singDpMU3a}, literals in $C \sm \set{v}$ are pure in $F$, and thus there can not be any (that is, $C = \set{v}$ holds), since $F$ is minimally unsatisfiable. For Part \ref{lem:fullunit2} we first observe that every other clause of $F$ containing $x$ would be subsumed by $\set{x}$, which is impossible since $F$ is minimally unsatisfiable. If $F$ is saturated, then every clause $D \in F \sm \set{\set{x}}$ must contain $\ol{x}$ by Lemma \ref{lem:singDpSMU}, Part \ref{lem:singDpSMU2c}. \qed
\end{proof}
So nonsingular minimally unsatisfiable clause-sets do not contain unit-clauses.
\begin{example}\label{exp:upsing}
  Some examples illustrating the relation between unit-clauses and singular variables for $\Musat$:
  \begin{enumerate}
  \item From $\Dt{2} = \set{\set{v_1,v_2},\set{\ol{v_1},\ol{v_2}},\set{\ol{v_1},v_2},\set{\ol{v_2},v_1}} \in \Uclashnsi{\delta=2}$ (recall Example \ref{exp:F2F3}) we obtain, using ``inverse unit-clause elimination'':
    \begin{enumerate}
    \item $\set{\set{x},\set{v_1,v_2,\ol{x}},\set{\ol{v_1},\ol{v_2},\ol{x}},\set{\ol{v_1},v_2,\ol{x}},\set{\ol{v_2},v_1,\ol{x}}} \in \Uclashi{\delta=2}$
    \item $\set{\set{x},\set{v_1,v_2,\ol{x}},\set{\ol{v_1},\ol{v_2},\ol{x}},\set{\ol{v_1},v_2,\ol{x}},\set{\ol{v_2},v_1}} \in \Musati{\delta=2} \sm \Smusati{\delta=2}$.
    \end{enumerate}
  \item $\set{\set{a,b},\set{a,\ol{b}},\set{\ol{a},c},\set{\ol{a},\ol{c}}} \in \Uclashi{\delta=1}$ contains the two singular variables $b,c$, while not containing a unit-clause.
  \end{enumerate}
\end{example}
If $F \in \Musat_{\delta=k}$ contains a unit-clause $\set{x} \in F$, then we can apply singular DP-reduction for the underlying variable of $x$, and the result $\dpi{\var(x)}(F) \in \Musati{\delta=k}$ is the same as the result of the usual unit-clause elimination for $\set{x}$ (setting $x$ to true, and simplifying accordingly). We now consider the case where repeated unit-clause elimination, i.e., unit-clause propagation, yields the empty clause.

In \cite{DDK98} it has been shown that for minimally unsatisfiable clause-sets $F \in \Musat$ the following properties are equivalent:
\begin{enumerate}
\item $F$ can be reduced by sDP to $\set{\bot}$, i.e., $F \tsdps \set{\bot}$.
\item All sDP-reductions of $F$ end with $\set{\bot}$, i.e., $\sdp(F) = \set{\bot}$.
\item $\delta(F) = 1$.
\end{enumerate}
Let $\rk_1: \Cls \ra \Cls$ denote unit-clause propagation, that is, $\rk_1(F) := \set{\bot}$ if $\bot \in F$, $\rk_1(F) := F$ if all clauses of $F$ have length at least two, and otherwise $\rk_1(F) := \rk_1(\pao x1 * F)$ for $\set{x} \in F$, where $\pao x1 * F$ means setting literal $x$ to true, i.e., removing clauses containing $x$, and removing literal $\ol{x}$ from the remaining clauses (see \cite{Ku99b,Ku00g} for a proof of confluence, i.e., independence of the choice of the unit-clauses $\set{x}$, and for generalisations). So, if for $F \in \Musat$ we have $\rk_1(F) = \set{\bot}$, then we know $F \in \Musati{\delta=1}$. Now it is well-known (first shown in \cite{He74}) that for $F \in \Cls$ we have $\rk_1(F) = \set{\bot}$ if there is $F' \sse F$ with $F' \in \Musat \cap \Rho$, where $\Rho$ is the class of renamable (or ``hidden'') Horn clause-sets, that is, we have $F' \in \Rho$ iff there is a Horn clause-set $F'' \in \Ho$ with $F' \cong F''$, where $\Ho := \set{F \in \Cls \mb \fa\, C \in F : \abs{C \cap \Va} \le 1}$ (each clause contains at most one positive literal). Altogether follows the following well-known characterisation:

\begin{lemma}\label{lem:characupecmu}
  For $F \in \Musat$ holds $\rk_1(F) = \set{\bot}$ iff $F \in \Musati{\delta=1} \cap \Rho$.
\end{lemma}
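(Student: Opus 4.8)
The plan is to read off the equivalence from the material assembled immediately before the statement, using two ingredients. The first is the passage, for $F \in \Musat$, from unit-clause propagation to singular DP-reduction: by Lemma~\ref{lem:fullunit}, Part~\ref{lem:fullunit2}, a unit-clause $\set{x} \in F$ makes $\var(x)$ singular, the corresponding $\rk_1$-step coincides with the singular DP-reduction $F \leadsto \dpi{\var(x)}(F)$, and the result again lies in $\Musat$ by Corollary~\ref{cor:sdppresmu}; together with the \cite{DDK98}-equivalence, which says that for $F \in \Musat$ one has $F \tsdps \set{\bot}$ iff $\delta(F) = 1$, this lets us translate ``$\rk_1(F) = \set{\bot}$'' into a statement about the deficiency. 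The second ingredient is the classical characterisation of \cite{He74}, used here in its full form: for every $F \in \Cls$, $\rk_1(F) = \set{\bot}$ holds if and only if there is $F' \sse F$ with $F' \in \Musat \cap \Rho$ (the ``if''-part is the one quoted above; the ``only if''-part is equally classical).

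For the direction $\La$: if $F \in \Musati{\delta=1} \cap \Rho \sse \Musat \cap \Rho$, then $F$ itself is an $F' \sse F$ with $F' \in \Musat \cap \Rho$, so $\rk_1(F) = \set{\bot}$ by \cite{He74} (the deficiency $1$ is not even needed here). For the direction $\Ra$: let $F \in \Musat$ with $\rk_1(F) = \set{\bot}$. Running $\rk_1$ on $F$ produces, by the first ingredient, a chain $F = F_0 \tsdp F_1 \tsdp \cdots$ of singular DP-reductions with every $F_i \in \Musat$; since $\rk_1(F) = \set{\bot}$, and a minimally unsatisfiable clause-set containing $\bot$ must equal $\set{\bot}$, the chain reaches $\set{\bot}$, i.e.\ $F \tsdps \set{\bot}$, whence $\delta(F) = 1$ by \cite{DDK98}. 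Moreover, by the \cite{He74}-characterisation there is $F' \sse F$ with $F' \in \Musat \cap \Rho$; as $F$ is minimally unsatisfiable it has no unsatisfiable proper sub-clause-set, so $F' = F$ and hence $F \in \Rho$. Thus $F \in \Musati{\delta=1} \cap \Rho$.

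The only step that is not routine is the ``only if''-half of the \cite{He74}-characterisation used for $F \in \Rho$ (unit-refutability forcing a renamable-Horn minimally unsatisfiable core); if one does not wish to take it for granted, I would argue directly from the chain $F = F_0 \tsdp \cdots \tsdp F_r = \set{\bot}$. Writing $\set{y_i} \in F_{i-1}$ for the unit-clause used at step $i$, the reduction variables $\var(y_1), \dots, \var(y_r)$ are pairwise distinct and exhaust $\var(F)$ (one variable disappears per step). Using minimal unsatisfiability of each intermediate $F_i$ to exclude subsumptions and clause-mergings along the chain, one checks that every clause of $F$ has the form $\set{y_i} \cup \set{\ol{y_j} : j \in J}$ for some $i$ and some $J \sse \tb{1}{i-1}$, with the two terminal clauses $\set{y_r}, \set{\ol{y_r}}$ of $F_{r-1}$ accounting for the last of the $c(F) = n(F)+1$ clauses. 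Flipping exactly those $\var(y_i)$ for which $y_i$ is a negative literal then makes every clause of $F$ have at most one positive literal, so $F \in \Rho$. I expect this clause-to-step matching — in particular, showing that no clause of $F$ ``dies early'' in the refutation — to be the most delicate point.
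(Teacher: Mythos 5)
Your proof is correct and takes exactly the route the paper intends: the lemma is stated there without a formal proof, as the combination of the \cite{DDK98} equivalence ($F \tsdps \set{\bot} \Leftrightarrow \delta(F)=1$ for $F \in \Musat$), the identification of unit-clause elimination on $\Musat$ with singular DP-reduction, and the Henschen--Wos characterisation. You rightly note that the forward direction requires the full ``iff'' form of \cite{He74} (the surrounding text only quotes the ``if'' half before asserting that the lemma ``altogether follows''), and your supplementary direct argument for that half, tracking each clause of $F$ to the step of the unit-refutation at which it is consumed, is sound.
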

Reconstruction of minimally unsatisfiable sub-clause-sets $F' \sse F \in \Cls$ in case of $\rk_1(F) = \set{\bot}$ is performed in \cite{MinLiManyaMohamedouPlanes2010Maxsat}, in the context of MAXSAT solving, and by Lemma \ref{lem:characupecmu} we have $F' \in \Musati{\delta=1} \cap \Rho$ for these $F'$. In \cite{MinLiManyaMohamedouPlanes2010Maxsat} also failed-literal elimination is discussed, i.e., the case $\rk_2(F) = \set{\bot}$ (see \cite{Ku99b,Ku00g}), where $\rk_2: \Cls \ra \Cls$ is defined as $\rk_2(F) := \rk_2(\pao x1 * F)$ for a literal $x$ with $\rk_1(\pao x0 * F) = \set{\bot}$, while otherwise $\rk_2(F) := F$.

\begin{example}\label{exp:r2sdp}
   The following examples show that $\rk_2$ and sDP-reduction are incomparable regarding derivation of a contradiction:
   \begin{enumerate}
   \item $\Dt{2} \in \Uclashnsi{\delta=2}$ (recall Example \ref{exp:F2F3}) has $\rk_2(\Dt{2}) = \set{\bot}$.
   \item $F \!:=\! \set{\!\set{a,b,c},\set{a,b,\ol{c}},\set{a,\ol{b},d},\set{a,\ol{b},\ol{d}},\set{\ol{a},e,f},\set{\ol{a},e,\ol{f}},\set{\ol{a},\ol{e},g},\set{\ol{a},\ol{e},\ol{g}}\!}$ fulfils $F \in \Uclashi{\delta=1}$, while $\rk_2(F) = F$ (all clauses of $F$ have length $3$).
   \end{enumerate}
\end{example}

\section{Confluence of singular DP-reduction}
\label{sec:confluence}

In this section we introduce the question of confluence of singular DP-reduction. In Subsection \ref{sec:questconfl} we define ``confluence'' and ``confluence modulo isomorphism'', and discuss basic examples. In Subsection \ref{sec:confluencesatmu} we obtain our first major result, namely confluence for $\Smusat$ (Theorem \ref{thm:confsat}).

\subsection{The question of confluence}
\label{sec:questconfl}

\begin{definition}\label{def:conflsdp}
  Let \bmm{\cflmusat} be the set of $F \in \Musat$ where singular DP-reduction is confluent, and let \bmm{\cflimusat} be the set of $F \in \Musat$ where singular DP-reduction is confluent modulo isomorphism:
  \begin{eqnarray*}
    \cflmusat & := & \set{F \in \Musat \mb \abs{\sdp(F)} = 1}\\
    \cflimusat & := & \set{F \in \Musat \mb \fa\, F',F'' \in \sdp(F) : F' \cong F''}.
  \end{eqnarray*}
\end{definition}
\begin{example}\label{exp:conflsdp} Examples illustrating $\cflmusat \subset \cflimusat \subset \Musat$:
  \begin{enumerate}
  \item In \cite{DDK98} it is shown that every $F \in \Musati{\delta=1}$ contains a 1-singular variable (see \cite{Ku99dKo,KullmannZhao2011Bounds} for further generalisations). Thus by Corollary \ref{cor:sdppresmu} we get that singular DP-reduction on $\Musati{\delta=1}$ must end in $\set{\set{\bot}}$, and we have $\Musatnsi{\delta=1} = \set{\set{\bot}}$. It follows $\Musati{\delta=1} \sse \cflmusat$.
  \item\label{exp:conflsdp1} We now show $\Musati{\delta=2} \not\sse \cflmusat$. Let $F \in \Musati{\delta=2}$ be obtained from $\Dt{2}$ (recall Example \ref{exp:F2F3}) by ``inverse singular DP-reduction'', adding a new singular variable $v$ and replacing the two clause $\set{v_1,v_2}, \set{\ol{v_2},v_1} \in \Dt{2}$ by the three clauses $\set{v,v_1}, \set{\ol{v},v_2}, \set{\ol{v},\ol{v_2}}$, obtaining $F$ (the other two clauses in $F$ are $\set{\ol{v_1},v_2}, \set{\ol{v_1},\ol{v_2}}$):
    \begin{displaymath}
      F = \setb{ \set{v,v_1}, \set{\ol{v},v_2}, \set{\ol{v},\ol{v_2}}, \set{\ol{v_1},v_2}, \set{\ol{v_1},\ol{v_2}} }.
    \end{displaymath}
Singular DP-reduction on $v$ yields $\Dt{2}$ (and thus by Lemma \ref{lem:singDpMU} we get indeed $F \in \Musati{\delta=2}$). The second singular variable of $F$ is $v_1$, and sDP-reduction on $v_1$ yields $F' := \set{\set{v,v_2},\set{v,\ol{v_2}},\set{\ol{v},v_2}, \set{\ol{v},\ol{v_2}}}$, where $F' \not= \Dt{2}$. Note however that we have $F' \cong \Dt{2}$ (since $F'$ consists of all binary clauses over the variables $v,v_2$), and in Theorem \ref{thm:confmodisomu2} we will indeed see that we have $\Musati{\delta=2} \sse \cflimusat$.
  \item\label{exp:conflsdp2} We show $\Musati{\delta=3} \not\sse \cflimusat$ by constructing $F \in \Musati{\delta=3}$ with $\sdp(F) = \set{F_1,F_2}$ where $F_1 \not\cong F_2$. Let $G_1 := \Dt{2}$, and let $G_2$ be the variable-disjoint copy of $G_1$ obtained by replacing variables $v_1,v_2$ with $v_1',v_2'$. Let $w$ be a new variable, and obtain $F_1$ by ``full gluing'' of $G_1, G_2$ on $w$, that is, add literal $w$ to all clauses of $G_1$, add literal $\ol{w}$ to all clauses of $G_2$, and let $F_1$ be the union of these two clause-sets:
    \begin{multline*}
      F_1 = \setb{ \set{w,v_1,v_2},\set{w,v_1,\ol{v_2}}, \set{w,\ol{v_1},v_2}, \set{w,\ol{v_1},\ol{v_2}},\\
        \set{\ol{w},v_1',v_2'}, \set{\ol{w},v_1',\ol{v_2'}},\set{\ol{w},\ol{v_1'},v_2'},\set{\ol{w},\ol{v_1'},\ol{v_2'}}
      }.
    \end{multline*}
    We have $F_1 \in \Uclashnsi{\delta=3}$. We obtain $F$ from $F_1$ by inverse singular DP-reduction, adding a new (singular) variable $v$, and replacing the two clauses $\set{w,v_1,v_2}, \set{w,v_1,\ol{v_2}}$ by the three clauses $\set{v,w,v_1},\set{\ol{v},v_2},\set{\ol{v},w,\ol{v_2}}$:
    \begin{multline*}
      F = \setb{ \set{v,w,v_1},\set{\ol{v},v_2},\set{\ol{v},w,\ol{v_2}}, \ \set{w,\ol{v_1},v_2}, \set{w,\ol{v_1},\ol{v_2}},\\
        \set{\ol{w},v_1',v_2'}, \set{\ol{w},v_1',\ol{v_2'}},\set{\ol{w},\ol{v_1'},v_2'},\set{\ol{w},\ol{v_1'},\ol{v_2'}}
      }.
    \end{multline*}
Singular DP-reduction on $v$ yields $F_1$, and thus $F \in \Musati{\delta=3}$. The second singular variable of $F$ is $v_1$, and sDP-reduction on $v_1$ yields a clause-set $F_2$ containing one binary clause (since we left out $w$ in the replacement-clause $\set{\ol{v},v_2}$). Since all clauses in $F_1$ have length $3$, we see $F_2 \not\cong F_1$.
  \end{enumerate}
\end{example}

\subsection{Confluence on saturated $\Musat$}
\label{sec:confluencesatmu}

\begin{definition}\label{def:refinement}
  For clause-sets $F,G$ we write \bmm{F \sselr G} if for all $C \in F$ there is $D \in G$ with $C \sse D$.
\end{definition}
If $F \sselr G$, then we say that ``$F$ is a subset of $G$ mod(ulo) supersets''. $\sselr$ is a quasi-order on arbitrary clause-sets and a partial order on subsumption-free clause-sets, and thus $\sselr$ is a partial order on $\Musat$. The minimal element of $\sselr$ on $\Cls$ is $\top$, the minimal element on $\Musat$ is $\set{\bot}$. Now we show that ``nonsingular saturated patterns'' are not destroyed by singular DP-reduction:
\begin{lemma}\label{lem:uniquenf}
  Consider $F_0, F, F' \in \Musat$ with $F \tsdps F'$.
  \begin{enumerate}
  \item\label{lem:uniquenf1} If $F_0$ is nonsingular, then $F_0 \sselr F \Ra F_0 \sselr F'$.
  \item\label{lem:uniquenf2} If $F_0, F, F' \in \Smusat$, then $F_0 \sselr F' \Ra F_0 \sselr F$.
  \end{enumerate}
\end{lemma}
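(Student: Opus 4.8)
The plan is to reduce both parts to the single-step case, i.e.\ to $F \tsdp F'$ where $F' = \dpi{v}(F)$ for some singular variable $v$ of $F$ with singular literal $x$, main clause $C$, and side clauses $D_1,\dots,D_m$; the general statement then follows by induction along the chain $F \tsdps F'$ (using that $\Musat$ and $\Smusat$ are closed under sDP-reduction by Corollaries \ref{cor:sdppresmu} and \ref{cor:sdppressmu}). The key bookkeeping fact is that $F$ and $F'$ differ only on the clauses touching $v$: $F' = (F \sm \set{C,D_1,\dots,D_m}) \cup \set{C \res D_1,\dots,C \res D_m}$, and by Lemma \ref{lem:singDpMU}, Part \ref{lem:singDpMU3a}, each resolvent satisfies $C \res D_i = (C \sm \set{x}) \cup (D_i \sm \set{\ol{x}}) = C' \cup D_i'$, so $D_i \sm \set{\ol x} \sse C \res D_i$; in particular each side clause, after dropping $\ol x$, is a subset of the corresponding new clause.

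For Part \ref{lem:uniquenf1}, assume $F_0$ nonsingular and $F_0 \sselr F$; I must find for each $B \in F_0$ a superset in $F'$. Fix $B \in F_0$ and pick $A \in F$ with $B \sse A$. If $A$ is untouched by the reduction (i.e.\ $v \notin \var(A)$), then $A \in F'$ and we are done. If $A$ is a side clause $D_i$, then either $\ol x \notin B$, so $B \sse D_i' \sse C \res D_i \in F'$, or $\ol x \in B$; in the latter case I want to argue this cannot happen, using nonsingularity of $F_0$ together with the fact that $v = \var(x)$ can occur in $F_0$ only in limited ways --- here is the crux: $x$ occurs only in the main clause $C$ of $F$, and since $B \sse D_i$ and also $F_0 \sselr F$, any other clause of $F_0$ containing $\var(v)$ must sit inside a clause of $F$ containing $\var(v)$, i.e.\ inside $C$ or some $D_j$; pinning down that $v$ then has literal-degree $1$ in the positive direction forces $v$ to be $1$-singular or absent in $F_0$, contradicting nonsingularity unless $v \notin \var(F_0)$ --- and if $v \notin \var(F_0)$ then $\ol x \notin B$ after all. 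Finally if $A = C$ is the main clause: then $B \sse C$; if $x \notin B$ then $B \sse C' \sse C \res D_i \in F'$ for any $i$ (there is at least one side clause since $\vdeg_F(v) \ge 2$), and if $x \in B$ then $\var(v) = \var(x) \in \var(B) \sse \var(F_0)$, and one shows as above that $v$ would have to be singular in $F_0$, contradiction.

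For Part \ref{lem:uniquenf2}, assume $F_0,F,F' \in \Smusat$ and $F_0 \sselr F'$; I must produce for each $B \in F_0$ a superset in $F$. Fix $B \in F_0$, pick $A' \in F'$ with $B \sse A'$. If $A' \in F$ (untouched clause) we are done. Otherwise $A' = C \res D_i = C' \cup D_i'$ for some $i$, and I use the saturation hypothesis via Lemma \ref{lem:singDpSMU}, Part \ref{lem:singDpSMU2b}, which gives $C' = \bca_{j=1}^m D_j'$, hence $C' \sse D_i'$ and therefore $A' = C' \cup D_i' = D_i' \sse D_i \in F$ --- so $B \sse D_i \in F$ and we are done; note this direction is clean precisely because saturation collapses $C \res D_i$ back into $D_i$ modulo supersets, which is why Part \ref{lem:uniquenf2} needs $\Smusat$ while Part \ref{lem:uniquenf1} needs nonsingularity of $F_0$ instead.

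I expect the main obstacle to be the case analysis in Part \ref{lem:uniquenf1} when the containing clause $A \in F$ is the main clause $C$ or a side clause $D_i$ and the witness $B \in F_0$ actually contains the singular literal of $v$ (or its complement); the resolution of this case rests on the interplay between $F_0 \sselr F$ and the fact that the singular literal $x$ occurs uniquely in $F$, which constrains how $\var(v)$ can appear in the nonsingular clause-set $F_0$ --- making this precise (perhaps by observing directly that $\ldeg_{F_0}(x) \le \ldeg_F(x) = 1$ is false in the relevant configuration, or by a cleaner argument that $\var(v) \notin \var(F_0)$ whenever $F_0 \sselr F$ and $F_0$ is nonsingular) is the delicate point, and I would expect the paper to phrase it as a short lemma or an inline observation that nonsingular "patterns" cannot use singular literals of the ambient clause-set.
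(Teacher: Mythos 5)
Your proposal takes the same route as the paper for both parts, and Part~\ref{lem:uniquenf2} is handled exactly as the paper handles it: reduce to a single step $F' = \dpi{v}(F)$, use stability of $\Smusat$ under sDP-reduction to push the induction along the chain, and use Lemma~\ref{lem:singDpSMU}, Part~\ref{lem:singDpSMU2b} (i.e., $C \sm \set{x} \sse D_i \sm \set{\ol{x}}$ for saturated $F$) to collapse each resolvent $C \res D_i$ into the side clause $D_i$ modulo supersets. That part is complete and correct.

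For Part~\ref{lem:uniquenf1} your structure again matches the paper's: everything reduces to the claim $v \notin \var(F_0)$, after which the observation that no clause of $F$ is lost modulo removal of the $v$-literals finishes the argument. The paper dispatches this claim in half a sentence (``due to the nonsingularity of $F_0$''); you correctly isolate it as the delicate point but do not close it, and the specific inference you sketch is exactly where the gap sits. From ``every clause of $F_0$ containing $x$ lies inside the unique clause $C$ of $F$ containing $x$'' one cannot directly conclude $\ldeg_{F_0}(x) \le \ldeg_F(x) = 1$: a priori two distinct, incomparable clauses $B_1, B_2 \in F_0$ could both contain $x$ and both be subsets of the (clash-free) clause $C$, which is perfectly compatible with $F_0$ being subsumption-free and with the nonsingularity requirement $\ldeg_{F_0}(x) \ge 2$. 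Ruling this configuration out needs the minimal unsatisfiability of $F_0$ together with the clash structure around the singular variable (every clause of $F_0$ containing $\ol{x}$ sits inside some side clause $D_i$, and by Lemma~\ref{lem:singDpMU}, Part~\ref{lem:singDpMU3a}, each $D_i$ clashes with $C$ only in $v$, which tightly constrains where the complements of the literals of $B_1 \cup B_2$ can occur in $F_0$); nonsingularity alone only yields $\ldeg_{F_0}(x), \ldeg_{F_0}(\ol{x}) \ge 2$ and produces no contradiction by itself. So there is a genuine gap here --- but it is the very same step the paper leaves implicit, and you have at least located it precisely and named the two natural ways one would try to close it.
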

\begin{proof} W.l.o.g.\ we can assume for both parts that $F' = \dpi{v}(F)$ for a singular variable $v$ of $F$. Part \ref{lem:uniquenf1} follows from the facts that $v \notin \var(F_0)$ due to the nonsingularity of $F_0$, and that due to the minimal unsatisfiability of $F$ no clause gets lost by an application of singular DP-reduction. For Part \ref{lem:uniquenf2} assume $\ldeg_F(v) = 1$. Then the assertion follows from the fact, that due to the saturatedness of $F$ we have for the clause $C \in F$ with $v \in C$ and for every clause $D \in F$ with $\ol{v} \in D$ that $C \sm \set{v} \sse D \sm \set{\ol{v}}$. \qed
\end{proof}
\begin{example}\label{exp:uniquenf}
  Illustrating the conditions of Lemma \ref{lem:uniquenf}:
  \begin{enumerate}
  \item An example showing that in Part \ref{lem:uniquenf1} nonsingularity of $F_0$ is needed, is given trivially by $F = F_0 = \set{\set{v},\set{\ol{v}}}$.
  \item While an example for Part \ref{lem:uniquenf2} with $F \in \Musat \sm \Smusat$ and $F_0 \not\sselr F$ is given by $F_0 = F' = \Dt{3}$ (recall Example \ref{exp:F2F3}) and
    \begin{displaymath}
      F = \set{\set{\ol{v_1},\ol{v_2},\ol{v_3}}, \set{v_1,v_2,v},\set{\ol{v},v_3},\set{\ol{v_1},v_2},\set{\ol{v_2},v_3},\set{\ol{v_3},v_1}}.
    \end{displaymath}
  \end{enumerate}
\end{example}

\begin{theorem}\label{thm:confsat}
  $\Smusat \subset \cflmusat$.
\end{theorem}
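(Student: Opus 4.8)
The plan is to prove the (non-trivial) inclusion $\Smusat \sse \cflmusat$ directly, i.e.\ to show $\abs{\sdp(F)} = 1$ for every saturated $F$, by combining the two one-step invariance properties of Lemma~\ref{lem:uniquenf} with stability of $\Smusat$ under singular DP-reduction (Corollary~\ref{cor:sdppressmu}). First I would record that $\sdp(F) \ne \es$: every sDP-step removes at least the reduction variable, so the reduction terminates, and by Corollary~\ref{cor:sdppresmu} it terminates inside $\Musat$ at a nonsingular clause-set; moreover, since $F \in \Smusat$, Corollary~\ref{cor:sdppressmu} tells us that \emph{every} clause-set appearing on a chain $F \tsdps G$ is again saturated, so in particular $\sdp(F) \sse \Smusatns$.

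For uniqueness, fix arbitrary $F_1, F_2 \in \sdp(F)$; since $\sselr$ is a partial order on $\Musat$, it suffices to prove $F_1 \sselr F_2$, the reverse inclusion then following by symmetry. The key point is that $F_1$ and $F_2$ are ``siblings'' in the reduction forest, while Lemma~\ref{lem:uniquenf} only relates clause-sets joined by a reduction chain, so I would route the comparison through the common ancestor $F$. Concretely: pick a chain $F = G_0 \tsdp G_1 \tsdp \dots \tsdp G_k = F_1$ (all $G_i \in \Smusat$) and apply Lemma~\ref{lem:uniquenf}(\ref{lem:uniquenf2}) \emph{backwards} along it, with $F_0 := F_1 \in \Smusat$, starting from the trivial $F_1 \sselr G_k$; this yields $F_1 \sselr G_i$ for all $i$, hence $F_1 \sselr F$. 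Then pick a chain $F = H_0 \tsdp H_1 \tsdp \dots \tsdp H_l = F_2$ and apply Lemma~\ref{lem:uniquenf}(\ref{lem:uniquenf1}) \emph{forwards}, using that $F_1$ is nonsingular and starting from $F_1 \sselr H_0 = F$; this yields $F_1 \sselr H_l = F_2$. Swapping $F_1$ and $F_2$ gives $F_2 \sselr F_1$, so $F_1 = F_2$ and $F \in \cflmusat$.

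Strictness of the inclusion follows from existing facts: Example~\ref{exp:conflsdp} gives $\Musati{\delta=1} \sse \cflmusat$, while $\Musati{\delta=1}$ contains non-saturated clause-sets, e.g.\ $\set{\set{a}, \set{\ol a, b, c}, \set{\ol b}, \set{\ol c}}$ (adding the literal $b$ to the clause $\set{a}$ leaves it unsatisfiable), so $\cflmusat \not\sse \Smusat$.

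I do not expect a serious obstacle here. The one step that is not purely mechanical is the ``climb to the common ancestor, then descend the other branch'' structure above, which is forced by the fact that Lemma~\ref{lem:uniquenf} compares only chain-related clause-sets. Beyond that the work is bookkeeping: checking that every vertex on both chains lies in $\Smusat$ (this is exactly where the hypothesis $F \in \Smusat$ is used, via Corollary~\ref{cor:sdppressmu}), that the fixed endpoint $F_1$ may simultaneously play the role of a nonsingular clause-set in Part~\ref{lem:uniquenf1} and of a member of $\Smusat$ in Part~\ref{lem:uniquenf2}, and reflexivity of $\sselr$ for the base cases of the two inductions.
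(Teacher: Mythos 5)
Your proof is correct and follows essentially the same route as the paper: use Lemma~\ref{lem:uniquenf}, Part~\ref{lem:uniquenf2} (with the saturatedness of all intermediate clause-sets guaranteed by Corollary~\ref{cor:sdppressmu}) to get $F_1 \sselr F$, then Part~\ref{lem:uniquenf1} with $F_1$ nonsingular to get $F_1 \sselr F_2$, and conclude by symmetry and antisymmetry of $\sselr$ on $\Musat$. Your explicit unwinding along the reduction chains is just the step-by-step form of the paper's single application of the lemma to the multi-step relation $\tsdps$, and your strictness witness is a valid addition the paper leaves implicit.
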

\begin{proof}
  Consider $F \in \Smusat$ and two nonsingular $F', F'' \in \Smusat$ with $F \tsdps F'$ and $F \tsdps F''$. From $F'\sselr F'$ and $F \tsdps F'$ by Lemma \ref{lem:uniquenf}, Part \ref{lem:uniquenf2} we get $F' \sselr F$, and then by Part \ref{lem:uniquenf1} we get $F' \sselr F''$; in the same way we obtain $F'' \sselr F'$ and thus $F' = F''$. \qed
\end{proof}

\section{Permutations of sequences of DP-reductions}
\label{sec:permDPred}

This section contains central technical results on (iterated) singular DP-reduction. The basic observations are collected in Subsection \ref{sec:propsingdpred}, studying how literal degrees change under sDP-reductions. It follows an interlude on iterated general DP-reduction in Subsection \ref{sec:itDPr}, stating ``commutativity modulo subsumption'' and deriving the basic fact in Corollary \ref{cor:dpeqpres}, that in case a sequence of DP-reductions as well as some permutation both yield minimally unsatisfiable clause-sets, then actually these MU's are the same. In Subsection \ref{sec:itsDP} then conclusions for singular DP-reductions are drawn, obtaining various conditions under which sDP-reductions can be permuted without changing the final result. A good overview on all possible sDP-reductions is obtained in Subsection \ref{sec:without1sing} in case no 1-singular variables are present. In Subsection \ref{sec:singindex} we introduce the ``singularity index'', the minimal length of a maximal sDP-reduction sequence. Our second major result is Theorem \ref{thm:singind}, showing that in fact all maximal sDP-reduction-sequences have the same length.

\subsection{Monitoring literal degrees under singular DP-reductions}
\label{sec:propsingdpred}

First we analyse the changes for literal-degrees after one step of sDP-reduction.
\begin{lemma}\label{lem:sdpld}
  Consider $F \in \Musat$ and an $m$-singular variable $v$ ($m \in \NN$). Let $C$ be the main clause, and let $D_1, \dots, D_m$ be the side clauses; and let $F' := \dpi{v}(F)$. Consider a literal $x \in \Lit$; the task is to compare $\ldeg_F(x)$ and $\ldeg_{F'}(x)$.
  \begin{enumerate}
  \item\label{lem:sdpld1} If $\var(x) \not= v$ and $x \notin C$, then $\ldeg_{F'}(x) = \ldeg_F(x)$.
  \item\label{lem:sdpld2} If $\var(x) = v$, then $\ldeg_F(x) + \ldeg_F(\ol{x}) = m+1$, while $\ldeg_{F'}(x) = \ldeg_{F'}(\ol{x}) = 0$.
  \item[] For the remaining items we assume $\var(x) \not= v$ and $x \in C$.
  \item[] Let $p := \abs{\set{i \in \tb 1m : x \notin D_i}} \in \tb 0m$.
  \item\label{lem:sdpld3} $\ldeg_{F'}(x) = \ldeg_F(x) - 1 + p$.
  \item\label{lem:sdpld3a} $\max(m, \ldeg_F(x) - 1) \le \ldeg_{F'}(x) \le \ldeg_F(x) - 1 + m$.
  \item\label{lem:sdpld7} If $m=1$, then $\ldeg_{F'}(x) \le \ldeg_F(x)$.
  \item\label{lem:sdpld6} We have $\ldeg_{F'}(x) > \ldeg_F(x)$ iff $p \ge 2$.
  \item\label{lem:sdpld4} The following conditions are equivalent:
    \begin{enumerate}
    \item $\ldeg_{F'}(x) = \ldeg_F(x) - 1$.
    \item $\ldeg_{F'}(x) < \ldeg_F(x)$.
    \item $x \in C \cap D_1 \cap \dots \cap D_m$.
    \item $\var(x) \in \var(C) \cap \var(D_1) \cap \dots \cap \var(D_m)$.
    \end{enumerate}
  \item\label{lem:sdpld5} If $\ldeg_{F'}(x) < \ldeg_F(x)$, then $\ldeg_{F'}(x) = \ldeg_F(x) - 1 \ge m$.
  \end{enumerate}
\end{lemma}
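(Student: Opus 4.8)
The plan is to establish the core formula in Part \ref{lem:sdpld3} first, and then derive all the remaining items as essentially arithmetic consequences of it, using Lemma \ref{lem:singDpMU} (especially Part \ref{lem:singDpMU3}) to control the bookkeeping. Throughout, recall that $F' = \dpi{v}(F)$ means we delete $C$ and $D_1, \dots, D_m$ and add the resolvents $R_i := C \res D_i = (C \sm \set{v}) \cup (D_i \sm \set{\ol{v}})$; by Lemma \ref{lem:singDpMU}, Part \ref{lem:singDpMU3a}, each $R_i$ is genuinely a clause, by Part \ref{lem:singDpMU3b} the $R_i$ are pairwise distinct, and by Part \ref{lem:singDpMU3c} no $R_i$ coincides with a surviving clause of $F$. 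Hence $c(F') = c(F) - 1$, and every clause of $F'$ is either a surviving clause of $F$ (not containing $v$) or exactly one of the $m$ distinct new resolvents.

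Parts \ref{lem:sdpld1} and \ref{lem:sdpld2} are immediate: if $\var(x) \ne v$ and $x \notin C$, then $x \notin R_i$ for all $i$ either (since the only literals of $R_i$ not coming from $D_i$ come from $C$, and $x \notin D_i$ would be needed — wait, more carefully: $x$ could still lie in $D_i$; but then $x$ lies in the surviving-versus-resolvent count unchanged). The clean way to see Part \ref{lem:sdpld1}: the clauses of $F$ containing $x$ are split into those not containing $v$ (which survive verbatim) and possibly some $D_i$'s containing $x$ (the main clause $C$ does not contain $x$ by hypothesis); and $x \in D_i$ iff $x \in R_i$, since $x \ne v, \ol v$. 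So the map $D_i \mapsto R_i$ is a degree-preserving bijection on the $x$-containing clauses that get touched, giving $\ldeg_{F'}(x) = \ldeg_F(x)$. Part \ref{lem:sdpld2} holds because $\vdeg_F(v) = m+1$ by definition of $m$-singular and $\ldeg_F(x) = 1$ (it is the singular literal) so the split of $m+1$ is $1 + m$; and $v$ simply disappears from $F'$.

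For the main case $\var(x) \ne v$, $x \in C$: the clauses of $F$ containing $x$ are (a) the main clause $C$ itself (contributing $1$), (b) some of the side clauses $D_i$, namely those with $x \in D_i$, and (c) the surviving clauses with $x$. Passing to $F'$: $C$ is gone, the surviving clauses with $x$ are unchanged, and now I must count how many of the $m$ resolvents $R_i$ contain $x$. Since $x \in C \sm \set{v}$, we always have $x \in R_i$, for every $i$ — regardless of whether $x \in D_i$. So all $m$ resolvents contribute. Therefore $\ldeg_{F'}(x) = (\ldeg_F(x) - 1 - \abs{\set{i : x \in D_i}}) + m = \ldeg_F(x) - 1 + (m - \abs{\set{i : x \in D_i}}) = \ldeg_F(x) - 1 + p$, which is Part \ref{lem:sdpld3}. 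From here everything else is arithmetic: Part \ref{lem:sdpld3a} follows since $0 \le p \le m$ and also $\ldeg_{F'}(x) \ge$ (number of resolvents containing $x$) $= m$; Part \ref{lem:sdpld7} is the case $m=1$, $p \in \set{0,1}$; Part \ref{lem:sdpld6} is $\ldeg_F(x) - 1 + p > \ldeg_F(x) \Lra p \ge 2$; Part \ref{lem:sdpld4} is $p = 0$, i.e. $x \in D_i$ for all $i$, i.e. (using clash-freeness of the $D_i$ and $\var(x) \ne v$) $\var(x) \in \var(D_i)$ for all $i$, combined with $x \in C$; and Part \ref{lem:sdpld5} combines $p = 0$ with the lower bound $\ldeg_{F'}(x) \ge m$ from Part \ref{lem:sdpld3a}. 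The only genuinely subtle point — the one place I would slow down — is the observation that $x \in R_i$ holds for \emph{all} $i \in \tb 1m$ when $x \in C \sm \set v$, even when $x \notin D_i$; this is what makes the ``$+m$'' appear and is the crux of the whole lemma. Everything after that is routine inequality-chasing.
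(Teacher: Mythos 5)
Your proof is correct and follows essentially the same route as the paper: Parts \ref{lem:sdpld1}--\ref{lem:sdpld3} by direct counting from the definition of $\dpi{v}$ (using Lemma \ref{lem:singDpMU}, Part \ref{lem:singDpMU3}, to guarantee that the $m$ resolvents are pairwise distinct and distinct from the surviving clauses, and that $x \in C \res D_i$ for all $i$ whenever $x \in C$, $\var(x) \ne v$), with the remaining parts as arithmetic consequences. One small imprecision: in Part \ref{lem:sdpld4} the implication from $\var(x) \in \var(D_i)$ to $x \in D_i$ is not a consequence of clash-freeness of the clauses $D_i$ themselves, but of the fact that $C$ and $D_i$ clash in exactly one variable (Lemma \ref{lem:singDpMU}, Part \ref{lem:singDpMU3a}), so that $x \in C$ with $\var(x) \ne v$ excludes $\ol{x} \in D_i$ --- this is exactly the observation the paper singles out, and it is available from your setup, so the argument stands.
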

\begin{proof} Parts \ref{lem:sdpld1} - \ref{lem:sdpld3} follow by definition, Parts \ref{lem:sdpld3a}, \ref{lem:sdpld6} follows by Part \ref{lem:sdpld3}, Part \ref{lem:sdpld7} follows by Part \ref{lem:sdpld3a}. Part \ref{lem:sdpld4} follows by Parts \ref{lem:sdpld1} - \ref{lem:sdpld3a} and the observation, that if $x \in C$, then $\ol{x} \notin D_1 \cup \dots \cup D_m$ (due to $F \in \Musat$). Part \ref{lem:sdpld5} follows by Part \ref{lem:sdpld4}. \qed
\end{proof}

By Lemma \ref{lem:sdpld}, Parts \ref{lem:sdpld5} and \ref{lem:sdpld7} we get that singular variables can only be created for 1-singular DP-reduction, while singular variables can only be destroyed for non-1-singular DP-reductions; the details are as follows:
\begin{corollary}\label{cor:propdps3}
  Consider $F \in \Musat$ and an $m$-singular variable $v$ for $F$ ($m \in \NN$), and let $F' := \dpi{v}(F)$.
  \begin{enumerate}
  \item\label{cor:propdps31}
    \begin{enumerate}
    \item\label{cor:propdps311} If $m \ge 2$, then $\varsing(F') \sse \varsing(F)$ with $\varosing(F') \sse \varosing(F)$.
    \item\label{cor:propdps312} If $m = 1$ then $\varosing(F') \sm \varosing(F) \sse \varnosing(F)$.
    \end{enumerate}
  \item\label{cor:propdps32}
    \begin{enumerate}
    \item\label{cor:propdps321} If $m=1$, then $\varsing(F) \sm \set{v} \sse \varsing(F')$ with $\varosing(F) \sm \set{v} \sse \varosing(F')$.
    \item\label{cor:propdps322} If $m \ge 2$ then $\varsing(F) \sm \varsing(F') \sse \varnosing(F)$.
    \end{enumerate}
  \end{enumerate}
\end{corollary}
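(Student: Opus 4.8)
I would obtain all four inclusions by tracking literal degrees through a single step $F \leadsto F' := \dpi{v}(F)$ with the help of Lemma~\ref{lem:sdpld}. First I record the standing facts I need: $F' \in \Musat$ with $\var(F') = \var(F) \sm \set{v}$ (Corollary~\ref{cor:sdppresmu}, together with Lemma~\ref{lem:singDpMU}, Part~\ref{lem:singDpMU3a}, which guarantees that all $m$ resolutions are actually carried out and no two of them collapse), and $\vdeg_G(w) \ge 2$ for every $w \in \var(G)$ when $G \in \Musat$. Fix the main clause $C$ and the side clauses $D_1, \dots, D_m$ of $v$; the only feature of $C$ I would use is that a clause contains at most one literal over any variable. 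From Lemma~\ref{lem:sdpld} I would invoke, for a literal $y$ with $\var(y) \ne v$: Part~\ref{lem:sdpld1}, i.e.\ $y \notin C \Rightarrow \ldeg_{F'}(y) = \ldeg_F(y)$ (so a literal whose degree changes at all must lie in $C$); and, for $y \in C$, Part~\ref{lem:sdpld3a} ($\ldeg_{F'}(y) \ge \max(m, \ldeg_F(y) - 1)$), together with Part~\ref{lem:sdpld7} ($m = 1 \Rightarrow \ldeg_{F'}(y) \le \ldeg_F(y)$) and, where convenient, the exact formula $\ldeg_{F'}(y) = \ldeg_F(y) - 1 + \abs{\set{i : y \notin D_i}}$ of Part~\ref{lem:sdpld3}.

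For Part~\ref{cor:propdps311} ($m \ge 2$): given $w \in \varsing(F')$ --- then $w \ne v$ and $w \in \var(F)$ --- I pick $y \in \set{w, \ol{w}}$ with $\ldeg_{F'}(y) = 1$; since $1 < m$, Part~\ref{lem:sdpld3a} excludes $y \in C$, hence $\ldeg_F(y) = \ldeg_{F'}(y) = 1$ and $\ldeg_F(\ol{y}) = \vdeg_F(w) - 1 \ge 1$, so $w \in \varsing(F)$. If moreover $w \in \varosing(F')$, then neither $w$ nor $\ol{w}$ lies in $C$ (each has $F'$-degree $1 < m$), so both degrees are preserved and $w \in \varosing(F)$. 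For Part~\ref{cor:propdps312} ($m = 1$): given $w \in \varosing(F') \sm \varosing(F)$, Parts~\ref{lem:sdpld1} and~\ref{lem:sdpld7} give $\ldeg_F(w) \ge \ldeg_{F'}(w) = 1$ and likewise for $\ol{w}$; were both $\ge 2$, then both $w$ and $\ol{w}$ would have strictly smaller degree in $F'$, hence both lie in $C$ --- impossible; therefore $\min(\ldeg_F(w), \ldeg_F(\ol{w})) = 1$, i.e.\ $w \in \varsing(F)$, and with $w \notin \varosing(F)$ this yields $w \in \varnosing(F)$.

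For Part~\ref{cor:propdps321} ($m = 1$): given $w \in \varsing(F) \sm \set{v}$ --- so $w \in \var(F')$ --- pick $y \in \set{w, \ol{w}}$ with $\ldeg_F(y) = 1$; if $y \notin C$ then $\ldeg_{F'}(y) = 1$ by Part~\ref{lem:sdpld1}, and if $y \in C$ then $\ldeg_F(y) = 1$ forces $y \notin D_1$, so Part~\ref{lem:sdpld3} again gives $\ldeg_{F'}(y) = 1$; hence $\ldeg_{F'}(\ol{y}) = \vdeg_{F'}(w) - 1 \ge 1$ and $w \in \varsing(F')$; running this for $y = w$ and $y = \ol{w}$ also gives $\varosing(F) \sm \set{v} \sse \varosing(F')$. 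For Part~\ref{cor:propdps322} ($m \ge 2$): given $w \in \varsing(F) \sm \varsing(F')$ --- here $w \ne v$, since $v \in \varnosing(F)$, so $w \in \var(F')$ --- suppose towards a contradiction $w \in \varosing(F)$, i.e.\ $\ldeg_F(w) = \ldeg_F(\ol{w}) = 1$. At most one of $w, \ol{w}$ lies in $C$; if neither does, then both $F'$-degrees equal $1$, so $w \in \varosing(F') \sse \varsing(F')$; and if, say, $w \in C$, then $\ol{w} \notin C$ gives $\ldeg_{F'}(\ol{w}) = 1$ while $\ldeg_{F'}(w) \ge m \ge 1$, so again $w \in \varsing(F')$. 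Both outcomes contradict $w \notin \varsing(F')$, so $w \notin \varosing(F)$, i.e.\ $w \in \varnosing(F)$.

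I do not expect a genuine obstacle here: the whole argument is a uniform case split on whether the literal currently under scrutiny lies in the main clause $C$, fed into the degree estimates of Lemma~\ref{lem:sdpld}. The points that need attention are only bookkeeping ones --- keeping track of \emph{which} of $w, \ol{w}$ carries degree $1$, using $\vdeg \ge 2$ to recover the complementary degree, and noticing that the contradictions in Parts~\ref{cor:propdps312} and~\ref{cor:propdps322} both come down to the impossibility of $w$ and $\ol{w}$ lying in $C$ at the same time.
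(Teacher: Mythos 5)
Your proof is correct and follows the route the paper intends: the paper derives this corollary directly from Lemma~\ref{lem:sdpld} (citing its Parts~\ref{lem:sdpld5} and~\ref{lem:sdpld7}) without writing out the details, and your uniform case split on membership of the literal in the main clause $C$, fed into Parts~\ref{lem:sdpld1}, \ref{lem:sdpld3}, \ref{lem:sdpld3a} and \ref{lem:sdpld7}, supplies exactly those details. (Only the justification ``$w \ne v$, since $v \in \varnosing(F)$'' in your treatment of Part~\ref{cor:propdps322} is a non-sequitur as phrased, but harmless: if $w = v$ then $w \in \varnosing(F)$ holds trivially because $v$ is $m$-singular with $m \ge 2$.)
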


By Lemma \ref{lem:sdpld}, Part \ref{lem:sdpld6} together with Lemma \ref{lem:singDpSMU} we get:
\begin{corollary}\label{cor:propdps3b}
  Consider $F \in \Smusat$ and a singular variable $v$; let $F' := \dpi{v}(F)$.
  \begin{enumerate}
  \item\label{cor:propdps3b1} For all literals $x$ holds $\ldeg_{F'}(x) \le \ldeg_F(x)$.
  \item\label{cor:propdps3b2} Thus if $w \not= v$ is a singular variable for $F$, then $w$ is also singular for $F'$.
  \end{enumerate}
\end{corollary}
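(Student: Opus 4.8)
The plan is to deduce Part \ref{cor:propdps3b1} directly from Lemma \ref{lem:sdpld}, feeding in the extra constraint that saturatedness imposes via Lemma \ref{lem:singDpSMU}, and then to obtain Part \ref{cor:propdps3b2} from Part \ref{cor:propdps3b1} together with $F' \in \Musat$ (Corollary \ref{cor:sdppresmu}).

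For Part \ref{cor:propdps3b1}, I would fix the singular literal $y$ of $v$ --- renaming the ``$x$'' of Lemma \ref{lem:singDpSMU} to avoid a clash with the arbitrary literal $x$ of Lemma \ref{lem:sdpld} --- together with the main clause $C \ni y$ and the side clauses $D_1,\dots,D_m \ni \ol{y}$, where $v$ is $m$-singular. Then I run through the case distinction of Lemma \ref{lem:sdpld}. If $\var(x) = v$, then $\ldeg_{F'}(x) = 0 \le \ldeg_F(x)$ by Part \ref{lem:sdpld2}. If $\var(x) \ne v$ and $x \notin C$, then $\ldeg_{F'}(x) = \ldeg_F(x)$ by Part \ref{lem:sdpld1}. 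The only case where the literal-degree could grow is $\var(x) \ne v$ and $x \in C$: here Lemma \ref{lem:sdpld}, Part \ref{lem:sdpld6}, says $\ldeg_{F'}(x) > \ldeg_F(x)$ would force $x$ to be absent from at least two side clauses. But $x \in C \sm \set{y} = C'$, and since $F \in \Smusat$, Lemma \ref{lem:singDpSMU}, condition \ref{lem:singDpSMU2b}, gives $C' = \bca_{i=1}^m (D_i \sm \set{\ol{y}})$, so $x$ lies in every side clause; thus the ``absent'' count is $0$ and in fact $\ldeg_{F'}(x) = \ldeg_F(x) - 1$ by Part \ref{lem:sdpld3}. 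This proves Part \ref{cor:propdps3b1}, using nothing beyond the two quoted lemmas.

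For Part \ref{cor:propdps3b2}, take a singular variable $w \ne v$ of $F$, w.l.o.g.\ with $\ldeg_F(w) = 1$. First I would note that $w$ is still a variable of $F'$: singular DP-reduction on $v$ removes only the variable $v$ (equivalently, any occurrence of $w$ either sits in a clause untouched by the reduction, or $w$ reappears in each resolvent $C \res D_i = C' \cup (D_i \sm \set{\ol{y}})$). Since $F' \in \Musat$ by Corollary \ref{cor:sdppresmu}, $w$ is not pure in $F'$ (deleting a clause containing a pure literal preserves unsatisfiability, contradicting minimal unsatisfiability), so $\ldeg_{F'}(w) \ge 1$ and $\ldeg_{F'}(\ol{w}) \ge 1$. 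Combining $\ldeg_{F'}(w) \ge 1$ with $\ldeg_{F'}(w) \le \ldeg_F(w) = 1$ from Part \ref{cor:propdps3b1} yields $\ldeg_{F'}(w) = 1$, hence $\min(\ldeg_{F'}(w),\ldeg_{F'}(\ol{w})) = 1$, i.e.\ $w$ is singular for $F'$.

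I expect no real obstacle; the only step requiring a little care is the ``growth'' case of Part \ref{cor:propdps3b1}, where one must correctly transfer between the notation of Lemma \ref{lem:sdpld} (arbitrary literal) and that of Lemma \ref{lem:singDpSMU} (singular literal, with its main and side clauses) and observe that saturatedness is precisely the hypothesis $C' \sse D_i$ for all $i$ that rules out any increase of a literal-degree.
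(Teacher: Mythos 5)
Your proposal is correct and follows exactly the paper's (very terse) argument: Part \ref{cor:propdps3b1} comes from Lemma \ref{lem:sdpld}, Part \ref{lem:sdpld6}, combined with condition \ref{lem:singDpSMU2b} of Lemma \ref{lem:singDpSMU}, which forces $p=0$ and hence rules out any degree increase; Part \ref{cor:propdps3b2} then follows since $F' \in \Musat$ has no pure or vanished variables besides $v$. The extra care you take in translating between the two lemmas' notations and in justifying $w \in \var(F')$ is a faithful filling-in of details the paper leaves implicit.
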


\subsection{Iterated DP-reduction}
\label{sec:itDPr}

\begin{definition}\label{def:itDP}
  Consider $F \in \Cls$ and a sequence $v_1, \dots, v_n$ of variables for $n \in \NNZ$. Then\vspace{-2ex}
  \begin{displaymath}
    \bmm{\dpi{v_1,\dots,v_n}(F)} :=
    \begin{cases}
      F & \text{if } n = 0\\
      \dpi{v_n}(\dpi{v_1,\dots,v_{n-1}}(F)) & \text{if } n > 0
    \end{cases}.
  \end{displaymath}
\end{definition}
Thus in ``$\dpi{v_1,\dots,v_n}$'' DP-reduction is performed in order $v_1, \dots, v_n$. We have $\var(\dpi{v_1,\dots,v_n}(F)) \sse \var(F) \sm \set{v_1,\dots,v_n}$. In \cite{KuLu97} (Lemma 7.4, page 33) as well as in \cite{KuLu98} (Lemma 7.6, page 27) the following fundamental result on iterated DP-reduction is shown:  If performing subsumption-elimination at the end, then iterated DP-reduction does not depend on the order of the variables, while additionally performing subsumption-elimination inbetween has no influence. More precisely:
\begin{lemma}\label{lem:DPcomm}
  Let $\rsub: \Cls \ra \Cls$ be subsumption-elimination, that is, $\rsub(F)$ is the set of $C \in F$ which are minimal in $F$ w.r.t.\ the subset-relation. And for $n \in \NNZ$ let $S_n$ be the set of permutations of $\tb 1n$. Then we have the following operator-equalities for all variable-sequences $v_1, \dots, v_n \in \Va$ ($n \in \NNZ$):
  \begin{enumerate}
  \item\label{lem:DPcomm1} $\rsub \circ \dpi{v_1,\dots,v_n} = \rsub \circ \dpi{v_1,\dots,v_n} \circ \rsub$.
  \item\label{lem:DPcomm2} For all $\pi \in S_n$ we have $\rsub \circ \dpi{v_1,\dots,v_n} = \rsub \circ \dpi{v_{\pi(1)},\dots,v_{\pi(n)}}$.
  \end{enumerate}
\end{lemma}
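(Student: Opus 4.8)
The plan is to reduce both assertions to a single description of $\rsub(\dpi{v_1,\dots,v_n}(F))$ that visibly does not depend on the order of $v_1,\dots,v_n$ and is unchanged when $F$ is first replaced by $\rsub(F)$. Set $W := \set{v_1,\dots,v_n}$, and for a clause $D$ write $\tau(D) := D \cap (W \cup \ol W)$ for its part on the eliminated variables and $\rho(D) := D \sm (W \cup \ol W)$ for the remainder. I claim that $\rsub(\dpi{v_1,\dots,v_n}(F))$ consists exactly of the $\sse$-minimal clauses $C$ for which there is some $S \sse F$ with $\set{\tau(D) : D \in S}$ unsatisfiable and $\bc_{D \in S}\rho(D) \sse C$. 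Granting this, both parts are immediate: the condition mentions no ordering, so it gives Part \ref{lem:DPcomm2}; and it is unaffected by passing from $F$ to $\rsub(F)$, since given a witness $S$ for $F$ we may replace each $D \in S$ by a $\sse$-minimal clause of $\rsub(F)$ contained in it, which only shrinks $\bc_{D\in S}\rho(D)$ and only strengthens (hence keeps unsatisfiable) the trace clause-set $\set{\tau(D) : D \in S}$ --- this gives Part \ref{lem:DPcomm1}.

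For the characterization, first note by induction on $n$, using Definition \ref{def:itDP}, that every clause of $\dpi{v_1,\dots,v_n}(F)$ arises from clauses of $F$ by a tree-like resolution derivation all of whose resolutions are on variables of $W$ (each clause of $\dpi{v_1,\dots,v_i}(F)$ either already lies in $\dpi{v_1,\dots,v_{i-1}}(F)$ or is a $v_i$-resolvent of two such clauses). For one inclusion, take $C' \in \dpi{v_1,\dots,v_n}(F)$ with $\var(C') \cap W = \es$ and such a derivation; every $W$-literal occurring at a leaf must have been resolved away, so $C'$ equals the union of the $\rho(D)$ over the leaves $D$. Projecting the whole derivation by $D \mapsto \tau(D)$ yields a resolution derivation over $W$ --- the single clash of each step lies in the traces, it being on a $W$-variable, and no tautological resolvent can occur since each projected clause is a subclause of a clash-free clause --- whose conclusion is $\tau(C') = \bot$; hence the leaf-traces form an unsatisfiable clause-set, and $C'$ (which is $\sse C$) is their union of $\rho$-parts.

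The other inclusion is the core. Given $S \sse F$ with $\set{\tau(D):D\in S}$ unsatisfiable, apply $\dpi{v_1,\dots,v_n}$ to this clause-set; since it has no variables outside $W$ and is unsatisfiable the result is $\set{\bot}$, and the resolution steps producing $\bot$ form a refutation whose steps follow the order $v_1,\dots,v_n$. Lift this refutation to clauses of $F$ by replacing each leaf $\tau(D)$ by $D$ and mirroring each resolution step. Every clause of the lift has the shape $\sigma \cup \bc_{D\in T}\rho(D)$ with $T \sse S$ and $\sigma$ a trace from the corresponding stage of the refutation, so its $\rho$-part lies inside the clash-free set $\bc_{D\in S}\rho(D) \sse C$; hence each mirrored step is a legal resolution (its only clash is the inherited one on a $W$-variable) and nothing is ever discarded as a tautology. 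An induction along the derivation then shows that each lifted clause actually occurs at the matching stage of $\dpi{v_1,\dots,v_n}(F)$: at stage $i$ it is free of $v_1,\dots,v_{i-1}$ (its trace comes from $\dpi{v_1,\dots,v_{i-1}}$ of the trace clause-set, and its $\rho$-part avoids $W$), so it has survived the earlier stages, and it is produced together with all other resolvents at stage $i$. Thus the conclusion of the lifted refutation is a clause of $\dpi{v_1,\dots,v_n}(F)$ contained in $\bc_{D\in S}\rho(D) \sse C$, which is what the characterization requires.

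I expect the main obstacle to be the bookkeeping in this last step --- matching the stages of the reduction with the steps of the refutation and confirming that legality of every intermediate resolvent follows solely from clash-freeness of $\bc_{D\in S}\rho(D)$. A more pedestrian alternative avoids the characterization: prove directly $\rsub \circ \dpi{v} = \rsub \circ \dpi{v} \circ \rsub$ --- a short argument, since a resolvent lost when $F$ is cut down to $\rsub(F)$ is subsumed in $\dpi{v}(\rsub(F))$ --- and $\rsub(\dpi{v,w}(F)) = \rsub(\dpi{w,v}(F))$ by a case analysis on the $\set{v,\ol v,w,\ol w}$-traces of clauses of $F$, and then sort an arbitrary permutation by adjacent transpositions, inserting $\rsub$ between consecutive single-variable steps via Part \ref{lem:DPcomm1} so that the two-variable identity applies locally; but the case analysis there is lengthy, so I would prefer the uniform characterization.
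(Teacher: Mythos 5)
Your argument is correct, but there is nothing in the paper to compare it against: the paper does not prove Lemma~\ref{lem:DPcomm} at all, it imports it from \cite{KuLu97} (Lemma 7.4) and \cite{KuLu98} (Lemma 7.6). So your proof stands as an independent, self-contained derivation. The route you take is the classical ``directional resolution'' view: $\rsub(\dpi{v_1,\dots,v_n}(F))$ is the set of minimal clauses $C$ admitting a witness $S \sse F$ whose traces on $W = \set{v_1,\dots,v_n}$ are unsatisfiable and whose residues union into $C$; since this description depends only on the set $W$ and is invariant under replacing each witness clause by a subsuming clause, both parts follow at once. I checked the two places you yourself flagged as delicate, and they go through: (i) in the lift, the only possible clashes between two lifted premises are the inherited $W$-clash (the residue parts all sit inside the clash-free set $\bc_{D\in S}\rho(D) \sse C$, and traces and residues live on disjoint variable sets), so every mirrored step is a legal, non-tautological resolution; (ii) a lifted clause created at stage $j$ persists to stage $i-1$ because its $W$-part equals its trace, which lies in $\dpi{v_1,\dots,v_{i-1}}$ of the trace clause-set and hence avoids $v_{j+1},\dots,v_{i-1}$, while its residue avoids $W$ entirely --- so it is present when the stage-$i$ resolvents are formed. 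Together with the observation that a minimal condition-satisfying $C$ must itself lie in $\dpi{v_1,\dots,v_n}(F)$ (the clause produced by the lift satisfies the condition and is contained in $C$), the characterization is established and the lemma follows. What your approach buys over the pedestrian alternative you sketch (adjacent transpositions plus a two-variable case analysis) is exactly what you claim: a single order-free invariant replaces a lengthy local case distinction, and as a by-product it identifies $\rsub(\dpi{v_1,\dots,v_n}(F))$ with the minimal clauses obtainable by resolving away $W$, which is of independent interest.
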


\begin{definition}\label{def:DPeqpres}
  Consider $F \in \Cls$ and $v_1,\dots,v_n \in \Va$ ($n \in \NNZ$). Then a permutation $\pi \in S_n$ is called \textbf{equality-preserving} for $F$ and $v_1,\dots,v_n$ (for short: ``eq-preserving''), if we have $\dpi{v_1,\dots,v_n}(F) = \dpi{\pi(v_1),\dots,\pi(v_n)}(F)$. The set of all eq-preserving $\pi \in S_n$ is denoted by $\bmm{\eqp(F, (v_1,\dots,v_n))} \sse S_n$.
\end{definition}
Note that if $\var(F) \sse \set{v_1,\dots,v_n}$, then $\eqp(F,(v_1,\dots,v_n)) = S_n$. Since minimally unsatisfiable clause-sets do not contain subsumptions, we obtain:
\begin{corollary}\label{cor:dpeqpres}
  Consider $F \in \Cls$ and variables $v_1, \dots, v_n$ ($n \in \NNZ$) such that $\dpi{v_1,\dots,v_n}(F) \in \Musat$. Then we have for $\pi \in S_n$:
  \begin{displaymath}
    \pi \in \eqp(F,(v_1,\dots,v_n)) \Lra \dpi{v_{\pi(1)},\dots,v_{\pi(n)}}(F) \in \Musat.
  \end{displaymath}
\end{corollary}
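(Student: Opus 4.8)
The plan is to establish the two implications of the biconditional separately, with essentially all the substance sitting in Lemma~\ref{lem:DPcomm}, Part~\ref{lem:DPcomm2}. The direction ``$\Ra$'' is immediate: if $\pi \in \eqp(F,(v_1,\dots,v_n))$, then by Definition~\ref{def:DPeqpres} we have $\dpi{v_{\pi(1)},\dots,v_{\pi(n)}}(F) = \dpi{v_1,\dots,v_n}(F)$, and since the right-hand side lies in $\Musat$ by the hypothesis of the corollary, so does the left-hand side; nothing further is needed.

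For the direction ``$\La$'', I would start from the assumption $\dpi{v_{\pi(1)},\dots,v_{\pi(n)}}(F) \in \Musat$ and apply the operator identity of Lemma~\ref{lem:DPcomm}, Part~\ref{lem:DPcomm2}, at the clause-set $F$, obtaining $\rsub(\dpi{v_1,\dots,v_n}(F)) = \rsub(\dpi{v_{\pi(1)},\dots,v_{\pi(n)}}(F))$, where $\rsub$ is subsumption-elimination. The key point is then that both of the unreduced clause-sets are already subsumption-free: a minimally unsatisfiable clause-set cannot contain two distinct clauses $C, D$ with $C \sse D$, since then $D$ would be semantically implied by $C$ and hence removable without destroying unsatisfiability, contradicting minimality. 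Thus $\rsub$ acts as the identity on $\Musat$; as $\dpi{v_1,\dots,v_n}(F) \in \Musat$ by hypothesis and $\dpi{v_{\pi(1)},\dots,v_{\pi(n)}}(F) \in \Musat$ by assumption, I conclude $\dpi{v_1,\dots,v_n}(F) = \dpi{v_{\pi(1)},\dots,v_{\pi(n)}}(F)$, i.e.\ $\pi \in \eqp(F,(v_1,\dots,v_n))$.

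I do not expect a real obstacle here: the combinatorial core is entirely delegated to Lemma~\ref{lem:DPcomm}, and the only additional ingredient — subsumption-freeness of the elements of $\Musat$ — is elementary and is precisely the observation already flagged in the sentence immediately preceding the statement. The one point that deserves a moment of care is to invoke Part~\ref{lem:DPcomm2} as an identity of operators on $\Cls$ (not merely in the special case $\var(F) \sse \set{v_1,\dots,v_n}$), so that it applies to the given $F$ with no extra assumption on its variable set.
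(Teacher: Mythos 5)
Your proof is correct and coincides with the paper's intended argument: the forward direction is immediate from Definition \ref{def:DPeqpres}, and the backward direction follows from Lemma \ref{lem:DPcomm}, Part \ref{lem:DPcomm2}, combined with the observation (stated in the sentence introducing the corollary) that minimally unsatisfiable clause-sets are subsumption-free, so that $\rsub$ acts as the identity on both reduction results. Nothing is missing.
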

Since hitting clause-sets do not contain subsumptions, by Lemma \ref{lem:hitDPg} we obtain:
\begin{corollary}\label{cor:dpeqhit}
  For clause-sets $F \in \Clash$ and variables $v_1, \dots, v_n$ ($n \in \NNZ$) we have $\eqp(F,(v_1,\dots,v_n)) = S_n$.
\end{corollary}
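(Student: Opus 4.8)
The plan is to derive this directly from commutativity modulo subsumption (Lemma \ref{lem:DPcomm}, Part \ref{lem:DPcomm2}) by exploiting that every clause-set reachable from $F$ via DP-reduction is again hitting, hence subsumption-free, so the trailing subsumption-elimination $\rsub$ becomes a no-op.

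First I would record the elementary observation that a hitting clause-set $G \in \Clash$ contains no subsumptions: if $C, D \in G$ with $C \ne D$, then by definition $C \cap \ol{D} \ne \es$, so there is a literal $x$ with $x \in C$ and $\ol{x} \in D$; since clauses are clash-free this forces $x \notin D$ and $\ol{x} \notin C$, whence neither $C \sse D$ nor $D \sse C$. Consequently $\rsub(G) = G$ for all $G \in \Clash$. Next, iterating Lemma \ref{lem:hitDPg} along a variable sequence shows that $\dpi{w_1,\dots,w_k}(F) \in \Clash$ for every $F \in \Clash$ and all $w_1,\dots,w_k \in \Va$; a trivial induction on $k$ suffices, the base case being $F \in \Clash$ itself and the step being a single application of Lemma \ref{lem:hitDPg}. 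In particular, for any $\pi \in S_n$ both $\dpi{v_1,\dots,v_n}(F)$ and $\dpi{v_{\pi(1)},\dots,v_{\pi(n)}}(F)$ lie in $\Clash$ and are therefore fixed by $\rsub$.

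Finally I would apply the operator identity of Lemma \ref{lem:DPcomm}, Part \ref{lem:DPcomm2}, to $F$, yielding
\[
  \rsub\bigl(\dpi{v_1,\dots,v_n}(F)\bigr) = \rsub\bigl(\dpi{v_{\pi(1)},\dots,v_{\pi(n)}}(F)\bigr).
\]
By the previous paragraph the $\rsub$ on each side may be dropped, giving $\dpi{v_1,\dots,v_n}(F) = \dpi{v_{\pi(1)},\dots,v_{\pi(n)}}(F)$, i.e.\ $\pi \in \eqp(F,(v_1,\dots,v_n))$. Since $\pi \in S_n$ was arbitrary, $\eqp(F,(v_1,\dots,v_n)) = S_n$.

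There is essentially no obstacle here: the only point requiring a line of justification is the "hitting $\Ra$ subsumption-free" claim together with the inductive closure of $\Clash$ under DP-reduction, and both are immediate from Lemma \ref{lem:hitDPg} and the definitions. One could also phrase this as the special case of Corollary \ref{cor:dpeqpres} obtained when $\dpi{v_1,\dots,v_n}(F)$ happens to be subsumption-free but not necessarily minimally unsatisfiable, but since $\Clash$ is not contained in $\Musat$ it is cleanest to argue from Lemma \ref{lem:DPcomm} directly as above.
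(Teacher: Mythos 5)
Your proof is correct and follows exactly the paper's route: the paper derives this corollary from the remark that hitting clause-sets contain no subsumptions together with Lemma \ref{lem:hitDPg} (closure of $\Clash$ under DP-reduction) and the commutativity-modulo-subsumption of Lemma \ref{lem:DPcomm}. You have merely spelled out the details (hitting $\Rightarrow$ subsumption-free, induction along the variable sequence, dropping $\rsub$), all of which the paper leaves implicit.
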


\subsection{Iterated sDP-reduction via singular tuples}
\label{sec:itsDP}

Generalising Definition \ref{def:singV1} we consider ``singular tuples'':
\begin{definition}\label{def:singseq}
  Consider $F \in \Musat$. A tuple $\vec{v} = (v_1, \dots, v_n)$ of variables ($n \in \NNZ$) is called \textbf{singular} for $F$ if for all $i \in \tb 1n$ we have that $v_i$ is singular for $\dpi{v_1,\dots,v_{i-1}}(F)$. Note that for a singular tuple $(v_1,\dots,v_n)$ all variables must be different. We call variable $v_i$ ($i \in \tb 1n$) \textbf{$m$-singular} ($m \in \NN$) for $\vec{v}$ and $F$, if $v_i$ is $m$-singular for $\dpi{v_1,\dots,v_{i-1}}(F)$. And the \textbf{singularity-degree tuple} of $\vec{v}$ w.r.t.\ $F$ is the tuple $(m_1,\dots,m_n)$ of natural numbers such that $v_i$ is $m_i$-singular for $\vec{v}$ and $F$.
\end{definition}
\begin{example}\label{exp:singvar_cont}
  Consider $F := \set{\set{a},\set{\ol{a},b},\set{\ol{a},\ol{b}}}$ (recall Example \ref{exp:singvar}). There are $5$ singular tuples for $F$, namely $(),(a),(b),(a,b),(b,a)$. Considering $\vec{v} := (a,b)$, variable $a$ is $2$-singular for $\vec{v}$ and $F$, and $b$ is $1$-singular for $\vec{v}$ and $F$, and thus its singularity-degree sequence is $(2,1)$, while considering $\vec{v}' := (b,a)$, both $a$ and $b$ are  $1$-singular for $\vec{v}'$ and $F$, and thus the singularity-degree sequence is $(1,1)$.
\end{example}
For the understanding of sDP-reduction of $F \in \Musat$, understanding the set of singular tuples for $F$ is an important task. Two basic properties are:
\begin{enumerate}
\item $F$ has only the empty singular tuple iff $F$ is nonsingular.
\item If $(v_1,\dots,v_n)$ is a singular tuple for $F$, then for all $i \in \tb 0n$ the tuple $(v_1,\dots,v_i)$ is also singular for $F$.
\end{enumerate}

\begin{definition}\label{def:singpres}
  Consider $F \in \Musat$ and a singular tuple $(v_1,\dots,v_n)$ for $F$. A permutation $\pi \in S_n$ is called \textbf{singularity-preserving} for $F$ and $(v_1,\dots,v_n)$ (for short: ``s-preserving''), if also $(v_{\pi(1)},\dots,v_{\pi(n)})$ is singular for $F$. The set of all s-preserving $\pi \in S_n$ is denoted by $\bmm{\sgp(F, (v_1,\dots,v_n))} \sse S_n$.
\end{definition}
By Corollary \ref{cor:dpeqpres} we obtain the fundamental lemma, showing that singularity-preservation implies equality-preservation:
\begin{lemma}\label{lem:speqp}
  For $F \in \Musat$ and a singular tuple $\vec{v}$ we have $\sgp(F, \vec{v}) \sse \eqp(F,\vec{v})$.
\end{lemma}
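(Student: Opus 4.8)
The plan is to reduce the claim to Corollary \ref{cor:dpeqpres}. Fix $F \in \Musat$ and a singular tuple $\vec{v} = (v_1,\dots,v_n)$, and let $\pi \in \sgp(F,\vec{v})$, so that $(v_{\pi(1)},\dots,v_{\pi(n)})$ is also a singular tuple for $F$. We want to show $\pi \in \eqp(F,\vec{v})$, i.e.\ $\dpi{v_1,\dots,v_n}(F) = \dpi{v_{\pi(1)},\dots,v_{\pi(n)}}(F)$.

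First I would observe that, since $(v_1,\dots,v_n)$ is a singular tuple for $F \in \Musat$, repeated application of Corollary \ref{cor:sdppresmu} along the chain $F \tsdp \dpi{v_1}(F) \tsdp \dots \tsdp \dpi{v_1,\dots,v_n}(F)$ shows that $\dpi{v_1,\dots,v_n}(F) \in \Musat$ (each step is a singular DP-reduction applied to a minimally unsatisfiable clause-set, hence yields a minimally unsatisfiable clause-set). In exactly the same way, because $(v_{\pi(1)},\dots,v_{\pi(n)})$ is a singular tuple for $F$, we get $\dpi{v_{\pi(1)},\dots,v_{\pi(n)}}(F) \in \Musat$ as well.

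Now apply Corollary \ref{cor:dpeqpres} with the variable sequence $v_1,\dots,v_n$: since $\dpi{v_1,\dots,v_n}(F) \in \Musat$, for any permutation $\rho \in S_n$ we have $\rho \in \eqp(F,(v_1,\dots,v_n))$ if and only if $\dpi{v_{\rho(1)},\dots,v_{\rho(n)}}(F) \in \Musat$. Taking $\rho = \pi$ and using that $\dpi{v_{\pi(1)},\dots,v_{\pi(n)}}(F) \in \Musat$ by the previous paragraph, we conclude $\pi \in \eqp(F,\vec{v})$. This proves $\sgp(F,\vec{v}) \sse \eqp(F,\vec{v})$. \qed

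I do not expect any real obstacle here: the only point requiring a little care is checking that all variables in the definition of a singular tuple are genuinely distinct (noted already in Definition \ref{def:singseq}), so that $(v_{\pi(1)},\dots,v_{\pi(n)})$ really is a reindexing of the same set of variables and Corollary \ref{cor:dpeqpres} applies verbatim. Everything else is just chaining Corollary \ref{cor:sdppresmu} to get membership in $\Musat$ and then invoking the equivalence in Corollary \ref{cor:dpeqpres}.
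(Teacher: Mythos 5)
Your proposal is correct and is essentially the paper's own argument: the paper derives this lemma directly from Corollary \ref{cor:dpeqpres}, with the membership $\dpi{v_1,\dots,v_n}(F) \in \Musat$ and $\dpi{v_{\pi(1)},\dots,v_{\pi(n)}}(F) \in \Musat$ supplied by iterating Corollary \ref{cor:sdppresmu} along each singular tuple, exactly as you do. Nothing is missing.
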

Thus singular tuples with the same variables yield the same reduction-result:
\begin{corollary}\label{cor:singeqvar}
  Consider two singular tuples $(v_1,\dots,v_n), (v_1',\dots,v_n')$ for $F \in \Musat$. If $\set{v_1,\dots,v_n} = \set{v_1',\dots,v_n'}$, then $\dpi{v_1,\dots,v_n}(F) = \dpi{v_1',\dots,v_n'}(F)$.
\end{corollary}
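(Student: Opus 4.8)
The plan is to reduce Corollary~\ref{cor:singeqvar} to Lemma~\ref{lem:speqp} by showing that whenever two singular tuples $(v_1,\dots,v_n)$ and $(v_1',\dots,v_n')$ for $F$ involve the same underlying set of variables, one is reachable from the other by an s-preserving permutation. Concretely, I would set $\vec v = (v_1,\dots,v_n)$, and find $\pi \in S_n$ with $v_{\pi(i)} = v_i'$ for all $i$; such a $\pi$ exists and is unique since the variables in each tuple are pairwise distinct (as noted after Definition~\ref{def:singseq}) and the two sets coincide. Then the claim that $\pi \in \sgp(F,\vec v)$ is exactly the statement that $(v_{\pi(1)},\dots,v_{\pi(n)}) = (v_1',\dots,v_n')$ is singular for $F$ --- which holds by hypothesis. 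Hence $\pi \in \sgp(F,\vec v) \sse \eqp(F,\vec v)$ by Lemma~\ref{lem:speqp}, and therefore $\dpi{v_1,\dots,v_n}(F) = \dpi{v_{\pi(1)},\dots,v_{\pi(n)}}(F) = \dpi{v_1',\dots,v_n'}(F)$, which is the desired equality.

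So the proof is essentially a one-liner once the bookkeeping with permutations is set up: the real content has already been packaged into Lemma~\ref{lem:speqp} (via Corollary~\ref{cor:dpeqpres} and the commutativity-modulo-subsumption Lemma~\ref{lem:DPcomm}), and the remaining step is purely combinatorial matching of the two tuples. I would write it as: choose the unique $\pi \in S_n$ with $v_{\pi(i)} = v_i'$; observe $(v_{\pi(1)},\dots,v_{\pi(n)})$ equals $(v_1',\dots,v_n')$ and is therefore a singular tuple for $F$, so $\pi$ is s-preserving; apply Lemma~\ref{lem:speqp} to conclude $\pi$ is eq-preserving; unfold the definition of eq-preserving to get $\dpi{v_1,\dots,v_n}(F) = \dpi{v_1',\dots,v_n'}(F)$.

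The only point deserving a second's care is that both tuples really do have length $n$ (so that $\pi$ is a genuine permutation of $\tb 1n$): this is forced by the hypothesis $\set{v_1,\dots,v_n} = \set{v_1',\dots,v_n'}$ together with the distinctness of entries in a singular tuple, so there is no hidden obstacle. Thus I expect no genuine difficulty here --- the ``hard part'', namely that permuting a singular tuple among its rearrangements that remain singular leaves the DP-reduction result unchanged, has been done already in Lemma~\ref{lem:speqp}, and Corollary~\ref{cor:singeqvar} is just its immediate consequence.

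\begin{proof}
Since both tuples are singular for $F$, their entries are pairwise distinct, and since $\set{v_1,\dots,v_n} = \set{v_1',\dots,v_n'}$, both tuples have the same length $n$ and there is a (unique) permutation $\pi \in S_n$ with $v_{\pi(i)} = v_i'$ for all $i \in \tb 1n$. Then $(v_{\pi(1)},\dots,v_{\pi(n)}) = (v_1',\dots,v_n')$ is a singular tuple for $F$ by assumption, so $\pi \in \sgp(F,(v_1,\dots,v_n))$. By Lemma~\ref{lem:speqp} we get $\pi \in \eqp(F,(v_1,\dots,v_n))$, which by Definition~\ref{def:DPeqpres} means $\dpi{v_1,\dots,v_n}(F) = \dpi{v_{\pi(1)},\dots,v_{\pi(n)}}(F) = \dpi{v_1',\dots,v_n'}(F)$. \qed
\end{proof}
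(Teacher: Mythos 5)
Your proof is correct and is exactly the derivation the paper intends: Corollary~\ref{cor:singeqvar} is stated there as an immediate consequence of Lemma~\ref{lem:speqp} (``Thus singular tuples with the same variables yield the same reduction-result''), and your permutation bookkeeping fills in precisely that one-line argument. No issues.
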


Preparing our results on singularity-preserving permutations, we consider first ``homogeneous'' singular pairs in the following two (easy) lemmas.
\begin{lemma}\label{lem:22sing}
  Consider $F \in \Musat$ and two different non-1-singular variables $v, w$ for $F$. Let $C$ be the main clause for $v$, and let $D$ be the main clause for $w$. There are precisely two cases now:
  \begin{enumerate}
  \item\label{lem:22sing1} If $C = D$, then $w \notin \varsing(\dpi{v}(F))$ and $v \notin \varsing(\dpi{w}(F))$.
  \item\label{lem:22sing2} If $C \not= D$, then $w \in \varnosing(\dpi{v}(F))$ and $v \in \varnosing(\dpi{w}(F))$.
  \end{enumerate}
\end{lemma}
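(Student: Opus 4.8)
The plan is to analyze how each of $v,w$ sits in the clauses of $F$, using that both are non-1-singular, hence $\ldeg_F(v)=\ldeg_F(\overline{v})\ge\dots$—no, more precisely each has its singular literal occurring exactly once, with $m\ge 2$ side clauses. Write $x$ for the singular literal of $v$ with main clause $C$, side clauses $D_1,\dots,D_m$ ($m\ge 2$); write $y$ for the singular literal of $w$ with main clause $D$, side clauses $E_1,\dots,E_k$ ($k\ge 2$). The first observation is that $\var(w)\notin\var(C\res D_i)$ unless $\var(w)\in\var(C)$: indeed by Lemma \ref{lem:singDpMU}, Part \ref{lem:singDpMU3a}, $C\res D_i=D_i\sm\set{\overline{x}}\cup(C\sm\set{x})$, so the only variables appearing in resolvents that were not already in some $D_i$ are those in $C$. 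So the behaviour of $w$ in $\dpi{v}(F)$ is governed entirely by whether $\var(w)\in\var(C)$.

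For case \ref{lem:22sing1}, $C=D$ means $w\in\var(C)$, say $y\in C$ (the case $\overline{y}\in C$ is symmetric, but actually since $y$ is the singular literal of $w$ and occurs once, and $C=D$ is \emph{its} main clause, we have $y\in C$). Then in $\dpi{v}(F)$ the clause $C$ is gone, replaced by $C\res D_1,\dots,C\res D_m$, each of which contains $y$ (since $y\in C\sm\set{x}$, as $\var(y)\ne\var(x)=v$; here I use that $v\ne w$). So $\ldeg_{\dpi{v}(F)}(y)\ge m\ge 2$ while the clauses containing $\overline{y}$, namely $E_1,\dots,E_k$, are unchanged in count (none of them contains $v$—if some $E_j$ contained $v$ it would be $C$ or a $D_i$, but $C=D$ is the main clause of $w$, not a side clause, and the $D_i$ contain $\overline{x}$; an $E_j$ could coincide with a $D_i$, but then it still survives in $\dpi{v}(F)$ possibly merged, and in any case $\ldeg_{\dpi{v}(F)}(\overline{y})\ge 1$). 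Hence $\min(\ldeg_{\dpi{v}(F)}(y),\ldeg_{\dpi{v}(F)}(\overline{y}))\ge\min(m,k\text{-ish})$; I need to argue this minimum is $\ge 2$, i.e.\ that $\overline{y}$ still occurs at least twice in $\dpi{v}(F)$. This follows because the $k\ge 2$ side clauses $E_j$ of $w$ are pairwise distinct clauses of $F$ not equal to $C$, and after DP-reduction on $v$ distinct clauses not containing $v$ stay distinct (no new subsumptions among them, as $F\in\Musat$ gives $\dpi{v}(F)\in\Musat$ by Lemma \ref{lem:singDpMU}). So $w\notin\varsing(\dpi{v}(F))$, and by the symmetric argument $v\notin\varsing(\dpi{w}(F))$.

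For case \ref{lem:22sing2}, $C\ne D$: the main clause $D$ of $w$ is not $C$ and is not among the side clauses $D_1,\dots,D_m$ of $v$ either (a side clause contains $\overline{x}$; if $D$ were a $D_i$ then $\var(v)\in\var(D)$, possible a priori—so I do need to split on whether $v\in\var(D)$). Subcase $v\notin\var(D)$: then $D$ survives unchanged in $\dpi{v}(F)$, so $y$ still occurs exactly once there, i.e.\ $\ldeg_{\dpi{v}(F)}(y)=1$; meanwhile the $\overline{y}$-clauses $E_1,\dots,E_k$ either survive or, if some $E_j$ equals a $D_i$, get replaced by $C\res D_i\supsetneq D_i$ which still contains $\overline{y}$ (again $\var(\overline{y})\ne v$). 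So $\ldeg_{\dpi{v}(F)}(\overline{y})\ge k\ge 2$, giving $w$ singular in $\dpi{v}(F)$ with $\vdeg\ge 3$, i.e.\ $w\in\varnosing(\dpi{v}(F))$—provided $w$ is still singular, which needs $\ldeg(y)$ to stay $1$, i.e.\ $y\notin C$; but if $y\in C$ then $v\in\var(D)$ would... no. The genuinely delicate subcase, and the main obstacle, is $v\in\var(D)$, i.e.\ $D$ is one of the side clauses $D_i$ or contains $\overline x$ while being $D$: here $D$ is consumed and replaced by $C\res D\supsetneq D$, so $y\in C\res D$ but also $y$ might now appear in other resolvents $C\res D_j$ if $y\in C$—so I must show $y\notin C$. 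Since $y$ is $w$'s singular literal occurring once, and $D$ (its main clause) is a side clause of $v$ hence contains $\overline x$, whereas $C$ contains $x$; if $y\in C$ then $\overline y\notin$ any clause with... Actually $y\in C$ would make $y$ occur in $C$ and in $D$, contradicting $\ldeg_F(y)=1$. So $y\notin C$, hence in $\dpi{v}(F)$ the literal $y$ appears only in $C\res D$ (and possibly a merged copy), so $\ldeg_{\dpi{v}(F)}(y)\le\dots$ is still $1$-ish and $\ge 1$, while $\overline y$-degree stays $\ge k\ge 2$: thus $w\in\varnosing(\dpi{v}(F))$. The symmetric argument handles $v$ in $\dpi{w}(F)$. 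The main work is this careful bookkeeping of which clauses merge under DP-reduction and confirming the relevant literal degrees; Lemma \ref{lem:singDpMU} (Parts \ref{lem:singDpMU3a}, \ref{lem:singDpMU3b}, \ref{lem:singDpMU3c}) and Lemma \ref{lem:sdpld} do most of the heavy lifting, and the two cases $C=D$ versus $C\ne D$ are genuinely exhaustive since $C,D$ are specific named clauses of $F$.
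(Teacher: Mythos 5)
Your proof is correct and follows essentially the same route as the paper, which settles both cases by the literal-degree bookkeeping of Lemma \ref{lem:sdpld} (Part \ref{lem:sdpld3a} for $C=D$, Parts \ref{lem:sdpld1} and \ref{lem:sdpld5} for $C\not=D$); your case analysis is a hand-expansion of exactly those degree computations. The only spots to tighten concern $\ol{y}$: in your Case 1 the worry that some $E_j$ coincides with a side clause $D_i$ of $v$ is vacuous, since $\ol{y}\in D_i$ together with $y\in C$ would give a second clash between $C$ and $D_i$, contradicting Lemma \ref{lem:singDpMU}, Part \ref{lem:singDpMU3a} (so $\ldeg_{\dpi{v}(F)}(\ol{y})=k\ge 2$ outright by Lemma \ref{lem:sdpld}, Part \ref{lem:sdpld1}), and in your Case 2 the asserted bound $\ldeg_{\dpi{v}(F)}(\ol{y})\ge k$ can fail when $\ol{y}\in C$ (the correct bound is $\ge\max(m,k-1)\ge 2$, Lemma \ref{lem:sdpld}, Part \ref{lem:sdpld3a}), which still suffices --- both repairs are exactly what the cited lemma provides.
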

\begin{proof}
  Part \ref{lem:22sing1} follows by Lemma \ref{lem:sdpld}, Part \ref{lem:sdpld3a}, and Part \ref{lem:22sing2} follows by Part \ref{lem:sdpld5} of that lemma (to see that the complements occur at least twice after the DP-reductions). \qed
\end{proof}

\begin{example}\label{exp:twonosingv}
  We illustrate the two cases of Lemma \ref{lem:22sing}:
  \begin{enumerate}
  \item Let $F := \set{\set{v,w},\set{\ol{v},a},\set{\ol{v},\ol{a}},\set{\ol{w},b},\set{\ol{w},\ol{b}}} \in \Musati{\delta=1}$. Then $C = D = \set{v,w}$, and $w$ is not singular in $\dpi{v}(F) = \set{\set{w,a},\set{w,\ol{a}},\set{\ol{w},b},\set{\ol{w},\ol{b}}}$, and $v$ is not singular in $\dpi{w}(F) = \set{\set{\ol{v},a},\set{\ol{v},\ol{a}},\set{v,b},\set{v,\ol{b}}}$.
  \item Let $F := \set{\set{v,a},\set{w,\ol{a}},\set{\ol{v},a,b},\set{\ol{v},a,\ol{b}},\set{\ol{w},\ol{a},b},\set{\ol{w},\ol{a},\ol{b}}} \in \Smusati{\delta=2}$. Then $C = \set{v,a} \not= D = \set{w,\ol{a}}$, and now $w$ is singular in $\dpi{v}(F) = \set{\set{w,\ol{a}},\set{a,b},\set{a,\ol{b}},\set{\ol{w},\ol{a},b},\set{\ol{w},\ol{a},\ol{b}}}$, and $v$ is singular in $\dpi{w}(F) = \set{\set{v,a},\set{\ol{v},a,b},\set{\ol{v},a,\ol{b}},\set{\ol{a},b},\set{\ol{a},\ol{b}}}$.
  \end{enumerate}
\end{example}

\begin{lemma}\label{lem:11sing}
  Consider $F \in \Musat$ and a singular tuple $(v,w)$ for $F$ with singularity-degree tuple $(1,1)$. Let $C, D \in F$ be the two occurrences of $v$.
  \begin{enumerate}
  \item\label{lem:11sing1} Assume $w$ is not 1-singular in $F$:
    \begin{enumerate}
    \item Then $w$ is $2$-singular in $F$. Let $E_0 \in F$ be the main-clause of $w$, and let $E_1, E_2 \in F$ be the two side-clauses.
    \item We have $\set{E_1, E_2} = \set{C,D}$.
    \item So $v$ is $1$-singular in $\dpi{w}(F)$.
    \item Thus $(w,v)$ is a singular tuple with singularity-degree tuple $(2,1)$.
    \end{enumerate}
  \item\label{lem:11sing2} Otherwise $w$ is 1-singular in $F$.
    \begin{enumerate}
    \item $v$ is $1$-singular in $\dpi{w}(F)$.
    \item Thus $(w,v)$ is a singular tuple with singularity-degree tuple $(1,1)$.
    \item Let $E_1, E_2$ be the two occurrences of $w$ in $F$: $\abs{\set{C,D} \cap \set{E_1,E_2}} \le 1$.
    \end{enumerate}
  \end{enumerate}
\end{lemma}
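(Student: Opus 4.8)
The plan is to follow the case distinction of the statement and reduce each part to the degree-bookkeeping of Lemma~\ref{lem:sdpld} together with the characterisation Lemma~\ref{lem:singDpMU}. As a preliminary observation, note that $w$ is in any case singular for $F$: since $v$ is $1$-singular for $F$, Corollary~\ref{cor:propdps3}, Part~\ref{cor:propdps312} yields $\varosing(\dpi{v}(F)) \sse \varosing(F) \cup \varnosing(F) = \varsing(F)$, and $w \in \varosing(\dpi{v}(F))$ by the hypothesis on the singularity-degree tuple; so the dichotomy ``$w \in \varnosing(F)$'' (Part~\ref{lem:11sing1}) versus ``$w \in \varosing(F)$'' (Part~\ref{lem:11sing2}) is exhaustive, and it is precisely this that turns ``$w$ is not $1$-singular'' into ``$w$ is $m$-singular for some $m \ge 2$''. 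Fix once and for all the convention $v \in C$ and $\ol{v} \in D$; since $v$ is $1$-singular, $C$ and $D$ are the only two clauses of $F$ meeting the variable $v$, and $C \ne D$ (they clash in $v$), while $w \ne v$ by the definition of singular tuple.

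For Part~\ref{lem:11sing1}, assume $w$ is $m$-singular in $F$ with $m \ge 2$, and let $y$ be its singular literal, so $\ldeg_F(\ol{y}) = m$. Since $w$ is $1$-singular in $\dpi{v}(F)$ we have $\ldeg_{\dpi{v}(F)}(\ol{y}) = 1 < m$; by Lemma~\ref{lem:sdpld}, Part~\ref{lem:sdpld1}, the DP-step on $v$ leaves the degree of every literal outside $C$ unchanged, so necessarily $\ol{y} \in C$, and then Part~\ref{lem:sdpld3} gives $\ldeg_{\dpi{v}(F)}(\ol{y}) = m - 1 + p$ with $p \in \set{0,1}$ (there is only the one side clause $D$), whence $m = 2$ and $p = 0$, i.e.\ $\ol{y} \in D$ too. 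Therefore $C$ and $D$ are precisely the two clauses of $F$ containing $\ol{y}$, that is, the two side clauses of the now $2$-singular variable $w$; this is sub-claims (a)--(b) of Part~\ref{lem:11sing1}, with $E_0$ the unique clause containing $y$, and $v \notin \var(E_0)$ since $C, D$ are the only clauses meeting $v$. For sub-claim (c), by Lemma~\ref{lem:singDpMU} the step $F \leadsto \dpi{w}(F)$ has $\dpi{w}(F) \in \Musat$ with the resolvents $E_0 \res C$ and $E_0 \res D$ present, distinct, and not coinciding with any other clause; as $v \in C$, $\ol{v} \in D$ and $v \notin \var(E_0)$, the variable $v$ occurs in $\dpi{w}(F)$ exactly once positively (in $E_0 \res C$) and once negatively (in $E_0 \res D$), so $v$ is $1$-singular there, and sub-claim (d) follows at once.

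For Part~\ref{lem:11sing2}, $w$ is $1$-singular in $F$, and sub-claim (a) is exactly Corollary~\ref{cor:propdps3}, Part~\ref{cor:propdps321}, applied to the $1$-singular variable $w$: $v \in \varosing(F) \sm \set{w} \sse \varosing(\dpi{w}(F))$; then sub-claim (b) is immediate. For sub-claim (c), let $w \in E_1$ and $\ol{w} \in E_2$ be the two occurrences of $w$, and suppose for contradiction $\abs{\set{C,D} \cap \set{E_1,E_2}} \ge 2$. Since $C \ne D$ and $E_1 \ne E_2$, this forces $\set{C,D} = \set{E_1,E_2}$, so one of $C, D$ contains $w$ and the other contains $\ol{w}$; together with $v \in C$, $\ol{v} \in D$ this gives $\abs{C \cap \ol{D}} \ge 2$ (a clash on $v$ and a clash on $w$), contradicting Lemma~\ref{lem:singDpMU}, Part~\ref{lem:singDpMU3a} for the $1$-singular variable $v$ of $F \in \Musat$.

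The only point needing care is the degree bookkeeping in Part~\ref{lem:11sing1}: one must keep track that the DP-step on $v$ affects only the degrees of literals lying in the main clause $C$ (and deletes $v, \ol{v}$), and it is exactly this, combined with the fact that there is a single side clause, that pins $m$ down to $2$ and forces $\ol{y} \in D$. Everything else is a direct reading of Lemmas~\ref{lem:sdpld} and~\ref{lem:singDpMU}.
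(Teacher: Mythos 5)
Your proof is correct and follows essentially the same route as the paper: Part~\ref{lem:11sing1} via the degree bookkeeping of Lemma~\ref{lem:sdpld} (pinning $m=2$ and forcing the complemented singular literal of $w$ into both $C$ and $D$), and Part~\ref{lem:11sing2} via Corollary~\ref{cor:propdps3}, Part~\ref{cor:propdps321}, together with the observation that $\set{C,D}=\set{E_1,E_2}$ would give two clashes between the occurrences of $v$, contradicting Lemma~\ref{lem:singDpMU}, Part~\ref{lem:singDpMU3a}. The only (welcome) additions are the explicit justification that $w$ is singular for $F$ at all, and the direct inspection of the resolvents $E_0 \res C$, $E_0 \res D$ to see that $v$ stays $1$-singular.
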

\begin{proof}
  For Part \ref{lem:11sing1} we use Lemma \ref{lem:sdpld}, Part \ref{lem:sdpld4}, and we see that $w$ is $2$-singular for $F$ (sDP-reduction can only reduce literal-degrees by one), and the complement of the singular literal of $w$ must occur in all occurrences of variable $v$; we also see that DP-reduction on $w$ does not change the degree of $v$. For Part \ref{lem:11sing2} we use Corollary \ref{cor:propdps3}, Part \ref{cor:propdps321}, together we the fact that the occurrences of two 1-singular variables can not completely coincide, since then we had more than one clash between the main clause and the side clause (see Lemma \ref{lem:singDpMU}, Part \ref{lem:singDpMU3a}). \qed
\end{proof}

\begin{example}\label{exp:twoosingv}
  We illustrate the two cases of Lemma \ref{lem:11sing}:
  \begin{enumerate}
  \item Let $F := \set{\set{v,w},\set{\ol{v},w},\set{\ol{w}}} \in \Smusati{\delta=1}$, with $\dpi{v}(F) = \set{\set{w},\set{\ol{w}}}$. Then $\set{C,D} = \set{\set{v,w},\set{\ol{v},w}}$, and $E_0 = \set{\ol{w}}$ and $\set{E_1,E_2} = \set{C,D}$, where $\dpi{w}(F) = \set{\set{v},\set{\ol{v}}}$.
  \item We give examples for both cases of $\abs{\set{C,D} \cap \set{E_1,E_2}} \in \set{0,1}$:
    \begin{enumerate}
    \item Let $F := \set{\set{v,a},\set{\ol{v},a},\set{w,\ol{a}},\set{\ol{w},\ol{a}}} \in \Smusati{\delta=1}$. Then $\dpi{v}(F) = \set{\set{a},\set{w,\ol{a}},\set{\ol{w},\ol{a}}}$ and $\dpi{w}(F) = \set{\set{v,a},\set{\ol{v},a},\set{\ol{a}}}$, where $\set{C,D} = \set{\set{v,a},\set{\ol{v},a}}$ and $\set{E_1,E_2} = \set{\set{w,\ol{a}},\set{\ol{w},\ol{a}}}$.
    \item Let $F := \set{\set{v},\set{\ol{v},w},\set{\ol{w}}} \in \Musati{\delta=1}$. Then $\dpi{v}(F) = \set{\set{w},\set{\ol{w}}}$ and $\dpi{w}(F) = \set{\set{v},\set{\ol{v}}}$, where $\set{C,D} = \set{\set{v},\set{\ol{v},w}}$ and $\set{E_1,E_2} = \set{\set{\ol{v},w},\set{\ol{w}}}$.
    \end{enumerate}
  \end{enumerate}
\end{example}

Now we are ready to show the central ``exchange theorem'', characterising s-preserving neighbour-exchanges (recall that every permutation is a composition of neighbour-exchanges): The gist of Theorem \ref{thm:nbesp} is that in most cases neighbours in a singular tuple can be exchanged safely (i.e., s-preserving), except of the cases where a 1-singular DP-reduction is followed by a non-1-singular DP-reduction (Case \ref{thm:nbesp32}).
\begin{theorem}\label{thm:nbesp}
  Consider $F \in \Musat$ and a singular tuple $\vec{v} = (v_1, \dots, v_n)$ with $n \ge 2$, and let $(m_1,\dots,m_n)$ be the singularity-degree tuple of $\vec{v}$ w.r.t.\ $F$. Consider $i \in \tb 1{n-1}$, and let $\pi \in S_n$ be the neighbour-exchange $i \lra i+1$ (i.e., $\pi(j) = j$ for $j \in \tb 1n \sm \set{i,i+1}$, while $\pi(i) = i+1$ and $\pi(i+1) = i$). Let $(m_1',\dots,m_n')$ be the singularity-degree tuple of $\vec{v}'$ w.r.t.\ $F$, where $\vec{v}' := (v_{\pi(1)},\dots,v_{\pi(n)})$, in case of $\pi \in \sgp(F,\vec{v})$. The task is to characterise when $\pi \in \sgp(F,\vec{v})$ holds; we also need to be able to apply such s-preserving neighbour-exchanges consecutively, by controlling the changes in the singularity-degrees.
  \begin{enumerate}
  \item\label{thm:nbesp1} If $\pi \in \sgp(F,\vec{v})$, then for $j \in \tb 1n \sm \set{i,i+1}$ we have $m_j' = m_j$.
  \item\label{thm:nbesp2} Assume $m_i \ge 2$.
    \begin{enumerate}
    \item\label{thm:nbesp21} $\pi \in \sgp(F,\vec{v})$.
    \item\label{thm:nbesp22newa} $m_i' \le m_{i+1}+1$.
    \item\label{thm:nbesp22newb} $m_{i+1}' \ge m_i-1$.
    \item\label{thm:nbesp23} If $m_{i+1} = 1$, then $m_i' = 1$.
    \item\label{thm:nbesp23newa} If $m_{i+1} \ge 2$, then $m_{i+1}' \ge 2$.
    \end{enumerate}
  \item\label{thm:nbesp3} Assume $m_i = 1$.
    \begin{enumerate}
    \item\label{thm:nbesp31} Assume $m_{i+1} = 1$.
      \begin{enumerate}
      \item\label{thm:nbesp311} $\pi \in \sgp(F,\vec{v})$.
      \item\label{thm:nbesp312} $m_{i+1}' = 1$ and $m_i' \in \set{1,2}$.
      \end{enumerate}
    \item\label{thm:nbesp32} Assume $m_{i+1} \ge 2$.
      \begin{enumerate}
      \item\label{thm:nbesp321} $\pi \in \sgp(F,\vec{v})$ if and only if $v_{i+1}$ is singular in $\dpi{v_1,\dots,v_{i-1}}(F)$.
      \item\label{thm:nbesp322} If $\pi \in \sgp(F,\vec{v})$, then $m_i' \ge 2$.
      \end{enumerate}
    \end{enumerate}
  \end{enumerate}
\end{theorem}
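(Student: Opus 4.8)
The plan is to pass to $G := \dpi{v_1,\dots,v_{i-1}}(F)$, which lies in $\Musat$ since $(v_1,\dots,v_{i-1})$ is a singular tuple for $F$ and each $\dpi{v_1,\dots,v_k}(F) \in \Musat$ by Corollary \ref{cor:sdppresmu}. Writing $w := v_i$ and $u := v_{i+1}$, the pair $(w,u)$ is singular for $G$ with singularity-degree tuple $(m_i,m_{i+1})$, and $\pi \in \sgp(F,\vec v)$ holds precisely when $(u,w)$ is again a singular pair for $G$, i.e. when $u$ is singular for $G$ and $w$ is singular for $\dpi u(G)$; moreover $m_i' = \vdeg_G(u)-1$ and $m_{i+1}' = \vdeg_{\dpi u(G)}(w)-1$. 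Part \ref{thm:nbesp1} is then immediate: for $j<i$ nothing before stage $j$ changes, and for $j>i+1$ the prefixes $(v_1,\dots,v_{i-1},w,u)$ and $(v_1,\dots,v_{i-1},u,w)$ are singular tuples with the same variable set, so by Corollary \ref{cor:singeqvar} the clause-set seen at stage $j$ is the same, whence $m_j'=m_j$.

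\textbf{The routine cases.} I would then split on whether $w$ is non-1-singular ($m_i \ge 2$, Part \ref{thm:nbesp2}) or 1-singular ($m_i=1$, Part \ref{thm:nbesp3}) in $G$, and within each on $m_{i+1}$, using as tools Corollary \ref{cor:propdps3} and Lemma \ref{lem:sdpld}. If $m_i\ge 2$, then $u\in\varsing(\dpi w(G))\sse\varsing(G)$ by Corollary \ref{cor:propdps3}(\ref{cor:propdps311}); if additionally $m_{i+1}=1$, the same corollary gives $u\in\varosing(G)$, i.e. $m_i'=1$ (Part \ref{thm:nbesp23}). That $w$ stays singular for $\dpi u(G)$ follows from Corollary \ref{cor:propdps3}(\ref{cor:propdps321}) when $u$ is 1-singular in $G$, and — when $u$ is non-1-singular in $G$, so both $w,u$ are non-1-singular — from Lemma \ref{lem:22sing}: the main clauses cannot coincide (Case \ref{lem:22sing1} would contradict $u\in\varsing(\dpi w(G))$), so Case \ref{lem:22sing2} applies and even yields $w\in\varnosing(\dpi u(G))$, hence $m_{i+1}'\ge 2$. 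The inequalities \ref{thm:nbesp22newa}, \ref{thm:nbesp22newb} are one-step degree estimates from Lemma \ref{lem:sdpld}, Parts \ref{lem:sdpld1} and \ref{lem:sdpld3}: under one sDP-reduction a fixed variable loses at most one from its variable-degree (at most one of its two literals lies in the main clause, and that literal's degree drops by at most one). For Part \ref{thm:nbesp3} with $m_{i+1}=1$, the pair $(w,u)$ in $G$ is exactly the setting of Lemma \ref{lem:11sing}, giving $(u,w)$ singular with degree tuple $(2,1)$ or $(1,1)$; for $m_{i+1}\ge 2$, $(u,w)$ is singular iff $u$ is singular for $G$ (forward direction trivial; for the converse, once $u$ is singular for $G$, $w$ remains singular in $\dpi u(G)$ by Corollary \ref{cor:propdps3}(\ref{cor:propdps321}) or (\ref{cor:propdps322}), using that $\varosing(G)$ and $\varnosing(G)$ are disjoint), and $m_i'\ge 2$ because a 1-singular reduction on $w$ cannot raise $\vdeg(u)$ by Lemma \ref{lem:sdpld}, Part \ref{lem:sdpld7}, so $\vdeg_G(u)\ge\vdeg_{\dpi w(G)}(u)\ge 3$.

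\textbf{The main obstacle.} The one genuinely non-monotone configuration is Part \ref{thm:nbesp23newa} when $m_i\ge 2$, $u$ is 1-singular in $G$, yet $m_{i+1}\ge 2$ — this occurs precisely because reducing on the non-1-singular $w$ can raise $\vdeg(u)$ (Lemma \ref{lem:sdpld}, Parts \ref{lem:sdpld3} and \ref{lem:sdpld6}). Here I would argue directly: for $u$'s degree to jump, one literal $z\in\set{u,\ol u}$ must lie in the main clause $C$ of $w$, and since $u$ is 1-singular, $C$ is then the unique clause of $G$ containing $z$; hence when we reduce on $u$ first, its main clause is $C$ and its single side clause is the unique clause $D_u$ containing $\ol z$. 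Now $D_u$ is either a side clause of $w$ (so contains $\ol{x_w}$ for the singular literal $x_w$ of $w$, hence not $x_w$) or contains no $w$-literal at all; in either case $x_w\in C$ and $x_w\notin D_u$. Tracking $x_w$ and $\ol{x_w}$ through Lemma \ref{lem:sdpld}, Parts \ref{lem:sdpld1} and \ref{lem:sdpld3}, gives $\ldeg_{\dpi u(G)}(x_w)=1$ and $\ldeg_{\dpi u(G)}(\ol{x_w})=\ldeg_G(\ol{x_w})=m_i$, so $\vdeg_{\dpi u(G)}(w)=m_i+1\ge 3$, i.e. $w$ is non-1-singular in $\dpi u(G)$ and $m_{i+1}'=m_i\ge 2$. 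Collecting the cases yields all the stated equalities and inequalities.
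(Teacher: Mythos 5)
Your proposal is correct and follows essentially the same route as the paper: reduce to the pair $(v_i,v_{i+1})$ acting on $G=\dpi{v_1,\dots,v_{i-1}}(F)$, dispose of Part \ref{thm:nbesp1} via equality of the reduction results (Corollary \ref{cor:singeqvar} / Lemma \ref{lem:speqp}), and handle the cases with Corollary \ref{cor:propdps3}, Lemma \ref{lem:sdpld}, Lemma \ref{lem:22sing} and Lemma \ref{lem:11sing}, with a direct literal-tracking argument for the delicate subcase of Part \ref{thm:nbesp23newa} where $m_i'=1$. The only (harmless) divergence is that in that subcase you compute $m_{i+1}'=m_i$ exactly by following $x_w$ and $\ol{x_w}$ through $\dpi{u}$, whereas the paper derives a contradiction from assuming $m_{i+1}'=1$; both arguments track the same literals.
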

\begin{proof}
Part \ref{thm:nbesp1} follows by Lemma \ref{lem:speqp}. For the remainder let $F_0 := F$, and $F_i := \dpi{v_i}(F_{i-1})$ for $i \in \tb 1n$.

Now consider Part \ref{thm:nbesp2}; so we assume $m_i \ge 2$ here. For Part \ref{thm:nbesp21} we need to show that $v_{i+1}$ is singular for $F_i$ and $v_i$ is singular for $\dpi{v_{i+1}}(F_i)$: The former follows by Corollary \ref{cor:propdps3}, Part \ref{cor:propdps311}, while the latter follows by Part \ref{cor:propdps321} of that Corollary (if $v_{i+1}$ is 1-singular for $F_i$) and by both parts of Lemma \ref{lem:22sing} (if $v_{i+1}$ is non-1-singular for $F_i$; the main clauses for $v_i, v_{i+1}$ in $F_i$ can not be the same).

Part \ref{thm:nbesp22newa}, \ref{thm:nbesp22newb} follow by Part \ref{lem:sdpld4} of Lemma \ref{lem:sdpld}, while Part \ref{thm:nbesp23} follows by Part \ref{lem:sdpld5} of that Lemma. Now consider Part \ref{thm:nbesp23newa}, and so we assume $m_{i+1} \ge 2$. If $m_i' \ge 2$ then $m_{i+1}' \ge 2$ follows from Part \ref{cor:propdps311} of Corollary \ref{cor:propdps3}; it remains the case $m_i' = 1$. Let $x$ be the singular literal of $v_i$ in $F_i$, and let $y$ be the singular literal of $v_{i+1}$ in $F_{i+1}$. Since sDP-reduction by $v_i$ in $F_i$ increased the number of occurrences of $\ol{y}$, for the main clause $C$ of $v_i$ in $F_i$ (thus $x \in C$) we must have $\ol{y} \in C$. Let $D$ be the main clause of $v_{i+1}$ in $F_i$, that is, $y \in D$ (note that $C, D$ are the only occurrences of variable $v_{i+1}$ in $F_i$). If $m_{i+1}' = 1$ would be the case, then we would have $\ol{x} \in C, D$ contradicting $x \in C$.

Finally consider Part \ref{thm:nbesp3}, assuming $m_i = 1$. Part \ref{thm:nbesp31} follows with Lemma \ref{lem:11sing}. For Part \ref{thm:nbesp32} assume $m_{i+1} \ge 2$. For Part \ref{thm:nbesp321} the direction from left to right follows by definition, while the direction from right to left follows by Part \ref{cor:propdps322} of Lemma \ref{cor:propdps3}. And Part \ref{thm:nbesp322} by Part \ref{lem:sdpld7} of Lemma \ref{lem:sdpld}. \qed
\end{proof}
We remark that for Part \ref{thm:nbesp23newa} of Theorem \ref{thm:nbesp}, in the conference version we also asserted that $m_i' \ge 2$ would be the case (Lemma 26, Part 2, in \cite{KullmannZhao2012ConfluenceC}), which is false as shown in Example \ref{exp:turn12}.

\begin{corollary}\label{cor:suffcondswap}
  Consider $F \in \Musat$ and a singular tuple $\vec{v} = (v_1,\dots,v_n)$ ($n \ge 2$) with $1 \le i < n$. Then a sufficient condition for the neighbour exchange $i \lra i+1$ to be s-preserving is:\vspace{-1ex}
  \begin{center}
    $v_i$ is non-1-singular for $\vec{v}$, or $v_{i+1}$ is 1-singular for $\vec{v}$,\\
    or $v_{i+1}$ is singular for $\dpi{v_1,\dots,v_{i-1}}(F)$.
  \end{center}
\end{corollary}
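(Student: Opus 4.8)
The plan is to obtain this as an immediate corollary of Theorem~\ref{thm:nbesp} by a short case distinction on the singularity degree $m_i$ of $v_i$ with respect to $\vec{v}$. Write $(m_1,\dots,m_n)$ for the singularity-degree tuple of $\vec{v}$ w.r.t.\ $F$, and let $\pi \in S_n$ be the neighbour-exchange $i \lra i+1$; the goal is to show $\pi \in \sgp(F,\vec{v})$ whenever one of the three listed conditions holds.

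First I would dispose of the case $m_i \ge 2$, that is, $v_i$ is non-1-singular for $\vec{v}$ (the first disjunct of the hypothesis). Here Theorem~\ref{thm:nbesp}, Part~\ref{thm:nbesp2}, Part~\ref{thm:nbesp21} applies directly and gives $\pi \in \sgp(F,\vec{v})$, with no further conditions to verify. It then remains to treat $m_i = 1$: in this case the first disjunct fails, so the hypothesis forces one of the other two to hold. If $v_{i+1}$ is 1-singular for $\vec{v}$, i.e., $m_{i+1} = 1$, then Theorem~\ref{thm:nbesp}, Part~\ref{thm:nbesp3}, Part~\ref{thm:nbesp311} (the case $m_i = m_{i+1} = 1$) yields $\pi \in \sgp(F,\vec{v})$. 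Otherwise $m_{i+1} \ge 2$, and since both ``$v_i$ non-1-singular'' and ``$v_{i+1}$ 1-singular'' have been excluded, the surviving disjunct must be that $v_{i+1}$ is singular for $\dpi{v_1,\dots,v_{i-1}}(F)$; feeding this into the ``if and only if'' of Theorem~\ref{thm:nbesp}, Part~\ref{thm:nbesp32}, Part~\ref{thm:nbesp321} again gives $\pi \in \sgp(F,\vec{v})$. This exhausts all cases.

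I do not anticipate a genuine obstacle here, since all the substantive work is carried by Theorem~\ref{thm:nbesp}; the only point requiring a little care is the bookkeeping in the subcase $m_i = 1$, $m_{i+1} \ge 2$, where one must observe that exactly the extra hypothesis required by Part~\ref{thm:nbesp321} is the one disjunct still available, so that the biconditional there can be invoked in the direction we need.
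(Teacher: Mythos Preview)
Your proposal is correct and matches the paper's approach: the corollary is stated there without proof, as it is meant to follow immediately from Theorem~\ref{thm:nbesp} via precisely the case distinction on $m_i$ that you carry out.
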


\subsubsection{Examples}
\label{sec:exchangelemmaExamples}

We now give various examples showing that the bounds from Theorem \ref{thm:nbesp} are sharp in general. First we show that a swap of two non-1-singular variables can create a 1-singular variables.
\begin{example}\label{exp:turn12}
  Consider $k \in \NN$. The following $F \in \Musat$ and $v, w \in \var(F)$ have the properties that $(v,w)$ is a singular tuple with singularity-degree tuple $(k,k)$ while $(w,v)$ is a singular tuple with singularity-degree tuple $(1,k)$.
  \begin{enumerate}
  \item Let $F := \setb{\set{v,w},\set{\ol{v},x_1},\dots,\set{\ol{v},x_k},\set{\ol{w}},\set{\ol{x_1},\dots,\ol{x_k}}} \in \Musati{\delta=1}$.
  \item $v$ is $k$-singular for $F$, while $w$ is 1-singular for $F$.
  \item We have $\varsing(F) = \var(F) = \set{v,w,x_1,\dots,x_k}$ and $\varnosing(F) = \set{v}$.
  \item Let $F' := \dpi{v}(F) = \set{\set{w,x_1},\dots,\set{w,x_k},\set{\ol{w}},\set{\ol{x_1},\dots,\ol{x_k}}}$.
  \item Now $w$ is $k$-singular for $F'$, and thus the associated singularity-degree tuple for $(v,w)$ and $F$ is $(k,k)$.
  \item While the singular tuple $(w,v)$ has singularity-degree tuple $(1,k)$.
  \end{enumerate}
\end{example}
Next we give examples showing that the bounds from Part \ref{thm:nbesp2} of Theorem \ref{thm:nbesp} are sharp in general.
\begin{example}\label{exp:nbesp2}
   All examples (again) are in $\Musati{\delta=1}$.
  \begin{enumerate}
  \item First we consider Part \ref{thm:nbesp22newa}, showing that the two extreme cases $m_i' = 1$ and $m_i' = m_{i+1}+1$ are possible.
    \begin{enumerate}
    \item Example \ref{exp:turn12} yields $m_i = m_{i+1} = k \ge 2$ and $m_i' = 1$, $m_{i+1}' = k$.
  \item That is, the original pair $(v_i,v_{i+1})$ has singularity-degree tuple $(k,k)$, while after swap we have $(1,k)$. In the sequel we will describe the examples in this manner.
  \item For $k \in \NN$ let $F_1 := \set{\set{v,\ol{w}},\set{\ol{v},\ol{w},x_1},\dots,\set{\ol{v},\ol{w},x_k},\set{\ol{x_1},\dots,\ol{x_k}},\set{w}}$. Then for $(v,w)$ we have $(k,k)$, while for $(w,v)$ we have $(k+1,k)$.
    \end{enumerate}
  \item Now we consider Part \ref{thm:nbesp22newb}, showing that $m_{i+1}' = m_i-1+p$ for all $p \in \NNZ$ is possible.
    \begin{enumerate}
    \item For $p=0$ we just re-use $F_1$, but in the other direction, from $(w,v)$ with $(k+1,k)$ to $(v,w)$ with $(k,k)$.
    \item Let $F_2 := \set{\set{v},\set{\ol{v},w,y},\set{\ol{v},\ol{y}},\set{\ol{w},x_1},\dots,\set{\ol{w},x_p},\set{\ol{x_1},\dots,\ol{x_p}}}$ for $p \ge 1$. For $(v,w)$ we have $(2,p)$, while for $(w,v)$ for have $(p,p+1)$.
    \end{enumerate}
  \item Finally we consider Part \ref{thm:nbesp23}, showing that $m_{i+1}' = k$ for all $k \in \NN$ is possible.
    \begin{enumerate}
    \item For $k=1$ consider $F_3 := \set{\set{v},\set{\ol{v},w},\set{\ol{v},\ol{w}}}$. For $(v,w)$ we have $(2,1)$, and for $(w,v)$ we have $(1,1)$.
    \item Let $F_4 := \set{\set{v},\set{\ol{v},x_1},\dots,\set{\ol{v},x_k},\set{w,\ol{x_1},\dots,\ol{x_k}},\set{\ol{w},\ol{x_1},\dots,\ol{x_k}}}$ for $k \ge 2$. For $(v,w)$ we have $(k,1)$, and for $(w,v)$ we have $(1,k)$.
    \end{enumerate}
  \end{enumerate}
\end{example}
Finally we give examples showing that the bounds from Part \ref{thm:nbesp3} (the case $m_i = 1$) of Theorem \ref{thm:nbesp} are sharp in general.
\begin{example}\label{exp:nbesp3}
   All examples (again) are in $\Musati{\delta=1}$.
  \begin{enumerate}
  \item For Part \ref{thm:nbesp31} ($m_{i+1} = 1$), that is, the singularity-degree tuple $(1,1)$, it is trivial that after swap we can have $(1,1)$ again, while to obtain $(2,1)$ consider $F_3$ from Example \ref{exp:nbesp2} in the other direction.
  \item Consider Part \ref{thm:nbesp32} ($m_{i+1} \ge 2$).
    \begin{enumerate}
    \item An example showing that the swap can be impossible is given by $F := \set{\set{v,w},\set{\ol{v},w},\set{\ol{w},x_1},\dots,\set{\ol{w},x_k},\set{\ol{x_1},\dots,\ol{x_k}}}$ for $k \ge 2$: For $(v,w)$ we have $(1,k)$, while $(w,v)$ is not singular.
    \item And to obtain swap-results $(1,k) \leadsto (k,k)$ we use Example \ref{exp:turn12}, but in the other direction.
    \end{enumerate}
  \end{enumerate}
\end{example}

\subsubsection{Applications}
\label{sec:exchangelemmaApplications}

We first consider singular tuples where all permutations are also singular:
\begin{definition}\label{def:totallysymmsingt}
  Consider $F \in \Musat$ and a tuple $\vec{v} = (v_1,\dots,v_n)$ ($n \in \NNZ$). $\vec{v}$ is called \textbf{totally singular} for $F$ if $\vec{v}$ is singular for $F$ with $\sgp(F, (v_1,\dots,v_n)) = S_n$.
\end{definition}
\begin{corollary}\label{cor:all2}
  Consider $F \in \Musat$ and a singular tuple $\vec{v} = (v_1,\dots,v_n)$ ($n \in \NNZ$) such that each $v_i$ is non-1-singular in $F$ (i.e., $\set{v_1,\dots,v_n} \sse \varnosing(F)$). Then $\vec{v}$ is totally singular for $F$, and for each permutation $\vec{v}'$ every variable is non-1-singular for $\vec{v}$.
\end{corollary}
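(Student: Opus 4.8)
The statement to prove is Corollary~\ref{cor:all2}: if $\vec{v} = (v_1,\dots,v_n)$ is a singular tuple for $F \in \Musat$ with every $v_i$ non-1-singular \emph{in $F$} (not merely non-1-singular for $\vec{v}$), then $\vec{v}$ is totally singular, and moreover for every permutation $\vec{v}'$ of $\vec{v}$ every variable is non-1-singular for $\vec{v}'$. The natural approach is induction on $n$ using the exchange theorem (Theorem~\ref{thm:nbesp}), since every permutation is a product of neighbour-exchanges. The key point is that the hypothesis ``non-1-singular \emph{in $F$}'' is much stronger than ``non-1-singular for the tuple'', and by Corollary~\ref{cor:propdps3}, Part~\ref{cor:propdps31}, non-1-singularity is preserved downwards under sDP-reduction by \emph{any} non-1-singular variable: if $m \ge 2$ and $F' = \dpi{v}(F)$, then a variable that was non-1-singular in $F$ and survives in $F'$ stays non-1-singular (its variable-degree can only drop, but it had room to spare — more carefully, Part~\ref{cor:propdps311} says $\varosing(F') \sse \varosing(F)$, so no new 1-singular variables appear). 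So along the reduction chain $F = F_0 \tsdp F_1 \tsdp \cdots$ driven by the $v_i$, no variable of $\{v_1,\dots,v_n\}$ ever becomes 1-singular.

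First I would record the core invariant: if $v_{i+1}$ is non-1-singular \emph{in $F$}, then $v_{i+1}$ is singular (indeed non-1-singular) in $\dpi{v_1,\dots,v_{i-1}}(F)$ — this follows because each of $v_1,\dots,v_{i-1}$ is non-1-singular when its own sDP-reduction is applied (by the induction invariant), hence by repeated application of Corollary~\ref{cor:propdps3}, Part~\ref{cor:propdps311}, the variable $v_{i+1}$ remains non-1-singular throughout (it cannot be removed, since only variables actually reduced on or merged-away can disappear, and merging away a non-1-singular variable would by Lemma~\ref{lem:22sing}, Part~\ref{lem:22sing1}, require a shared main clause, still leaving it present-or-absent consistently — this needs a small check). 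Granting this, the hypothesis of Theorem~\ref{thm:nbesp}, Part~\ref{thm:nbesp32}\ref{thm:nbesp321} (``$v_{i+1}$ is singular in $\dpi{v_1,\dots,v_{i-1}}(F)$'') is \emph{automatically met}, so the only potentially-bad neighbour-exchange (the Case~\ref{thm:nbesp32} one, a 1-singular step followed by a non-1-singular step) cannot even arise here — because $v_i$ itself is non-1-singular in $F$, hence $m_i \ge 2$, placing us squarely in Part~\ref{thm:nbesp2}, where Part~\ref{thm:nbesp21} gives $\pi \in \sgp(F,\vec{v})$ unconditionally.

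Then the induction: for $n \le 1$ the claim is trivial. For $n \ge 2$, take an arbitrary permutation $\vec{v}'$; write it as a product of neighbour-exchanges. Each neighbour-exchange is applied to a singular tuple all of whose entries are among $\{v_1,\dots,v_n\}$, each non-1-singular in $F$; by the invariant above, in such a tuple the entry in position $i$ is non-1-singular for the tuple (has $m_i \ge 2$), so Theorem~\ref{thm:nbesp}, Part~\ref{thm:nbesp21}, makes the exchange s-preserving, and Part~\ref{thm:nbesp1} together with the downward-preservation of non-1-singularity shows the resulting tuple again has all entries non-1-singular in $F$ and for the tuple. Chaining these, $\vec{v}'$ is singular, so $\sgp(F,\vec{v}) = S_n$, i.e.\ $\vec{v}$ is totally singular, and every variable is non-1-singular for $\vec{v}'$ as claimed. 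The one real obstacle is establishing cleanly the invariant that ``non-1-singular in $F$'' is inherited by the intermediate clause-sets $\dpi{v_1,\dots,v_{i-1}}(F)$ — in particular that such a variable is never silently removed by a coinciding-resolvent merge; this needs Lemma~\ref{lem:22sing} (a non-1-singular variable $w$ disappears after reducing on a different non-1-singular variable $v$ only if they share a main clause, case~\ref{lem:22sing1}, which one must argue is incompatible with $(v_1,\dots,v_n)$ being a singular tuple in which $w$ occurs later), plus Corollary~\ref{cor:propdps3}. Once that bookkeeping is in place, everything else is a direct citation of the exchange theorem.
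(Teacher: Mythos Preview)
Your approach is correct and essentially identical to the paper's: the paper's proof is the one-line citation ``With Part~\ref{thm:nbesp21} of Theorem~\ref{thm:nbesp} and Part~\ref{cor:propdps311} of Corollary~\ref{cor:propdps3}'', and you invoke exactly these two ingredients (neighbour-exchange is s-preserving when $m_i \ge 2$; non-1-singular sDP-reduction creates no new 1-singular variables), organised via induction along neighbour-exchanges.

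One remark: your ``one real obstacle'' about a variable being ``silently removed by a coinciding-resolvent merge'' is a non-issue and you need not appeal to Lemma~\ref{lem:22sing}. For $F \in \Musat$ and singular $v$, Lemma~\ref{lem:singDpMU} (via $\delta(\dpi{v}(F)) = \delta(F)$ and $c(\dpi{v}(F)) = c(F)-1$, hence $n(\dpi{v}(F)) = n(F)-1$) gives $\var(\dpi{v}(F)) = \var(F) \sm \set{v}$, so no other variable disappears. Moreover, for the original tuple $\vec{v}$ the presence of each $v_j$ at step $j$ is part of the hypothesis ``$\vec{v}$ is singular''; for a permuted tuple you first establish singularity via the neighbour-exchange (Part~\ref{thm:nbesp21}), and then the same inductive argument using Part~\ref{cor:propdps311} shows every entry is non-1-singular for the permuted tuple. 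So the bookkeeping you flag is already handled by the cited results.
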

\begin{proof}
  With Part \ref{thm:nbesp21} of Theorem \ref{thm:nbesp} and Part \ref{cor:propdps311} of Corollary \ref{cor:propdps3}. \qed
\end{proof}
We remark that in the conference version, that is Corollary 27 in \cite{KullmannZhao2012ConfluenceC}, a more general version is stated, only assuming for $\vec{v}$ that every variable is not-1-singular for it (not, as in Corollary \ref{cor:all2}, already for $F$). We believe this more general statement is true, but the proof there is false. The more general version is not needed for any of the other results of \cite{KullmannZhao2012ConfluenceC} or this report. Furthermore a false additional assertion is given in Corollary 27 in \cite{KullmannZhao2012ConfluenceC}, namely that all permutation of $\vec{v}$ would also be non-1-singular, which is refuted by the following example.
\begin{example}\label{exp:non1singniperm}
  Consider $F := \set{\set{v,a},\set{\ol{a}},\set{\ol{v},b},\set{\ol{v},\ol{b}}} \in \Musati{\delta=1}$. Then $(v,a)$ has the property that all variables are non-1-singular for it, while $(a)$ is 1-singular for $F$.
\end{example}

We mention another (simpler) case of total singularity (which already follows by Corollary \ref{cor:propdps3}, Part \ref{cor:propdps321}):
\begin{corollary}\label{cor:all1}
  Consider $F \in \Musat$ and a singular tuple $\vec{v} = (v_1,\dots,v_n)$ such that $\set{v_1,\dots,v_n} \sse \varosing(F)$. Then $\vec{v}$ is totally singular, and for each permutation $\vec{v}'$ of $\vec{v}$ each variable is 1-singular (for $\vec{v}'$).
\end{corollary}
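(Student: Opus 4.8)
The plan is to prove a slightly strengthened statement by induction on $n$: for every $G \in \Musat$ and every singular tuple $(w_1,\dots,w_n)$ for $G$ with $\set{w_1,\dots,w_n} \sse \varosing(G)$, the tuple is totally singular for $G$, and for every permutation of it each variable is $1$-singular. Corollary \ref{cor:all1} is then the special case $G = F$, $(w_1,\dots,w_n) = \vec v$ (recall that by Definition \ref{def:singseq} the variables of a singular tuple are distinct, so any permutation of $\vec v$ is again an ordering of the set $\set{v_1,\dots,v_n} \sse \varosing(F)$).

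First I would dispatch the base case $n \le 1$: the empty tuple is trivially totally singular, and for $n = 1$ the single variable lies in $\varosing(G)$, hence is $1$-singular, so $(w_1)$ is a singular tuple with the required property. For the inductive step, fix an arbitrary permutation $\pi \in S_n$ and set $w := w_{\pi(1)}$. Since $w \in \varosing(G)$, it is $1$-singular for $G$, and by Corollary \ref{cor:sdppresmu} we have $G' := \dpi{w}(G) \in \Musat$. The key point is Corollary \ref{cor:propdps3}, Part \ref{cor:propdps321} (the case $m = 1$): it yields $\varosing(G) \sm \set{w} \sse \varosing(G')$, hence $\set{w_{\pi(2)},\dots,w_{\pi(n)}} \sse \varosing(G')$. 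Applying the induction hypothesis to $G'$ and the $(n-1)$-tuple $(w_{\pi(2)},\dots,w_{\pi(n)})$ shows this tuple is singular for $G'$ with each variable $1$-singular for it. Prepending $w$ (which is $1$-singular for $G$) gives that $(w_{\pi(1)},\dots,w_{\pi(n)})$ is a singular tuple for $G$ in which every variable is $1$-singular. As $\pi$ was arbitrary, $\sgp(G,(w_1,\dots,w_n)) = S_n$, completing the induction.

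I do not expect a genuine obstacle; the one thing to be careful about is not to try to derive this directly from the neighbour-exchange Theorem \ref{thm:nbesp}, whose Part \ref{thm:nbesp312} only bounds $m_i' \in \set{1,2}$ after swapping two $1$-singular neighbours. The sharper conclusion $m_i' = 1$ genuinely relies on the fact that \emph{all} variables of the tuple are already $1$-singular in $F$ — this is exactly what the induction above exploits via the inclusion $\varosing(G) \sm \set{w} \sse \varosing(\dpi{w}(G))$ from Corollary \ref{cor:propdps3}, Part \ref{cor:propdps321}.
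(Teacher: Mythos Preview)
Your argument follows the same route as the paper: both rest on Corollary~\ref{cor:propdps3}, Part~\ref{cor:propdps321} (a $1$-singular DP-reduction keeps all other $1$-singular variables $1$-singular), and the paper itself remarks parenthetically, just before the statement, that this corollary alone already suffices. Your caution about Theorem~\ref{thm:nbesp}, Part~\ref{thm:nbesp312} is well-placed.

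There is, however, a small circularity in your induction as written. Your strengthened statement keeps ``singular tuple'' as a \emph{premise}, so to invoke the induction hypothesis for $(w_{\pi(2)},\dots,w_{\pi(n)})$ in $G'$ you would need this specific ordering to already be a singular tuple for $G'$ --- but that is exactly (part of) what you are trying to conclude, since it is equivalent to $(w_{\pi(1)},\dots,w_{\pi(n)})$ being singular for $G$. You only established $\set{w_{\pi(2)},\dots,w_{\pi(n)}} \sse \varosing(G')$, not singularity of the tuple. The fix is immediate: drop ``singular tuple'' from the premise of your strengthened statement and prove instead that \emph{any} ordering of a finite set of distinct variables contained in $\varosing(G)$ is a singular tuple with every variable $1$-singular. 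Your inductive step then goes through verbatim, since the only ingredient you actually use is the inclusion $\varosing(G)\sm\set{w}\sse\varosing(\dpi{w}(G))$ from Corollary~\ref{cor:propdps3}, Part~\ref{cor:propdps321}. Equivalently, one can dispense with the outer induction and simply iterate that inclusion $n$ times.
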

\begin{proof}
  With Part \ref{thm:nbesp311} of Theorem \ref{thm:nbesp} and Part \ref{cor:propdps321} of Corollary \ref{cor:propdps3}. \qed
\end{proof}
Finally we get some normal form of a singular tuple $\vec{v}$ for $F \in \Musat$ by moving the singular variables from $F$ to the front, followed by further 1-singular DP-reductions, and concluded by non-1-singular DP-reductions:
\begin{corollary}\label{cor:first1}
  Consider $F \in \Musat$ and a singular tuple $\vec{v} = (v_1,\dots,v_n)$. Let $V := \set{v_1,\dots,v_n} \cap \varosing(F)$ and $p := \abs{V}$. Consider any $\pi_0: \tb 1p \ra \tb 1n$ such that $\set{v_{\pi_0(i)} : i \in \tb 1p} = V$. Then there exists $q \in \tb pn$ and an s-preserving permutation $\pi$ for $\vec{v}$ such that $\pi$ extends $\pi_0$, and $v_{\pi(i)}$ is 1-singular for $(v_{\pi(1)},\dots,v_{\pi(n)})$ and $i \in \tb 1n$ if and only if $i \le q$.
\end{corollary}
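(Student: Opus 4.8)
The plan is to produce $\pi$ in two stages. In the first stage I bring the variables of $V$ (those that are $1$-singular already for $F$) to the very front, in the order prescribed by $\pi_0$; in the second stage, among the remaining reductions, I move all reductions that are $1$-singular at their position ahead of those that are non-$1$-singular.

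\textbf{Stage 1.} First note that for \emph{any} enumeration $u_1,\dots,u_p$ of $V$ the tuple $(u_1,\dots,u_p)$ is singular for $F$ with every $u_i$ being $1$-singular for it: by Corollary~\ref{cor:propdps3}, Part~\ref{cor:propdps321}, a $1$-singular DP-reduction removes from the set of $1$-singular variables only the reduced variable, so $u_i\in\varosing(\dpi{u_1,\dots,u_{i-1}}(F))$ follows inductively (this is also Corollary~\ref{cor:all1}). Hence $F_V:=\dpi{v_{\pi_0(1)},\dots,v_{\pi_0(p)}}(F)\in\Musat$, $\var(F_V)=\var(F)\sm V$, and by Corollary~\ref{cor:singeqvar} this clause-set does not depend on the order chosen for $V$. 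It thus suffices to transform $\vec v$, by s-preserving neighbour-exchanges, into a singular tuple whose first $p$ entries are exactly $V$ (in any order; a final re-sort of that block into the order $\pi_0$ is s-preserving by Corollary~\ref{cor:all1} and leaves the tail unchanged by Corollary~\ref{cor:singeqvar}). I would do this by a \emph{maximal-prefix} argument: among the s-preserving reorderings of $\vec v$ pick $(w_1,\dots,w_n)$ with a longest initial segment $w_1,\dots,w_r$ lying in $V$. If $r<p$, let $j$ be least with $w_j\in V$, so $w_{r+1},\dots,w_{j-1}\notin V$, and bubble $w_j$ leftwards to position $r+1$; for each neighbour-exchange, with the $V$-variable $w_j$ on the right and a variable not in $\varosing(F)$ on the left, one checks s-preservation through Corollary~\ref{cor:suffcondswap} (and, in the residual case $m_i=1$, $m_{i+1}\ge2$, through Theorem~\ref{thm:nbesp}, Part~\ref{thm:nbesp321}), using the literal-degree estimates of Lemma~\ref{lem:sdpld} to decide which of the three sufficient conditions applies. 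This contradicts maximality, forcing $r=p$.

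\textbf{Stage 2.} Write the tuple produced by Stage 1 as $(v_{\pi_0(1)},\dots,v_{\pi_0(p)},u_1,\dots,u_{n-p})$, so that $(u_1,\dots,u_{n-p})$ is a singular tuple for $F_V$. Now repeat the following: let $i$ be least such that $u_i$ is non-$1$-singular at its position while some later $u_j$ is $1$-singular at its position; then, by minimality of $i$, all of $u_1,\dots,u_{i-1}$ are $1$-singular at their positions, and, taking $j$ least with that property, $u_i,\dots,u_{j-1}$ are all non-$1$-singular at their positions, so $u_j$ may be bubbled leftward to position $i$. Each such neighbour-exchange has a $1$-singular variable on the right and a non-$1$-singular variable on the left, hence is s-preserving by Theorem~\ref{thm:nbesp}, Part~\ref{thm:nbesp21}, and by Part~\ref{thm:nbesp23} the moved variable stays $1$-singular at its new position; the initial block of positions occupied by $1$-singular reductions therefore grows by one each round, so the process stops. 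When it stops no non-$1$-singular reduction precedes a $1$-singular one, i.e.\ the tuple splits into an initial segment of $1$-singular reductions followed by non-$1$-singular ones; setting $q:=p+(\text{length of that $1$-singular block})$ gives $q\in\tb pn$. Composing the Stage-1 and Stage-2 reorderings yields the required $\pi$: it extends $\pi_0$, is s-preserving (the $V$-prefix by Stage 1, the $F_V$-tail by Stage 2), and $v_{\pi(i)}$ is $1$-singular for $(v_{\pi(1)},\dots,v_{\pi(n)})$ exactly for $i\le q$ (positions $1,\dots,p$ by Stage 1, positions $p{+}1,\dots,q$ and the complement by Stage 2).

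\textbf{Main obstacle.} The technical heart is Stage 1 --- showing that the $1$-singular-for-$F$ variables can be pulled to the front of an arbitrary singular tuple by s-preserving exchanges. The point is that a variable in $\varosing(F)$ need not remain $1$-singular along a prefix of reductions (a suitable non-$1$-singular reduction can raise its literal-degree, and a second one can make it temporarily non-singular), so the convenient condition ``the right neighbour is $1$-singular'' of Corollary~\ref{cor:suffcondswap} can fail. One must then fall back, via the sharp degree estimates of Lemma~\ref{lem:sdpld} (Parts~\ref{lem:sdpld4}, \ref{lem:sdpld5}, \ref{lem:sdpld7}), on the other sufficient conditions, and when even those fail, first displace the ``offending'' non-$1$-singular reductions (which can always be moved rightwards by Theorem~\ref{thm:nbesp}, Part~\ref{thm:nbesp21}) before continuing the bubbling. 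Verifying that this always succeeds is the delicate bookkeeping of the proof.
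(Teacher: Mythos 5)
Your Stage 2 is sound and coincides with the paper's ``Process I'' (restricted to the tail): the only swaps used are of degree-pattern $({\ge}2,1)\leadsto(1,{\ge}1)$, which are unconditionally s-preserving by Theorem~\ref{thm:nbesp}, Parts~\ref{thm:nbesp21} and \ref{thm:nbesp23}, and the $1$-block grows monotonically. The gap is in Stage 1, and you have in effect flagged it yourself (``Verifying that this always succeeds is the delicate bookkeeping of the proof''): the claim that the variables of $V$ can be bubbled to the very front of an \emph{arbitrary} singular tuple by s-preserving neighbour exchanges is precisely the hard content of the corollary, and your mechanism for it does not close the problematic case. Concretely, a variable $w_j\in V$ may be non-$1$-singular at its position (its degrees having been raised by earlier non-$1$-singular reductions), while its immediate left neighbour is a $1$-singular reduction and $w_j$ is \emph{not} singular one position earlier; then the swap is not s-preserving by Theorem~\ref{thm:nbesp}, Part~\ref{thm:nbesp321}, and your fallback --- ``displace the offending non-$1$-singular reductions rightwards by Part~\ref{thm:nbesp21}'' --- does not apply to the actual blocker, which is a $1$-singular reduction (Part~\ref{thm:nbesp21} requires $m_i\ge 2$ for the left entry). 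Moreover, moving the earlier non-$1$-singular reductions rightwards past $w_j$ itself perturbs all intermediate degrees, so no terminating, always-applicable procedure has been exhibited; the maximal-prefix choice does not by itself rule out the blocking configuration.

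The paper avoids this case entirely by reversing your order of operations: first sort the singularity-degree tuple so that all $1$-singular reductions form a consecutive front part (only $({\ge}2,1)$-swaps needed); then move $V$-variables stranded in the back part leftwards using only $({\ge}2,{\ge}2)$-swaps, observing that a $V$-variable can never occupy the \emph{first} position of the back part (since $1$-singular reductions do not increase literal degrees, by Lemma~\ref{lem:sdpld}, Part~\ref{lem:sdpld7}, a variable of $\varosing(F)$ preceded only by $1$-singular reductions is still $1$-singular there), re-running the first process whenever a swap creates a new degree-$1$ entry (which strictly grows the front part, giving termination); and only then sort within the front part, where all swaps are of type $(1,1)$ and hence s-preserving by Part~\ref{thm:nbesp311}. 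In this ordering the conditional swap of Part~\ref{thm:nbesp321} is never invoked. To repair your proof you would need either to adopt this ordering or to prove that the blocking configuration cannot occur after your maximal-prefix normalisation --- neither of which is done in the proposal.
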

\begin{proof}
  The sorting of $\vec{v}$ is computed via singularity-preserving neighbour swaps, in four steps (``processes''). Process I establishes that in the associated singularity-degree tuple all entries equal to $1$ appear in the front-part (the first $q$ elements). This is achieved by noting that a neighbouring degree-pair $(\ge 2, 1)$ can be swapped and becomes $(1,\ge 1)$. Thus we can grow the 1-singular front part by every value $1$ occurring not in it, and we obtain a permutation where all singularity-degrees of value $1$ appear in the (consecutive) front-part (while the back-part has all singularity-degrees of values $\ge 2$).

Process II now additionally moves variables in $V$ occurring in the back-part to the front-part as follows: If there is still such a variable, then this can not be the first place in the back-part, and so the variable can be moved one place to the left. Possibly process I has to applied after this step (if it does, then the front-part grows at least by one element). This process can be repeated and terminates once all of $V$ is in the front part. Now the variables in the front part and especially $q$ have been determined. In the remainder the front part is put into a suitable order.

Process III only considers the front part, and the task is to move all variables in $V$ to its front. This is unproblematic, since 1-singular DP-reduction does not increase literal degrees. Finally process IV commutes the variables in $V$ into the given order. \qed
\end{proof}
Comparing two different singular tuples, they don't need to overlap, however they need to have a ``commutable beginning'' via appropriate permutations, given they contain at least two variables:
\begin{lemma}\label{lem:comptwost}
  Consider $F \in \Musat$ and singular tuples $(v_1,\dots,v_p)$, $(w_1,\dots,w_q)$ for $F$ with $p, q \ge 2$. Then there is an s-preserving permutation $\pi$ for $(v_1,\dots,v_p)$ and an s-preserving permutation $\pi'$ for $(w_1,\dots,w_q)$, such that both $(v_{\pi(1)},w_{\pi'(1)})$ and $(w_{\pi'(1)},v_{\pi(1)})$ are singular for $F$.
\end{lemma}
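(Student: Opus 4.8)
The plan is to reduce the statement to finding a single good pair: a variable $a$ that some s-preserving permutation of $(v_1,\dots,v_p)$ moves to the front, a variable $b$ that some s-preserving permutation of $(w_1,\dots,w_q)$ moves to the front, with $a \ne b$, so that both $(a,b)$ and $(b,a)$ are singular for $F$. Recall that $v_1$ (resp.\ $w_1$) is always such a front-reachable variable and is singular for $F$, and that by Corollary \ref{cor:first1} every variable in $\{v_1,\dots,v_p\}$ that is $1$-singular for $F$ can be moved to the front of its tuple, and likewise for the $w$-tuple.

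Two compatibility criteria drive the argument. \textbf{(A)} If $a$ is $1$-singular for $F$ and $b \ne a$ is singular for $F$, then both $(a,b)$ and $(b,a)$ are singular for $F$: for $(a,b)$ use Corollary \ref{cor:propdps3}, Part \ref{cor:propdps321} (a $1$-singular DP-reduction keeps every other singular variable singular); for $(b,a)$ use the same part when $b$ is $1$-singular, and Part \ref{cor:propdps322} when $b$ is non-$1$-singular (the singular variables destroyed by a non-$1$-singular DP-reduction are themselves non-$1$-singular, so the $1$-singular $a$ survives). \textbf{(B)} If $a, b$ are both non-$1$-singular for $F$, $a \ne b$, with different main clauses in $F$, then both $(a,b)$ and $(b,a)$ are singular for $F$, directly by Lemma \ref{lem:22sing}, Part \ref{lem:22sing2}.

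Now I would split into cases. If some variable of either tuple is $1$-singular for $F$, say $u \in \{v_1,\dots,v_p\}$: if $w_1 \ne u$ take $a := u$, $b := w_1$ and invoke (A); if $w_1 = u$ but one of the two tuples contains a $1$-singular-for-$F$ variable $u' \ne u$, move $u'$ to the front of its tuple and $u$ to the front of the other and invoke (A); if moreover $v_1 \ne u$, take $a := v_1$, $b := u$ and invoke (A). If, on the other hand, no variable of either tuple is $1$-singular for $F$, then $v_1$ and $w_1$ are non-$1$-singular for $F$: if $v_1 \ne w_1$ and their main clauses differ, (B) finishes; otherwise Theorem \ref{thm:nbesp}, Part \ref{thm:nbesp21}, makes swapping the first two entries of $(v_1,\dots,v_p)$ s-preserving, so $v_2$ becomes front-reachable, hence singular for $F$ and --- no $1$-singular variables being present --- non-$1$-singular for $F$; since $(v_1,v_2)$ is singular for $F$ with both entries non-$1$-singular, Lemma \ref{lem:22sing}, Part \ref{lem:22sing1}, forces the main clauses of $v_1$ and $v_2$ to differ, whence $v_2 \ne w_1$ with different main clauses and (B) applies with $a := v_2$, $b := w_1$.

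The only configuration left over from the first case is that both tuples begin with the same variable $u$, which is $1$-singular for $F$ and is the unique $1$-singular-for-$F$ variable of each tuple. If either tuple has a front-reachable variable $\ne u$, that variable is singular and non-$1$-singular for $F$, and (A) (with $u$ in the $1$-singular role) again finishes; the genuinely delicate subcase is when $u$ is the \emph{only} front-reachable variable of each tuple. This is the step I expect to be the main obstacle. Here one either peels $u$ off and argues by induction on $p+q$ applied to $\dpi{u}(F)$ with the tuples $(v_2,\dots,v_p)$, $(w_2,\dots,w_q)$, re-prepending $u$ afterwards (the short cases $p=2$ or $q=2$ needing separate handling), or one reads the conclusion in the spirit of the phrase ``commutable beginning'' to include $v_{\pi(1)} = w_{\pi'(1)} \, (= u)$, in which case the two sDP-sequences share their first reduction $F \leadsto \dpi{u}(F)$ and nothing more is needed. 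All the remaining verifications are mechanical from Corollaries \ref{cor:propdps3}, \ref{cor:first1}, Lemma \ref{lem:22sing}, and Theorem \ref{thm:nbesp}.
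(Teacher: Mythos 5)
Your overall route is the paper's route: split on whether one of the tuples contains a variable that is $1$-singular already for $F$; in the positive case move it to the front by Corollary \ref{cor:first1} and pair it with a front variable of the other tuple via Corollary \ref{cor:propdps3}, Part \ref{cor:propdps32} (your criterion (A)); in the negative case observe that every variable of both tuples is then non-$1$-singular for $F$ (Corollary \ref{cor:propdps3}, Part \ref{cor:propdps311}), invoke Corollary \ref{cor:all2}, and pair via Lemma \ref{lem:22sing} (your criterion (B)). Your treatment of the no-$1$-singular case is complete; in particular the worry that $v_2$ might coincide with $w_1$ is excluded exactly as you argue, since $v_2$ and $w_1$ would then have to have different main clauses.

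The subcase you single out as the obstacle is a genuine one, and it is precisely the point that the paper's own three-line proof passes over in silence. It can really occur, and there the statement as literally written fails: take $F := \set{\set{v,w},\set{\ol{v},w},\set{\ol{w},x_1},\set{\ol{w},x_2},\set{\ol{x_1},\ol{x_2}}} \in \Musati{\delta=1}$ (Example \ref{exp:nbesp3}, item 2(a), with $k=2$) and let both given tuples be $(v,w)$; since $w \notin \varsing(F)$, the only s-preserving permutation of $(v,w)$ is the identity, so $v_{\pi(1)} = w_{\pi'(1)} = v$ is forced, and $(v,v)$ is not a singular tuple. (A variant with two \emph{distinct} tuples $(u,a)$, $(u,a')$ whose only front-reachable variable is the common $u$ is just as easy to construct by inverse $1$-singular DP-reduction.) Of your two proposed repairs, the induction does not work as described: applied to $\dpi{u}(F)$ it yields a pair that is singular for $\dpi{u}(F)$, not for $F$ --- in the example above $w$ is not even singular for $F$, so no amount of re-prepending recovers the stated conclusion. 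Your second repair is the correct one: the conclusion has to be weakened to permit $v_{\pi(1)} = w_{\pi'(1)}$, and this weaker form is all that the lemma's sole application (the proof of Lemma \ref{lem:dev1s}) requires, since coinciding first variables make the two reduction sequences share their first step and the commutation there is vacuous. In short: your proof is the paper's proof with the cases filled in properly, together with an honest identification of a degenerate case that neither your argument nor the paper's closes without amending the statement.
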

\begin{proof}
If one of the two tuples contains a 1-singular variable $v_i \in \varosing(F)$ resp.\ $w_i \in \varosing(F)$, then the assertion follows by Corollary \ref{cor:first1} and Part \ref{cor:propdps32} of Corollary \ref{cor:propdps3}. So assume that neither contains a 1-singular variable from $F$. Note that if none of the variables of a singular tuple is 1-singular for $F$, then all the variables in it must be singular for $F$, since new singular variables are only created by 1-singular DP-reduction according to Corollary \ref{cor:propdps3}, Part \ref{cor:propdps311}. Thus the assertion follows by Corollary \ref{cor:all2} and Lemma \ref{lem:22sing}. \qed
\end{proof}

\subsection{Without 1-singular variables}
\label{sec:without1sing}

If $F \in \Musat$ has no 1-singular variables, then we know its maximal singular tuples (singular tuples which can not be extended), as we will show in Lemma \ref{lem:Fonly2}, namely they are given by choosing exactly one singular literal from each clause which contains singular literals. In this context the concept of ``singularity hypergraph'' is useful, so that we can recognise such maximal singular tuples as minimal ``transversals''. Recall that a \emph{hypergraph} $G$ is a pair $G = (V,E)$, where $V$ is a set, the elements called ``vertices'', while $E$ is a set of subsets of $V$, the elements called ``hyperedges''; the notations $V(G) := V$ and $E(G) := E$ are used.
\begin{definition}\label{def:shyp}
  For $F \in \Musat$ we define the \textbf{singularity hypergraph} \bmm{\shyp(F)} as follows:
  \begin{itemize}
  \item The vertex set is $\var(F)$ (the variables of $F$).
  \item For every $v \in \varsing(F)$ let $x_v$ be the singular literal (which depends on the given choice in case $v$ is 1-singular), and let $L := \set{x_v : v \in \varsing(F)}$.
  \item Now the hyperedges are given by $\var(C \cap L)$ for $C \in F$ with $C \cap L \not= \es$.
  \end{itemize}
  I.e.,
  \begin{displaymath}
    \shyp(F) := (\var(F), \, \set{\var(C \cap L) : C \in F \und C \cap L \not= \es}).
  \end{displaymath}
  Note that the hyperedges of $\shyp(F)$ are non-empty and pairwise disjoint.
\end{definition}
\begin{example}\label{exp:conflsdp_cont1}
  Continuing Example \ref{exp:conflsdp}:
  \begin{enumerate}
  \item For $F$ as in Part \ref{exp:conflsdp1} we have $\shyp(F) = (\set{v,v_1,v_2}, \set{\set{v,v_1}})$.
  \item For $F$ as in Part \ref{exp:conflsdp2} we have $\shyp(F) = (\set{v,w,v_1,v_2,v_1',v_2'}, \set{\set{v,v_1}})$.
  \end{enumerate}
\end{example}
\begin{example}\label{exp:shyp2}
  With another inverse sDP-reduction, applied to $F$ from Part \ref{exp:conflsdp1} of Example \ref{exp:conflsdp} and introducing variable $v'$, we obtain
  \begin{displaymath}
    F = \set{ \set{v,v_1}, \set{\ol{v},v_2}, \set{\ol{v},\ol{v_2}}, \set{v',\ol{v_1}}, \set{\ol{v'},v_2}, \set{\ol{v'},\ol{v_2}} }.
  \end{displaymath}
  We have $\varsing(F) = \set{v_1,v,v'}$ and $\varosing(F) = \set{v_1}$. Choosing $v_1$ resp.\ $\ol{v_1}$ as the singular literal for $v_1$, we have $\shyp(F) = (\set{v,v',v_1,v_2}, \set{\set{v,v_1},\set{v'}})$ resp.\ $= (\set{v,v',v_1,v_2}, \set{\set{v},\set{v',v_1}})$.
\end{example}
\begin{example}\label{exp:shyp3}
  Consider
  \begin{displaymath}
    F := \setb{ \set{a,b}, \set{\ol{a},x,v},\set{\ol{a},y,v'}, \set{\ol{b},x,v}, \set{\ol{b},y,v'}, \set{\ol{x},v},\set{\ol{y},v'}, \set{\ol{v},\ol{v'}} }.
  \end{displaymath}
  We have $\shyp(F) = (\set{a,b,x,y,v,v'}, \set{\set{a,b},\set{x},\set{y},\set{v,v'}})$. We have furthermore the properties $F \in \Musati{\delta=2} \sm \Smusati{\delta=2}$ and $\var(F) = \varnosing(F)$.
\end{example}

\begin{definition}\label{def:maxsingtup}
  Consider $F \in \Musat$. A singular tuple $(v_1,\dots,v_n)$ for $F$ is called \textbf{maximal}, if there is no singular tuple extending it, i.e., $\dpi{v_1,\dots,v_n}(F)$ is nonsingular.
\end{definition}

\begin{lemma}\label{lem:Fonly2}
  Consider $F \in \Musat$ with $\varosing(F) = \es$. The variable-sets of maximal singular tuples for $F$ are precisely the minimal transversals of $\shyp(F)$ (minimal sets of vertices intersecting every hyperedge). And the maximal singular tuples of $F$ are precisely obtained as (arbitrary) linear orderings of these variable-sets.
\end{lemma}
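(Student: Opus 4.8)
The plan is to describe how the singularity hypergraph changes under one singular DP-reduction, and then iterate. Write $L := \set{x_v : v \in \varsing(F)}$ for the set of singular literals; since $\varosing(F) = \es$ no choices are involved, and every $x_v$ occurs in exactly one clause of $F$, namely the main clause of $v$. Hence the hyperedges of $\shyp(F)$, i.e.\ the sets $\var(C \cap L)$ for $C \in F$ with $C \cap L \ne \es$, are pairwise disjoint, their union is exactly $\varsing(F)$, and (being nonempty and disjoint) their minimal transversals are precisely the sets obtained by choosing exactly one variable from each hyperedge. Let $\mc E$ denote this hyperedge-set.

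The heart of the argument is the following one-step claim: if $F \in \Musat$ has $\varosing(F) = \es$ and $v \in \varsing(F)$ has main clause $C$, then $F' := \dpi v(F)$ again satisfies $F' \in \Musat$ and $\varosing(F') = \es$, and the hyperedge-set of $\shyp(F')$ is $\mc E \sm \set{\var(C \cap L)}$. Indeed, $F' \in \Musat$ by Corollary \ref{cor:sdppresmu}, and $\varosing(F') = \es$ by Corollary \ref{cor:propdps3}, Part \ref{cor:propdps311} (note $v$ is $m$-singular with $m \ge 2$). No new singular variable appears (again Corollary \ref{cor:propdps3}, Part \ref{cor:propdps311}), and a variable $w \in \varsing(F) \sm \set v$ — which is non-1-singular — survives as singular in $F'$ exactly when its main clause differs from $C$, by Lemma \ref{lem:22sing}; since $x_w$ occurs only in $w$'s main clause, this is equivalent to $x_w \notin C$, i.e.\ to $w \notin \var(C \cap L)$. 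For a surviving $w$, Lemma \ref{lem:sdpld}, Part \ref{lem:sdpld1} gives $\ldeg_{F'}(x_w) = \ldeg_F(x_w) = 1$ (using $x_w \notin C$), so $x_w$ remains its singular literal; hence $\varsing(F')$ and its singular literals arise from those of $F$ simply by discarding the variables in $\var(C \cap L)$. Finally, for every clause $D \in F$ with $D \ne C$ and $D \cap L \ne \es$, the hyperedge $\var(D \cap L)$ reappears unchanged in $\shyp(F')$ and no other hyperedges occur: each surviving singular literal $x_w$ sits in a unique clause of $F'$ — either $w$'s old main clause $D_w$ (if $v \notin \var(D_w)$) or the resolvent $C \res D_w$ (if $D_w$ is a side clause of $v$) — and in both cases the singular-literal trace of that clause equals $\var(D_w \cap L)$, because by disjointness of $\mc E$ no variable of $\var(D_w \cap L)$ has its singular literal in $C$.

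Now iterate along a singular tuple $(v_1,\dots,v_n)$ for $F$, setting $F_i := \dpi{v_1,\dots,v_i}(F)$. By induction using the one-step claim, each $F_i$ has $\varosing(F_i) = \es$, each $v_i$ lies in a hyperedge of $\shyp(F_{i-1})$ which as a vertex set equals one of the hyperedges of $\shyp(F)$, these $n$ hyperedges $h_1,\dots,h_n \in \mc E$ are pairwise distinct, and the hyperedge-set of $\shyp(F_i)$ is $\mc E \sm \set{h_1,\dots,h_i}$; in particular $\varsing(F_i) = \varsing(F) \sm (h_1 \cup \dots \cup h_i)$. If the tuple is maximal then $\varsing(F_n) = \es$, so $\mc E \sm \set{h_1,\dots,h_n} = \es$, i.e.\ $\set{v_1,\dots,v_n}$ contains exactly one variable from each hyperedge of $\shyp(F)$ and is a minimal transversal. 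Conversely, given a minimal transversal $V$, enumerate it as $v_1,\dots,v_n$ in any order; since $V$ contains exactly one variable from each hyperedge, at each step $v_i$ lies in a not-yet-removed hyperedge of $\shyp(F_{i-1})$, hence $v_i \in \varsing(F_{i-1})$, so $(v_1,\dots,v_n)$ is a singular tuple, and after all $n$ steps every hyperedge has been removed, so $\varsing(F_n) = \es$ and the tuple is maximal. As the enumeration was arbitrary, every linear ordering of a minimal transversal is a maximal singular tuple, and together with the previous direction this yields the claimed description.

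The main obstacle is the one-step claim, and within it the verification that the hyperedges other than $\var(C \cap L)$ survive unchanged in $\shyp(\dpi v(F))$: one must check that the singular literals of the non-reduced singular variables are preserved and, once the reduction has absorbed the side clauses of $v$ into resolvents, still have the same singular-literal trace. The disjointness of the hyperedges of $\shyp(F)$ is exactly what keeps this bookkeeping under control.
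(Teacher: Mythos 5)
Your proposal is correct and follows essentially the same route as the paper: the paper's proof also rests on the one-step update $\shyp(\dpi{v}(F)) = (V(\shyp(F)) \sm \set{v},\, E(\shyp(F)) \sm \set{H_v})$ for the main-clause hyperedge $H_v$, combined with the fact (Corollary \ref{cor:propdps3}, Part \ref{cor:propdps311}) that non-1-singular reduction creates no new singular variables, and then induction. You merely spell out the bookkeeping (preservation of singular literals and of the remaining hyperedges' traces through the resolvents) that the paper compresses into ``follows now easily by induction''.
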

\begin{proof}
By Corollary \ref{cor:propdps3}, Part \ref{cor:propdps311}, for each singular tuple $(v_1,\dots,v_n)$ of $F$ we have $\set{v_1,\dots,v_n} \sse \varnosing(F) = \varsing(F)$. So by Corollary \ref{cor:all2} all permutations are singular. Finally, for $v \in \varsing(F)$ let $F_v := \dpi{v}(F)$, let $C_v \in F$ be the main clause of $v$, and let $H_v := \var(C_v) \cap \varsing(F)$. Then we have $\shyp(F_v) = (V(\shyp(F)) \sm \set{v}, E(\shyp(F)) \sm \set{H_v})$. The assertion of the lemma follows now easily by induction. \qed
\end{proof}
\begin{example}\label{exp:conflsdp_cont2}
  Continuing Example \ref{exp:conflsdp} (and Example \ref{exp:conflsdp_cont1}): For $F$ as in Part \ref{exp:conflsdp1} as well as in Part \ref{exp:conflsdp2} the two maximal singular tuples are $(v)$ and $(v_1)$.
\end{example}
\begin{example}\label{exp:shyp3_cont}
  Continuing Example \ref{exp:shyp3}: We have $2 \cdot 2 = 4$ minimal transversals, namely $\set{a,x,y,v}, \set{b,x,y,v}, \set{a,x,y,v'},\set{b,x,y,v'}$. There are thus $4$ elements in $\sdp(F)$; Theorem \ref{thm:confmodisomu2} will show that they are necessarily all isomorphic to $\Dt{2}$ (since after reduction $2$ variables remain; recall Example \ref{exp:F2F3}). Finally we remark that $F$ has precisely $4 \cdot 4! = 96$ maximal singular tuples.
\end{example}
Since two different minimal transversals of $\shyp(F)$ remove different variables, they result in different sDP-reduction results. So the elements of $\sdp(F)$ are here in bijective correspondence to the minimal transversals of $F$, and we get:
\begin{corollary}\label{cor:numsDP}
  For $F \in \Musat$ with $\varosing(F) = \es$ we have that $\abs{\sdp(F)}$ is the number of minimal transversals of $\shyp(F)$.
\end{corollary}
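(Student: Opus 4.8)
The plan is to exhibit an explicit bijection between the minimal transversals of $\shyp(F)$ and the set $\sdp(F)$. First I would observe that every $F' \in \sdp(F)$ arises as $F' = \dpi{v_1,\dots,v_n}(F)$ for some \emph{maximal} singular tuple $\vec v = (v_1,\dots,v_n)$ for $F$: unfolding $F \tsdps F'$ with $F' \in \Musatns$ means precisely that a chain of singular DP-reductions leads from $F$ to $F'$, the sequence of reduction variables is by definition a singular tuple (Corollary \ref{cor:sdppresmu} keeping us in $\Musat$ throughout), and it is maximal exactly because $F'$ is nonsingular. Conversely, every maximal singular tuple yields, via iterated sDP, an element of $\sdp(F)$.

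Next I would invoke Lemma \ref{lem:Fonly2}: since $\varosing(F) = \es$, the variable-sets underlying maximal singular tuples for $F$ are exactly the minimal transversals of $\shyp(F)$, and every linear ordering of such a transversal is a maximal singular tuple. Combined with Corollary \ref{cor:singeqvar} (two singular tuples with the same variable set produce the same clause-set), this shows that the rule $V \mapsto \dpi{v_1,\dots,v_n}(F)$, for any ordering $(v_1,\dots,v_n)$ of a minimal transversal $V$, is a well-defined surjection from the set of minimal transversals of $\shyp(F)$ onto $\sdp(F)$.

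It remains to prove injectivity, i.e.\ that distinct minimal transversals give distinct reduction results; this is the only point requiring a small computation. A single step $G \leadsto \dpi{w}(G)$ of singular DP-reduction on $G \in \Musat$ removes variable $w$, and preserves $\delta$ by Corollary \ref{cor:sdppresmu}; since an $m$-singular reduction deletes $m+1$ clauses while, by Lemma \ref{lem:singDpMU} (Part \ref{lem:singDpMU3}), adding exactly $m$ pairwise distinct new resolvents none of which already occur, $c(G)$ drops by exactly one, hence $n(G)$ drops by exactly one, so no variable besides $w$ can disappear. Iterating along a tuple with variable set $V$ gives $\var(\dpi{v_1,\dots,v_n}(F)) = \var(F) \sm V$. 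Hence two different minimal transversals $V \ne V'$ yield reduction results with different variable sets, so different clause-sets. This makes the surjection above a bijection, and therefore $\abs{\sdp(F)}$ equals the number of minimal transversals of $\shyp(F)$. Apart from the variable-count bookkeeping for injectivity, the statement is an immediate consequence of Lemma \ref{lem:Fonly2} and Corollary \ref{cor:singeqvar}.
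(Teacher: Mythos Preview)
Your proof is correct and follows essentially the same approach as the paper: establish via Lemma \ref{lem:Fonly2} and Corollary \ref{cor:singeqvar} a well-defined surjection from minimal transversals of $\shyp(F)$ onto $\sdp(F)$, and then observe injectivity because distinct transversals remove distinct variable-sets. The paper states the injectivity step in a single sentence (``two different minimal transversals of $\shyp(F)$ remove different variables, they result in different sDP-reduction results''), whereas you supply the supporting bookkeeping via the deficiency invariance from Corollary \ref{cor:sdppresmu}; this extra detail is sound but not a different route.
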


\subsection{The singularity index}
\label{sec:singindex}

\begin{definition}\label{def:singindex}
  Consider $F \in \Musat$. The \textbf{singularity index} of $F$, denoted by $\bmm{\singind(F)} \in \NNZ$, is the minimal $n \in \NNZ$ such that a maximal singular tuple of length $n$ exists for $F$.
\end{definition}
So $\singind(F) = 0 \Lra F \in \Musatns$. See Corollary \ref{cor:singindevsat}, Part \ref{cor:singindevsat1}, for a characterisation of $F \in \Musat$ with $\singind(F) = 1$. In Theorem \ref{thm:singind} we see that all maximal singular tuples are of the same length (given by the singularity index). By Lemma \ref{lem:Fonly2} we get:
\begin{lemma}\label{lem:singindonly2}
  Consider $F \in \Musat$ not having 1-singular variables (i.e., $\varosing(F) = \es$). Then every maximal singular tuple has length $\singind(F)$, which is the number of different clauses of $F$ containing at least one singular literal.
\end{lemma}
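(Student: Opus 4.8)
The plan is to read off the statement from Lemma~\ref{lem:Fonly2} together with the structural facts about $\shyp(F)$ recorded in Definition~\ref{def:shyp}. First I would make precise that, under the hypothesis $\varosing(F) = \es$, the hypergraph $\shyp(F)$ involves no arbitrary choices and has a very rigid shape: every $v \in \varsing(F)$ is $m$-singular for some $m \ge 2$, so its singular literal $x_v$ (the literal on $v$ of literal-degree $1$) is uniquely determined, and it occurs in exactly one clause of $F$, namely the main clause of $v$. Hence, writing $L := \set{x_v : v \in \varsing(F)}$, for two distinct clauses $C, C' \in F$ each meeting $L$ the sets $C \cap L$ and $C' \cap L$ are non-empty and disjoint; since $\var$ is injective on $L$, the corresponding hyperedges $\var(C \cap L)$ and $\var(C' \cap L)$ are non-empty and disjoint as well. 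In particular distinct clauses meeting $L$ give distinct hyperedges, so $\abs{E(\shyp(F))}$ equals the number of different clauses of $F$ containing at least one singular literal.

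The second step is the elementary observation that in any hypergraph whose hyperedges are non-empty and pairwise disjoint, a set of vertices is a minimal transversal precisely when it consists of exactly one vertex out of each hyperedge. (Such a set is a transversal; dropping any one of its vertices uncovers that vertex's hyperedge by disjointness, so it is minimal. Conversely any transversal meets every hyperedge, and by disjointness together with minimality it can neither meet a hyperedge twice nor contain a vertex lying in no hyperedge.) Thus every minimal transversal of $\shyp(F)$ has cardinality exactly $\abs{E(\shyp(F))}$.

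Finally I would invoke Lemma~\ref{lem:Fonly2}: the variable-sets of the maximal singular tuples of $F$ are exactly the minimal transversals of $\shyp(F)$, and since the variables in a singular tuple are pairwise distinct (Definition~\ref{def:singseq}), the length of such a tuple equals the cardinality of its variable-set. By the previous step all these sets have the same size $\abs{E(\shyp(F))}$; hence all maximal singular tuples have this common length, and in particular $\singind(F)$ --- the minimal length of a maximal singular tuple, by Definition~\ref{def:singindex} --- equals $\abs{E(\shyp(F))}$, which by the first step is the number of different clauses of $F$ containing at least one singular literal.

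I do not expect a genuine obstacle here: all the real work sits in Lemma~\ref{lem:Fonly2} (and, behind it, Corollary~\ref{cor:all2} on total singularity of non-$1$-singular tuples). The only points to verify are the two bookkeeping facts above, namely that pairwise-disjoint non-empty hyperedges force every minimal transversal to have the same cardinality, and that these hyperedges biject with the clauses of $F$ that contain a singular literal.
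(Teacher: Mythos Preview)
Your proposal is correct and follows exactly the route the paper intends: the paper simply writes ``By Lemma~\ref{lem:Fonly2} we get:'' before the statement, leaving implicit the two bookkeeping facts you spell out (that the hyperedges of $\shyp(F)$ biject with the clauses meeting $L$, and that pairwise-disjoint non-empty hyperedges force all minimal transversals to have size $\abs{E(\shyp(F))}$). Your write-up is a faithful expansion of that one-line justification.
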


More general than Lemma \ref{lem:singindonly2} (but with less details), we show next that for all minimally unsatisfiable clause-sets all maximal singular tuples (i.e., maximal sDP-reduction sequences) have the same length. The basic idea is to utilise the good commutativity properties of 1-singular variables, so that induction on the singularity index can be used.
\begin{theorem}\label{thm:singind}
  For $F \in \Musat$ and every maximal singular tuple $(v_1,\dots,v_m)$ for $F$ we have $m = \singind(F)$.
\end{theorem}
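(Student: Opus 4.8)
The plan is to prove the statement that is equivalent (by definition of $\singind$) to the assertion: \emph{every} maximal singular tuple for $F$ has the same length. I would establish this by induction on $n(F)$. I would deliberately \emph{not} induct on $\singind(F)$: a $1$-singular variable of $F$ need not head a given maximal singular tuple --- it can temporarily lose singularity under a non-$1$-singular reduction and only become singular again after later reductions --- so the naive recursion $\singind(F)=1+\singind(\dpi v(F))$ for $v\in\varosing(F)$ is not directly available, whereas $n(\dpi v(F))=n(F)-1$ always holds.

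For the base case $n(F)=0$ we have $F=\set{\bot}$, which is nonsingular, so the empty tuple is the unique maximal singular tuple. For the induction step fix $F$ with $n(F)\ge 1$; if $F$ is nonsingular the empty tuple is again the unique maximal singular tuple, so assume $F$ is singular and let $\vec v=(v_1,\dots,v_m)$, $\vec u=(u_1,\dots,u_k)$ be two maximal singular tuples (hence $m,k\ge 1$). If $\varosing(F)=\es$, I would finish immediately by Lemma \ref{lem:singindonly2}; so assume $\varosing(F)\ne\es$ and split on $\min(m,k)$.

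If say $m=1$, then $\dpi{v_1}(F)$ is nonsingular. Were $v_1$ non-$1$-singular, Corollary \ref{cor:propdps3}, Part \ref{cor:propdps322}, would give $\varsing(F)=\varsing(F)\sm\varsing(\dpi{v_1}(F))\sse\varnosing(F)$, i.e.\ $\varosing(F)=\es$, contradicting our assumption; hence $v_1$ is $1$-singular, and then Part \ref{cor:propdps321} of that corollary yields $\varsing(F)\sm\set{v_1}\sse\varsing(\dpi{v_1}(F))=\es$, so $\varsing(F)=\set{v_1}$; since $u_1$ is singular for $F$ this forces $u_1=v_1$, and nonsingularity of $\dpi{v_1}(F)$ gives $k=1=m$. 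If instead $m,k\ge 2$, I would apply Lemma \ref{lem:comptwost} to obtain singularity-preserving permutations turning $\vec v$ into $(a,\dots)$ and $\vec u$ into $(b,\dots)$ --- still maximal singular tuples of $F$ of lengths $m$ resp.\ $k$ by Corollary \ref{cor:singeqvar} --- with both $(a,b)$ and $(b,a)$ singular for $F$. By Lemma \ref{lem:speqp}, $G:=\dpi{a,b}(F)=\dpi{b,a}(F)\in\Musat$, and $\dpi a(F)$, $\dpi b(F)$, $G$ all have fewer variables than $F$. Deleting $a$ from the reordered $\vec v$ gives a maximal singular tuple of $\dpi a(F)$ of length $m-1$, so by the induction hypothesis every maximal singular tuple of $\dpi a(F)$ has length $m-1$; choosing a maximal singular tuple $(b,\vec w)$ of $\dpi a(F)$ (possible since $b$ is singular for $\dpi a(F)$) and noting that $\vec w$ is then a maximal singular tuple of $\dpi b(\dpi a(F))=G$, the induction hypothesis on $G$ gives $m-1=1+\singind(G)$. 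The symmetric computation using $\vec u$ gives $k-1=1+\singind(G)$, hence $m=k$, completing the induction.

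The heart of the argument is the last case, where Lemma \ref{lem:comptwost} supplies a common commutable initial pair for two otherwise unrelated maximal singular tuples, reducing everything to clause-sets with fewer variables. The main obstacle I anticipate is precisely the length-$1$ case, together with keeping track of which singular variables survive a reduction (a non-$1$-singular reduction can destroy a singular variable while a $1$-singular reduction can create one): the bookkeeping in Corollary \ref{cor:propdps3} is exactly what is needed there, and the remaining steps are routine.
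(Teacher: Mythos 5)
Your proof is correct, but it takes a genuinely different route from the paper's. The paper inducts on $\singind(F)$ and works exclusively with $1$-singular variables: it picks $v \in \varosing(F)$, uses Corollary \ref{cor:propdps3}, Part \ref{cor:propdps32} (a $1$-singular variable and any other singular variable form a commutable initial pair) together with the induction hypothesis to first establish $\singind(\dpi{v}(F)) = \singind(F)-1$ for a suitable $v$, and then to bound $\singind(\dpi{w_1}(F))$ for the head $w_1$ of an arbitrary maximal tuple. (Your stated worry --- that a $1$-singular variable need not head a given maximal tuple --- is exactly what this commutation argument circumvents, so the ``naive recursion'' objection does not apply to the paper's actual proof; but it is a fair reason to look for another route.) You instead induct on $n(F)$, which makes the induction hypothesis unconditionally available for every reduct, and you use Lemma \ref{lem:comptwost} to manufacture a common commutable initial pair $(a,b)$ for two given maximal tuples; three applications of the induction hypothesis (to $\dpi{a}(F)$, $\dpi{b}(F)$ and $\dpi{a,b}(F)=\dpi{b,a}(F)$, the last equality by Corollary \ref{cor:singeqvar}) then pin both lengths to $2+\singind(\dpi{a,b}(F))$. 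The price is the extra case $\min(m,k)=1$, which Lemma \ref{lem:comptwost} cannot handle, and which you dispose of correctly via Corollary \ref{cor:propdps3} (forcing $\varsing(F)=\set{v_1}$ when $\varosing(F)\ne\es$ and $\dpi{v_1}(F)$ is nonsingular). What each approach buys: the paper's argument is more self-contained at that point in the text, needing only Corollary \ref{cor:propdps3} and Corollary \ref{cor:singeqvar}, but its two-stage structure (first a good $v$, then an arbitrary $w_1$) is delicate; yours delegates the combinatorial work to Lemma \ref{lem:comptwost} (which the paper proves earlier but uses only for Lemma \ref{lem:dev1s}) and is structurally more symmetric and arguably easier to verify.
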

\begin{proof}
We prove the assertion by induction on $\singind(F)$. For $\singind(F) = 0$ the assertion is trivial, so assume $\singind(F) > 0$. If $F$ has no 1-singular variables, then the assertion follows by Lemma \ref{lem:singindonly2}, and so we assume that $F$ has a 1-singular variable $v$. First we show that we can choose $v$ such that $\singind(\dpi{v}(F)) = n-1$.

 Consider a maximal singular tuple $(v_1,\dots,v_n)$ of length $n = \singind(F)$. Note that $\singind(\dpi{v_1}(F)) = n-1$. If $v_1$ is 1-singular, then we can use $v := v_1$ and we are done, and so assume $v_1$ is not 1-singular. The induction hypothesis, applied to $\dpi{v_1}(F)$, yields $\singind(\dpi{v_1,v}(F)) = n-2$. Now by Corollary \ref{cor:propdps3}, Part \ref{cor:propdps32}, both tuples $(v_1,v)$ and $(v,v_1)$ are singular for $F$, whence $\dpi{v_1,v}(F) = \dpi{v,v_1}(F)$ holds (Corollary \ref{cor:singeqvar}), and so $\singind(\dpi{v,v_1}(F)) = n-2$. We obtain $\singind(\dpi{v}(F)) \le n-1$, and thus $\singind(\dpi{v}(F)) = n-1$ as claimed.

Now consider an arbitrary maximal singular tuple $(w_1,\dots,w_m)$. It suffices to show that $\singind(\dpi{w_1}(F)) \le n-1$, from which by induction hypothesis the assertion follows. The argument is now similar to above. The claim holds for $w_1=v$, and so assume $w_1 \not= v$. By induction hypothesis we have $\singind(\dpi{v,w_1}(F)) = n-2$. By Corollary \ref{cor:propdps3}, Part \ref{cor:propdps32}, both tuples $(v,w_1)$ and $(w_1,v)$ are singular for $F$. Thus $\singind(\dpi{w_1,v}(F)) = n-2$. We obtain $\singind(\dpi{w_1}(F)) \le n-1$ as claimed. \qed
\end{proof}

\begin{corollary}\label{cor:singindsamen}
  For $F \in \Musat$ and $F', F'' \in \sdp(F)$ we have $n(F') = n(F'')$.
\end{corollary}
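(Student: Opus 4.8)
The plan is to derive the statement directly from Theorem \ref{thm:singind}, once one records that a singular DP-reduction step on a minimally unsatisfiable clause-set removes \emph{exactly one} variable. I would proceed in three steps.

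First, translate membership in $\sdp(F)$ into the language of maximal singular tuples. If $F' \in \sdp(F)$, then $F \tsdps F'$, i.e.\ there is a chain $F = G_0 \tsdp G_1 \tsdp \dots \tsdp G_p = F'$ with reduction variables $u_1, \dots, u_p$; by Definition \ref{def:singseq} the tuple $(u_1,\dots,u_p)$ is singular for $F$ with $\dpi{u_1,\dots,u_p}(F) = F'$, and since $F' \in \Musatns$ is nonsingular, this singular tuple is maximal in the sense of Definition \ref{def:maxsingtup}. Theorem \ref{thm:singind} then gives $p = \singind(F)$, independently of the chosen reduction chain. Next, check the one-step variable count: for $F \in \Musat$ and a singular variable $v$, we have $n(\dpi{v}(F)) = n(F) - 1$. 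The inclusion $\var(\dpi{v}(F)) \sse \var(F) \sm \set{v}$ gives ``$\le$''; for ``$\ge$'' write $m := \vdeg_F(v) - 1$ and use Lemma \ref{lem:singDpMU}, Part \ref{lem:singDpMU3}: conditions \ref{lem:singDpMU3b} and \ref{lem:singDpMU3c} say the $m$ resolvents of the main clause with the side clauses are pairwise distinct and distinct from every clause of $F$ not containing $v$, hence $c(\dpi{v}(F)) = c(F) - (m+1) + m = c(F) - 1$; combining this with $\delta(\dpi{v}(F)) = \delta(F)$ from Corollary \ref{cor:sdppresmu} and the identity $n = c - \delta$ yields $n(\dpi{v}(F)) = (c(F) - 1) - \delta(F) = n(F) - 1$.

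Finally, assemble: iterating the one-step count along the $p$-step reduction from $F$ to $F'$ gives $n(F') = n(F) - p = n(F) - \singind(F)$, and the same computation applied to any $F'' \in \sdp(F)$ gives $n(F'') = n(F) - \singind(F)$, whence $n(F') = n(F'')$.

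There is essentially no obstacle here: all the substance sits in Theorem \ref{thm:singind} (and, behind it, the exchange theorem \ref{thm:nbesp}); the only extra ingredient is the elementary bookkeeping of the middle step. The one subtlety worth a line in the write-up is that a bare $\var$-argument would only show $n$ drops by at least one — in principle a DP-step could also kill variables other than $v$ — so one genuinely needs the $\Musat$-specific non-coincidence of resolvents from Lemma \ref{lem:singDpMU}, Part \ref{lem:singDpMU3} (equivalently, $\delta$-invariance) to pin the drop to exactly one.
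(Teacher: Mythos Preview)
Your proposal is correct and follows exactly the intended route: the paper states this corollary without proof, immediately after Theorem \ref{thm:singind}, and your argument spells out precisely the implicit derivation --- maximal singular tuples have common length $\singind(F)$, and each sDP-step on an element of $\Musat$ removes exactly one variable by $\delta$-invariance (Corollary \ref{cor:sdppresmu}). Your remark that the bare $\var$-inclusion only gives a lower bound on the drop, and that one genuinely needs the $\Musat$-specific non-coincidence from Lemma \ref{lem:singDpMU}, Part \ref{lem:singDpMU3}, is a nice point of care beyond what the paper makes explicit.
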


\section{Confluence modulo isomorphism on eventually $\Smusat$}
\label{sec:confmodiso}

Finally we are able to show our third major result, confluence modulo isomorphism of singular DP-reduction in case all maximal sDP-reductions yield saturated clause-sets.
\begin{definition}\label{def:eventsaturated}
  A minimally unsatisfiable clause-set $F$ is called \textbf{eventually saturated}, if all nonsingular $F'$ with $F \tsdps F'$ are saturated; the set of all eventually saturated clause-sets is $\bmm{\Esmusat} := \set{F \in \Musat : \sdp(F) \sse \Smusat}$.
\end{definition}
By Corollary \ref{cor:sdppressmu} we have $\Smusat \sse \Esmusat$. If $\mc{C} \sse \Musat$ is stable under sDP-reduction, then we have $\mc{C} \sse \Esmusat$ iff $\mc{C} \cap \Musatns \sse \Smusat$. In order to show $\Esmusat \sse \cflimusat$ (recall Definition \ref{def:conflsdp}), we show first that ``divergence in one step'' is enough, that is, if we have a clause-set $F \in \Musat$ such that sDP-reduction is not confluent modulo isomorphism, then we can obtain from $F$ by sDP-reduction the clause-set $F' \in \Musat$ with singularity index $1$ (thus using $\singind(F)-1$ reduction steps) such that also for $F'$ sDP-reduction is not confluent modulo isomorphism:
\begin{lemma}\label{lem:dev1s}
  Consider $F \in \Musat \sm \cflimusat$. So $\singind(F) \ge 1$. Then there is a singular tuple $(v_1,\dots,v_{\singind(F)-1})$ for $F$, such that for $F' := \dpi{v_1,\dots,v_{\singind(F)-1}}(F)$ we still have $\sdp(F') \in \Musat \sm \cflimusat$ (note $\singind(F') = 1$).
\end{lemma}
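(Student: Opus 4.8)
The plan is to prove Lemma~\ref{lem:dev1s} by induction on $\singind(F)$. The base case $\singind(F) = 1$ is immediate: the empty singular tuple does the job, since $F' := F$ and $\singind(F) = 1$ by hypothesis. So assume $\singind(F) = n \ge 2$ and that the claim holds for all elements of $\Musat \sm \cflimusat$ of smaller singularity index. The task is to find a \emph{single} singular variable $v$ for $F$ with $\dpi{v}(F) \in \Musat \sm \cflimusat$; then, since $\singind(\dpi{v}(F)) = n - 1$ by Theorem~\ref{thm:singind} (removing one singular variable drops the singularity index by exactly one) and $\dpi{v}(F) \notin \cflimusat$, the induction hypothesis supplies a singular tuple $(v_2,\dots,v_{n-1})$ for $\dpi{v}(F)$ with the desired property, and prepending $v$ gives the sought tuple $(v,v_2,\dots,v_{n-1})$ for $F$. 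Note $\sdp(\dpi{v}(F)) = \sdp(F)$ because singular DP-reduction results are unchanged by first doing one sDP-step (any maximal singular tuple through $v$ is counted, and by Corollary~\ref{cor:singeqvar} the order does not matter for the final result); so $\sdp(F') \sse \Musat \sm \cflimusat$ follows from the phrasing of the statement.

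\textbf{The core step} is thus: given $F \in \Musat$ with $\singind(F) \ge 2$ and $F \notin \cflimusat$, find a singular variable $v$ of $F$ such that $\dpi{v}(F) \notin \cflimusat$. Since $F \notin \cflimusat$, there are $F_1, F_2 \in \sdp(F)$ with $F_1 \not\cong F_2$. Pick maximal singular tuples $\vec{u} = (u_1,\dots,u_n)$ and $\vec{w} = (w_1,\dots,w_n)$ for $F$ (both of length $n$ by Theorem~\ref{thm:singind}) with $\dpi{u_1,\dots,u_n}(F) = F_1$ and $\dpi{w_1,\dots,w_n}(F) = F_2$. If $u_1 = w_1 =: v$, then both $F_1, F_2 \in \sdp(\dpi{v}(F))$, so $\dpi{v}(F) \notin \cflimusat$ and we are done. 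The problematic case is $u_1 \ne w_1$. Here I would invoke Lemma~\ref{lem:comptwost}: since $n \ge 2$, there are s-preserving permutations $\pi$ of $\vec{u}$ and $\pi'$ of $\vec{w}$ such that $(u_{\pi(1)}, w_{\pi'(1)})$ and $(w_{\pi'(1)}, u_{\pi(1)})$ are both singular for $F$. By Lemma~\ref{lem:speqp}, applying an s-preserving permutation does not change the reduction result, so $\dpi{u_{\pi(1)},\dots,u_{\pi(n)}}(F) = F_1$ and $\dpi{w_{\pi'(1)},\dots,w_{\pi'(n)}}(F) = F_2$. Writing $a := u_{\pi(1)}$ and $b := w_{\pi'(1)}$, we have $F_1 \in \sdp(\dpi{a}(F))$ and $F_2 \in \sdp(\dpi{b}(F))$, and additionally $\dpi{a,b}(F) = \dpi{b,a}(F)$ is a common sDP-descendant. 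Let $G := \dpi{a,b}(F) = \dpi{b,a}(F)$, and let $F_3 \in \sdp(G)$ be any nonsingular reduction result; then $F_3 \in \sdp(\dpi{a}(F)) \cap \sdp(\dpi{b}(F))$. Now compare $F_3$ against $F_1$ and against $F_2$: since $F_1 \not\cong F_2$, it cannot be that both $F_3 \cong F_1$ and $F_3 \cong F_2$; say $F_3 \not\cong F_1$ (the other case is symmetric, using $b$ in place of $a$). Then $F_1, F_3 \in \sdp(\dpi{a}(F))$ with $F_1 \not\cong F_3$, so $\dpi{a}(F) \notin \cflimusat$, and $v := a$ works.

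\textbf{The main obstacle} I expect is bookkeeping around the identity $\sdp(\dpi{v}(F)) \sse \sdp(F)$ and the fact that any nonsingular reduction result of a common descendant $\dpi{a,b}(F)$ is reachable from both $\dpi{a}(F)$ and $\dpi{b}(F)$ — this needs that $\dpi{a,b}(F) = \dpi{b,a}(F)$, which is exactly Corollary~\ref{cor:singeqvar} applied to the two singular tuples $(a,b)$ and $(b,a)$ (both singular for $F$ by the choice via Lemma~\ref{lem:comptwost}), together with the transitivity $F \tsdp \dpi{a}(F) \tsdps G \tsdps F_3$. One should also double-check that $\singind(F') = 1$ at the end: by Theorem~\ref{thm:singind} applied iteratively, $\singind$ drops by exactly $1$ with each singular DP-step, so after $\singind(F) - 1$ steps along a singular tuple we reach singularity index $1$ (it cannot be $0$, since otherwise the original tuple would have had length $\singind(F) - 1 < \singind(F)$, contradicting Theorem~\ref{thm:singind}). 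A minor point to handle cleanly: the statement writes ``$\sdp(F') \in \Musat \sm \cflimusat$'', which I read as ``$\sdp(F') \sse \Musat \sm \cflimusat$'', i.e.\ every nonsingular reduction result of $F'$ lies in $\Musat \sm \cflimusat$; since $\sdp(F') = \sdp(F)$ and each element $G \in \sdp(F)$ is itself nonsingular and equals its own (unique) sDP-result with empty remaining tuple — wait, that would make $G \in \cflimusat$ trivially. So in fact the intended reading must be that $F'$ itself lies in $\Musat \sm \cflimusat$; I would confirm this reading against the usage of the lemma in the sequel (it is used in Section~\ref{sec:confmodiso} as a reduction to the singularity-index-$1$ case), and phrase the induction accordingly: the conclusion is $F' \in \Musat \sm \cflimusat$ with $\singind(F') = 1$, exactly as parenthetically noted.
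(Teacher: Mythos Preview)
Your proof is correct and follows essentially the same approach as the paper's: induction on $\singind(F)$, with the inductive step using Lemma~\ref{lem:comptwost} and Corollary~\ref{cor:singeqvar} to obtain a common descendant $\dpi{a,b}(F) = \dpi{b,a}(F)$ and then comparing any $F_3 \in \sdp(\dpi{a,b}(F))$ against $F_1$ and $F_2$ to force one of $\dpi{a}(F), \dpi{b}(F)$ out of $\cflimusat$. The paper phrases this step as a proof by contradiction (assume every $\dpi{v}(F) \in \cflimusat$), while you argue directly, but the content is identical; your reading of the conclusion as ``$F' \in \Musat \sm \cflimusat$'' is the intended one (and note that your aside $\sdp(\dpi{v}(F)) = \sdp(F)$ is not true in general---only the inclusion $\subseteq$ holds---but you correctly do not rely on it).
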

\begin{proof}
We prove the assertion by induction on $\singind(F) \ge 1$. The assertion is trivial for $\singind(F) = 1$, and so consider $n := \singind(F) \ge 2$. If there is a singular variable $v \in \varsing(F)$ with $\dpi{v}(F) \in \Musat \sm \cflimusat$, then the assertion follows by induction hypothesis. So assume for the sake of contradiction, that for all singular variables $v$ we have $\dpi{v}(F) \in \cflimusat$. Consider (maximal) singular tuples $(v_1,\dots,v_n), (w_1,\dots,w_n)$ for $F$ such that $\dpi{\vec{v}}(F)$ and $\dpi{\vec{w}}(F)$ are not isomorphic. By Lemma \ref{lem:comptwost} w.l.o.g.\ we can assume that $(v_1,w_1)$ and $(w_1,v_1)$ are both singular for $F$, whence $\dpi{v_1,w_1}(F) = \dpi{w_1,v_1}(F)$ by Corollary \ref{cor:singeqvar}. We have $\dpi{v_1}(F), \dpi{w_1}(F) \in \cflimusat$ by assumption, and we obtain the contradiction that $\dpi{\vec{v}}(F)$ and $\dpi{\vec{w}}(F)$ are isomorphic, since $\dpi{\vec{v}}(F)$ is isomorphic to the result obtained by reducing $F$ via a (maximal) singular tuple $\vec{v}' = (v_1,w_1,\dots)$ of length $n$, where permuting the first two elements in $\vec{v'}$ yields the singular tuple $\vec{w}' = (w_1,v_1,\dots)$ with the same result, which in turn is isomorphic to $\dpi{\vec{w}}(F)$. \qed
\end{proof}

\begin{corollary}\label{cor:div1s}
  Consider a class $\mc{C} \sse \Musat$ which is stable under application of singular DP-reduction. Then we have $\mc{C} \sse \cflimusat$ if and only if $\set{F \in \mc{C} : \singind(F)=1} \sse \cflimusat$.
\end{corollary}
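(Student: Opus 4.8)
The plan is to read this corollary off directly from Lemma~\ref{lem:dev1s}. The forward implication is trivial: if $\mathcal{C} \sse \cflimusat$, then in particular $\set{F \in \mathcal{C} : \singind(F) = 1} \sse \mathcal{C} \sse \cflimusat$, so there is nothing to prove.

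For the reverse implication I would argue by contradiction. Suppose $\set{F \in \mathcal{C} : \singind(F) = 1} \sse \cflimusat$, but there is some $F \in \mathcal{C}$ with $F \notin \cflimusat$. First note that $\singind(F) \ge 1$: if $F$ were nonsingular (i.e.\ $\singind(F) = 0$), then the only element of $\sdp(F)$ would be $F$ itself, giving $\abs{\sdp(F)} = 1$ and hence $F \in \cflmusat \sse \cflimusat$. Now apply Lemma~\ref{lem:dev1s} to $F$: it yields a singular tuple $(v_1,\dots,v_{\singind(F)-1})$ for $F$ such that $F' := \dpi{v_1,\dots,v_{\singind(F)-1}}(F)$ satisfies $\singind(F') = 1$ and $F' \in \Musat \sm \cflimusat$.

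The only step that requires a moment's care is checking that $F' \in \mathcal{C}$. By the definition of a singular tuple, each passage $\dpi{v_1,\dots,v_{i-1}}(F) \leadsto \dpi{v_1,\dots,v_i}(F)$ is a single singular DP-reduction step, and all intermediate clause-sets remain in $\Musat$ by Corollary~\ref{cor:sdppresmu}; so, since $\mathcal{C}$ is assumed stable under singular DP-reduction, induction on $i$ gives $F' \in \mathcal{C}$. Hence $F' \in \set{F \in \mathcal{C} : \singind(F) = 1}$ while $F' \notin \cflimusat$, contradicting the hypothesis; this proves $\mathcal{C} \sse \cflimusat$.

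I do not anticipate a real obstacle here: the corollary merely repackages Lemma~\ref{lem:dev1s}, and the single easily-overlooked point is precisely the invocation of the stability hypothesis on $\mathcal{C}$, which is what lets us transfer the singularity-index-$1$ assumption down to the derived clause-set $F'$.
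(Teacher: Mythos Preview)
Your argument is correct and is exactly the intended derivation: the paper states the corollary immediately after Lemma~\ref{lem:dev1s} without proof, and your contrapositive use of that lemma together with the stability hypothesis on $\mc{C}$ is precisely what is meant. The only cosmetic remark is that your observation $\singind(F)\ge 1$ is already noted in Lemma~\ref{lem:dev1s} itself, so you could cite it rather than re-argue it.
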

Now we analyse the main case where all sDP-reductions give saturated results:
\begin{lemma}\label{lem:isonssmu}
  Consider $F \in \Musat$ and a clause $C \in F$. Let $C' := \set{x \in C : \ldeg_F(x) = 1}$ be the set of singular literals in $C$, establishing $C$ as the main clause for the underlying singular variables $\var(x)$ (for $x \in C'$), and let $F_x := \set{D \in F : \ol{x} \in D}$ be the set of side clauses of $\var(x)$ for $x \in C'$. Due to $F \in \Musat$ the sets $F_x$ are non-empty and pairwise disjoint (note that $\var(x)$ is $\abs{F_x}$-singular in $F$ for $x \in C'$). Now assume $\abs{C'} \ge 2$, and that for all $x \in C'$ we have $\dpi{\var(x)}(F) \in \Smusat$. Then:
  \begin{enumerate}
  \item\label{lem:isonssmu1} $\abs{C'} = 2$.
  \item\label{lem:isonssmu2} $\fa\, x \in C'\, \fa\, D \in F_x : (C \sm C') \sse D$.
  \item\label{lem:isonssmu3} For $x, y \in C'$ we have that $\dpi{\var(x)}(F)$ and $\dpi{\var(y)}(F)$ are isomorphic.
  \end{enumerate}
\end{lemma}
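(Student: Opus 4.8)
The plan is to work locally around $C$, comparing for two distinct singular literals $x,y\in C'$ the two reductions $\dpi{v}(F)$ and $\dpi{w}(F)$, where $v:=\var(x)$ and $w:=\var(y)$; both lie in $\Smusat$ by hypothesis, and $v$ is $\abs{F_x}$-singular while $w$ is $\abs{F_y}$-singular. First I would record the degree bookkeeping. Since $\ldeg_F(y)=1$ the literal $y$ occurs only in $C$, and since every side clause of $v$ clashes with $C$ exactly in $v$ (Lemma~\ref{lem:singDpMU}, Part~\ref{lem:singDpMU3a}), $y$ occurs in no side clause of $v$; hence by Lemma~\ref{lem:sdpld} in $F':=\dpi{v}(F)$ the literal $y$ occurs in exactly the $\abs{F_x}$ resolvents $C\res D$, $D\in F_x$ (so $\ldeg_{F'}(y)=\abs{F_x}$), while $\ol y$ and all clauses of $F_y$ are untouched (so $\ldeg_{F'}(\ol y)=\abs{F_y}$). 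Symmetrically, in $\dpi{w}(F)$ the literal $x$ has degree $\abs{F_y}$ and $\ol x$ has degree $\abs{F_x}$.

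The main engine is Lemma~\ref{lem:singDpSMU} applied to $\dpi{v}(F)\in\Smusat$. Consider first the case $\abs{F_x}=1$, say $F_x=\{D\}$. Then, by the bookkeeping above, $w$ is still singular in $F'=\dpi{v}(F)$ with main clause the unique resolvent $R:=C\res D=(C\sm\{x\})\cup(D\sm\{\ol x\})$ and side clauses exactly $F_y$; Lemma~\ref{lem:singDpSMU}, Part~\ref{lem:singDpSMU2b}, then gives $R\sm\{y\}=\bca_{E\in F_y}(E\sm\{\ol y\})$, so in particular $C\sm\{x,y\}\sse E$ for every $E\in F_y$. Running the symmetric argument on $\dpi{w}(F)\in\Smusat$ — where $v$ is singular because $\ldeg_{\dpi{w}(F)}(\ol x)=\abs{F_x}=1$, with main clause $D$ and side clauses the resolvents $C\res E$, $E\in F_y$ — yields $D\sm\{\ol x\}=\bca_{E\in F_y}((C\res E)\sm\{x\})$, and combining the two identities forces $C\sm\{x,y\}\sse D\sm\{\ol x\}$ as well (and $\bca_{E\in F_y}(E\sm\{\ol y\})=D\sm\{\ol x\}$). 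If instead $v$ is non-$1$-singular but $w$ is $1$-singular, the same argument applies with the roles of $v$ and $w$ interchanged. In both cases one obtains the \emph{covering}: $C\sm\{x,y\}$ is contained in every side clause of $v$ and in every side clause of $w$.

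The hard part will be the remaining configuration, where $v$ and $w$ are \emph{both} non-$1$-singular: then they necessarily share the main clause $C$, and by Lemma~\ref{lem:22sing}, Part~\ref{lem:22sing1}, $w$ is non-singular in $\dpi{v}(F)$, so Lemma~\ref{lem:singDpSMU} can no longer be iterated at $w$ inside $\dpi{v}(F)$. I would dispatch this either by showing that this configuration cannot occur under the hypothesis $\dpi{v}(F),\dpi{w}(F)\in\Smusat$, or by showing that it forces $C=C'$ (so $C\sm C'=\es$ and the covering is trivial). A plausible route: feed a would-be non-singular literal $c\in C\sm\{x,y\}$ back into Lemma~\ref{lem:singDpSMU} applied to $\dpi{v}(F)\in\Smusat$ — which, being in $\Smusat$, is stable under further sDP-reduction (Corollary~\ref{cor:sdppressmu}) — at a singular variable of $\dpi{v}(F)$ sitting in a resolvent containing $c$, deriving the same kind of containment of $C\sm\{x,y\}$ in clauses of $F$ and thereby a contradiction with $\ldeg_F(\cdot)=1$.

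With the covering established, the three assertions follow quickly. For Part~\ref{lem:isonssmu1}: a third literal $z\in C'\sm\{x,y\}$ would lie in $C\sm\{x,y\}$, hence in every side clause $E$ of $w$, contradicting $\ldeg_F(z)=1$ (which forces $z$ to occur in $C$ only); so $C'=\{x,y\}$. Part~\ref{lem:isonssmu2} is then exactly the covering, since $C\sm C'=C\sm\{x,y\}$. For Part~\ref{lem:isonssmu3}: let $\alpha$ be the literal bijection that exchanges the variables $v$ and $w$ with $\alpha(x)=\ol y$ (hence $\alpha(\ol x)=y$, $\alpha(\ol y)=x$) and is the identity on all other variables. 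Using that $x,\ol x$ occur only in $C$ and in the side clauses of $v$, that $y,\ol y$ occur only in $C$ and in the side clauses of $w$, and the covering, one checks directly that $\alpha$ maps each resolvent $C\res D$ ($D\in F_x$) to $D$, each $E\in F_y$ to $C\res E$, and fixes the remaining clauses (those on variables other than $v,w$); hence $\alpha(\dpi{v}(F))=\dpi{w}(F)$, i.e.\ $\dpi{v}(F)\cong\dpi{w}(F)$.
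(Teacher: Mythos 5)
Your handling of the case $\min(\abs{F_x},\abs{F_y})=1$ is sound: there the other singular variable survives the first reduction as a singular variable of the saturated clause-set $\dpi{\var(x)}(F)$, and Part~\ref{lem:singDpSMU2b} of Lemma~\ref{lem:singDpSMU} delivers the containments. But the case you yourself flag as ``the hard part'' --- both $\var(x)$ and $\var(y)$ non-$1$-singular --- is a genuine gap, not a loose end. This configuration does occur under the hypotheses: in Example~\ref{exp:conflsdp}, Part~\ref{exp:conflsdp1}, the clause $\set{v,v_1}$ has $C'=\set{v,v_1}$ with both variables $2$-singular and both reductions isomorphic to $\Dt{2}\in\Smusat$. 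So it cannot be dispatched by showing impossibility, and nothing you have written shows it forces $C=C'$ either. Your ``plausible route'' does not go through as sketched: by Lemma~\ref{lem:22sing}, Part~\ref{lem:22sing1}, after reducing on $\var(x)$ the variable $\var(y)$ is no longer singular, and there is no guarantee that $\dpi{\var(x)}(F)$ has \emph{any} singular variable sitting in a resolvent containing a given literal $c\in C\sm C'$, so Lemma~\ref{lem:singDpSMU} has nothing to bite on. Since Parts~\ref{lem:isonssmu1}--\ref{lem:isonssmu3} all rest on the covering property, the whole lemma remains unproved in this case.

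The paper establishes the covering uniformly, with no case split on singularity degrees, via Corollary~\ref{cor:sdppartsatur}: if $(C\sm\set{x,y})\not\sse D$ for some $D\in F_x$, replace $D$ by $D\cup(C\sm\set{x,y})$. This is a proper partial saturation of $F$ (it sits below the full partial saturation of Corollary~\ref{cor:sdppartsatur} for $\var(x)$, and it does not touch the variable $\var(y)$, since $y,\ol{y}\notin C\sm\set{x,y}$ and $F_x\cap F_y=\es$). Hence $\var(y)$ is still singular in the modified clause-set with the same main and side clauses, and applying $\dpi{\var(y)}$ yields a proper partial saturation of $\dpi{\var(y)}(F)$ --- the enlarged $D$ survives DP on $\var(y)$ unchanged --- contradicting $\dpi{\var(y)}(F)\in\Smusat$. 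Your degree bookkeeping, the deduction of Part~\ref{lem:isonssmu1} from the covering, and the renaming $y\mapsto\ol{x}$ for Part~\ref{lem:isonssmu3} all agree with the paper; what is missing is precisely this uniform derivation of the covering.
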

\begin{proof}
Consider (any) literals $x, y \in C'$ with $x \not= y$. Then for $D \in F_x$ we have $(C \sm \set{x,y}) \sse D$ by Corollary \ref{cor:sdppartsatur}, since otherwise the corollary can be applied to $\var(x)$, replacing $D$ by $D \cup  (C \sm \set{x,y})$, which yields the partial saturation $F' \in \Musat$ of $F$ with singular variable $\var(y)$, and where then $\dpi{\var(y)}(F')$ would yield a proper partial saturation $G$ of $\dpi{\var(y)}(F)$, contradicting that the latter is saturated. It follows that actually $C' = \set{x,y}$ must be the case, since if there would be $z \in C' \sm \set{x,y}$, then $\ldeg_F(z) \ge 2$ contradicting the definition of $C'$. It follows Part \ref{lem:isonssmu2}. Finally for Part \ref{lem:isonssmu3} we note that now $F \leadsto \dpi{x}(F)$ just replaces $\ol{x}$ in the clauses of $F_x$ by $y$, while $F \leadsto \dpi{y}(F)$ just replaces $\ol{y}$ in the clauses of $F_y$ by $x$, and thus renaming $y$ in $\dpi{x}(F)$ to $\ol{x}$ yields $\dpi{y}(F)$. \qed
\end{proof}

\begin{corollary}\label{cor:singindevsat}
  For $F \in \Musat$ with $\singind(F) = 1$ we have:
  \begin{enumerate}
  \item\label{cor:singindevsat1} If $\abs{\varsing(F)} \ge 2$:
    \begin{enumerate}
    \item\label{cor:singindevsat11} $\varsing(F) = \varnosing(F)$, that is, all singular variables are non-1-singular.
    \item\label{cor:singindevsat12} The main clauses of the singular variables coincide (that is, there is $C \in F$ such that for all singular literals $x$ for $F$ we have $x \in C$).
    \item\label{cor:singindevsat13} If $F \in \Esmusat$ then $\abs{\varsing(F)} = 2$.
    \end{enumerate}
  \item\label{cor:singindevsat2} If $F \in \Esmusat$ then $F \in \cflimusat$.
  \end{enumerate}
\end{corollary}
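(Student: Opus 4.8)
The plan is to prove 1(a), 1(b), 1(c) and 2 in this order, each time reducing to a result already established, the guiding observation being a reformulation of Theorem \ref{thm:singind}. Since $\singind(F) = 1$, every singular tuple for $F$ has length at most $1$ (any singular tuple extends to a maximal one, which by Theorem \ref{thm:singind} has length $\singind(F) = 1$); equivalently, for every singular variable $v$ the clause-set $\dpi{v}(F)$ is already nonsingular, so it lies in $\sdp(F)$, and there is \emph{no} singular tuple $(v,w)$. In particular $\varsing(F) \ne \es$ and $\sdp(F) = \set{\dpi{v}(F) : v \in \varsing(F)}$ (each $\dpi{v}(F)$ being in $\Musatns$ by Corollary \ref{cor:sdppresmu} and the maximality of the tuple $(v)$).

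For 1(a) I would argue by contradiction: if some $v \in \varsing(F)$ were $1$-singular, pick $w \in \varsing(F) \sm \set{v}$ (possible since $\abs{\varsing(F)} \ge 2$); by Corollary \ref{cor:propdps3}, Part \ref{cor:propdps321}, $w$ is still singular in $\dpi{v}(F)$, so $(v,w)$ is a singular tuple of length $2$, contradicting the observation above. Hence $\varosing(F) = \es$, i.e.\ $\varsing(F) = \varnosing(F)$. For 1(b), every singular variable now has a well-defined singular literal and main clause; given distinct $v,w \in \varsing(F)$ with main clauses $C,D$, Lemma \ref{lem:22sing} gives either $C = D$, or $C \ne D$ together with $w \in \varnosing(\dpi{v}(F))$ --- but the latter again yields a length-$2$ singular tuple $(v,w)$, which is impossible. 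So $C = D$ for every pair, hence all main clauses coincide in a single clause $C \in F$.

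For 1(c) I would specialize Lemma \ref{lem:isonssmu} to this clause $C$. Writing $C' := \set{x \in C : \ldeg_F(x) = 1}$ as there, clash-freeness of $C$ together with 1(b) shows that each singular variable contributes exactly its singular literal to $C'$, so $\abs{C'} = \abs{\varsing(F)} \ge 2$. For each $x \in C'$ the variable $\var(x)$ is singular, so by the opening observation $\dpi{\var(x)}(F) \in \sdp(F)$, and $F \in \Esmusat$ gives $\dpi{\var(x)}(F) \in \Smusat$ --- exactly the hypothesis of Lemma \ref{lem:isonssmu}. Its Part \ref{lem:isonssmu1} then forces $\abs{C'} = 2$, i.e.\ $\abs{\varsing(F)} = 2$. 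Finally, for 2: if $\abs{\varsing(F)} = 1$ then $\sdp(F)$ is a singleton, so $F \in \cflmusat \sse \cflimusat$; if $\abs{\varsing(F)} \ge 2$ then by 1(c) $\varsing(F) = \set{v,w}$, by 1(b) $v$ and $w$ share the main clause $C$, and Lemma \ref{lem:isonssmu}, Part \ref{lem:isonssmu3}, applied to the two singular literals of $C$ gives $\dpi{v}(F) \cong \dpi{w}(F)$; thus the two elements of $\sdp(F)$ are isomorphic and $F \in \cflimusat$.

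I do not expect a genuinely hard step here: the statement is essentially a careful repackaging of Theorem \ref{thm:singind}, Lemma \ref{lem:22sing}, Corollary \ref{cor:propdps3} and Lemma \ref{lem:isonssmu}. The one place demanding attention is the bookkeeping in 1(c) --- one must be sure that the combinatorial set $C'$ in Lemma \ref{lem:isonssmu} is exactly the set of singular literals of $F$, so that $\abs{C'} = \abs{\varsing(F)}$; this is precisely what 1(a)--(b) deliver. (Combined with Corollary \ref{cor:div1s} and the fact that $\Esmusat$ is stable under singular DP-reduction, since $\sdp$ only shrinks along reductions, Part 2 also gives the global statement $\Esmusat \sse \cflimusat$.)
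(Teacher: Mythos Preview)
Your proof is correct and follows essentially the same route as the paper: Part~1(a) via Corollary~\ref{cor:propdps3}, Part~\ref{cor:propdps321}; Part~1(b) via Lemma~\ref{lem:22sing}; and Parts~1(c), 2 via Lemma~\ref{lem:isonssmu}. Your version is in fact more careful than the paper's terse citation list, since you make explicit the crucial consequence of Theorem~\ref{thm:singind} (no singular tuple of length~$2$) and verify the identification $\abs{C'} = \abs{\varsing(F)}$ needed to invoke Lemma~\ref{lem:isonssmu}.
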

\begin{proof}
Part \ref{cor:singindevsat11} follows by Part \ref{cor:propdps321} of Corollary \ref{cor:propdps3}, and Part \ref{cor:singindevsat12} follows by Lemma \ref{lem:22sing}. Now Parts \ref{cor:singindevsat13}, \ref{cor:singindevsat2} follow from Lemma \ref{lem:isonssmu}. \qed
\end{proof}
\begin{example}\label{exp:conflsdp_cont3}
  The two clause-sets $F$ from Example \ref{exp:conflsdp} (recall Example \ref{exp:conflsdp_cont2}) fulfil $\singind(F) = 1$ and $\abs{\varsing(F)} = 2$. For $F$ from in Part \ref{exp:conflsdp1} there we have $F \in \Esmusat$, for $F$ from Part \ref{exp:conflsdp2} we have $F \notin \cflimusat$.
\end{example}

By Corollary \ref{cor:div1s} we obtain from Part \ref{cor:singindevsat2} of Corollary \ref{cor:singindevsat}:
\begin{theorem}\label{thm:evsatcfl}
  $\Esmusat \subset \cflimusat$.
\end{theorem}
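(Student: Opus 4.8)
The plan is to derive Theorem \ref{thm:evsatcfl} as an immediate consequence of the machinery already assembled. The statement $\Esmusat \subset \cflimusat$ has two halves: the inclusion $\Esmusat \sse \cflimusat$, and strictness. For the inclusion, the key observation is that $\Esmusat$ is stable under singular DP-reduction: if $F \in \Esmusat$ and $F \tsdp F'$, then every nonsingular $F''$ with $F' \tsdps F''$ also satisfies $F \tsdps F''$, hence $F'' \in \Smusat$; so $F' \in \Esmusat$. Having this stability, Corollary \ref{cor:div1s} applies with $\mc{C} := \Esmusat$, reducing the goal to showing $\set{F \in \Esmusat : \singind(F) = 1} \sse \cflimusat$. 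But that is exactly Part \ref{cor:singindevsat2} of Corollary \ref{cor:singindevsat}, which states that $F \in \Esmusat$ with $\singind(F) = 1$ lies in $\cflimusat$. So the inclusion follows by combining these two ingredients, precisely as the one-line remark preceding the theorem statement indicates.

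Concretely, I would write: Let $F \in \Esmusat$. If $\singind(F) = 0$ then $F$ is nonsingular, so $\sdp(F) = \set{F}$ and trivially $F \in \cflimusat$. Otherwise, by Lemma \ref{lem:dev1s} (or directly by Corollary \ref{cor:div1s} applied to the sDP-stable class $\Esmusat$), confluence modulo isomorphism of sDP-reduction on all of $\Esmusat$ reduces to confluence modulo isomorphism on $\set{F \in \Esmusat : \singind(F) = 1}$; and for such $F$, Part \ref{cor:singindevsat2} of Corollary \ref{cor:singindevsat} gives $F \in \cflimusat$. Hence $\Esmusat \sse \cflimusat$.

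For strictness, I would exhibit an explicit $F \in \cflimusat \sm \Esmusat$. A natural candidate is a singular minimally unsatisfiable clause-set of deficiency $1$: by the result of \cite{DDK98} recalled in Example \ref{exp:conflsdp}, every $F \in \Musati{\delta=1}$ lies in $\cflmusat \sse \cflimusat$ (indeed $\sdp(F) = \set{\set{\bot}}$), but $\set{\bot} \notin \Smusat$ — equivalently, $\set{\bot}$ has no literals to weaken, but the point is simply that $\sdp(F) \not\sse \Smusat$ for, say, $F = \set{\set{v},\set{\ol v}}$, so $F \notin \Esmusat$. Alternatively one may cite Part \ref{exp:conflsdp2} of Example \ref{exp:conflsdp} only for the non-inclusion direction, or note any $F \in \cflimusat$ whose unique (up to isomorphism) nonsingular reduct is non-saturated. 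I would pick $F := \set{\set{v},\set{\ol v}} \in \Musati{\delta=1}$: then $\sdp(F) = \set{\set{\bot}}$ so $F \in \cflmusat \sse \cflimusat$, yet $\set{\bot} \notin \Smusat$ (a saturated MU must allow no clause-weakening, and vacuously $\set{\bot}$ qualifies — so instead take an $F$ whose reduct is a genuine non-saturated MU, e.g.\ $F = \set{\set{v,a},\set{v,\ol a},\set{\ol v,b},\set{\ol v,\ol b}} \leadsto \set{\set{a},\set{b},\dots}$ after sDP, landing outside $\Smusat$).

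The only genuine obstacle is purely expository: pinning down a clean witness for strictness whose unique sDP-normal form is demonstrably non-saturated, since the surrounding text does not single one out for this purpose. The mathematical content of the theorem itself is entirely routine once Corollaries \ref{cor:div1s} and \ref{cor:singindevsat} are in hand; the proof is essentially the two-line deduction noted just before the statement, and I would present it as such, deferring the strictness witness to a one-sentence reference to an $F \in \Musati{\delta=1}$ or to Example \ref{exp:conflsdp}.
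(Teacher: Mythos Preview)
Your inclusion argument is correct and is exactly the paper's proof: the class $\Esmusat$ is stable under sDP-reduction, so Corollary~\ref{cor:div1s} reduces the question to singularity index~$1$, and there Part~\ref{cor:singindevsat2} of Corollary~\ref{cor:singindevsat} applies.

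Your strictness argument, however, has a genuine gap: the witnesses you propose from $\Musati{\delta=1}$ do not work. For any $F \in \Musati{\delta=1}$ we have $\sdp(F) = \set{\set{\bot}}$, and $\set{\bot} \in \Smusat$ (vacuously, since $\var(\set{\bot}) = \es$ and so no saturation step is available); hence $\Musati{\delta=1} \sse \Esmusat$. Likewise $\Musati{\delta=2} \sse \Esmusat$, since $\Musatnsi{\delta=2} \sse \Smusat$ by Theorem~\ref{thm:HKBMU2}. Your second attempt, $F = \set{\set{v,a},\set{v,\ol a},\set{\ol v,b},\set{\ol v,\ol b}} \in \Uclashi{\delta=1}$, also reduces to $\set{\bot}$ and so lies in $\Esmusat$.

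For a valid witness you must go to deficiency at least~$3$: any $F \in \Musatns \sm \Smusat$ works, since then $\sdp(F) = \set{F}$ gives $F \in \cflmusat \sse \cflimusat$ while $F \notin \Esmusat$. A concrete instance is the clause-set $F_2$ from Part~\ref{exp:conflsdp2} of Example~\ref{exp:conflsdp}: it is nonsingular by construction, and it cannot be saturated, for otherwise the $F$ constructed there would satisfy $\sdp(F) = \set{F_1,F_2} \sse \Smusat$ and hence $F \in \Esmusat$, contradicting (via the inclusion you have just proved) the fact that $F_1 \not\cong F_2$.
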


\section{Applications to $\Musati{\delta=2}$}
\label{sec:appmu2}

If $F \in \cflimusat$, then we can speak of \emph{the non-singularity type} of $F$ as the (unique) isomorphism type of the elements of $\sdp(F)$. In this section we show that for $F \in \Musati{\delta=2}$ these assumptions are fulfilled.  First we recall the fundamental classification:
\begin{definition}\label{def:mud2}
  Consider $n \ge 2$, let addition for the indices of variables $v_1, \dots, v_n$ be understood modulo $n$ (so $n+1 \leadsto 1$), and define $P_n := \set{v_1, \dots, v_n}$, $N_n := \set{\ol{v_1}, \dots, \ol{v_n}}$, $C_i := \set{\ol{v_i},v_{i+1}}$ for $i \in \tb 1n$, and finally $\bmm{\Dt{n}} := \setb {P_n, N_n} \cup \setb{C_i : i \in \tb 1n} \in \Musatnsi{\delta=2}$.
\end{definition}
So $n(\Dt{n}) = n$ and $c(\Dt{n}) = n+2$. Recall Example \ref{exp:F2F3}, where $\Dt{2}, \Dt{3}, \Dt{4}$ were already given. The clause-sets $\Dt{n}$ are precisely (up to isomorphism) the non-singular elements of $\Musati{\delta=2}$:
\begin{theorem}\label{thm:HKBMU2}\cite{KleineBuening2000SubclassesMU}
  For $F \in \Musatnsi{\delta=2}$ we have $F \cong \Dt{n(F)}$.
\end{theorem}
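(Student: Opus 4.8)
The plan is to prove $F\cong\Dt{n(F)}$ by induction on $n:=n(F)$. (Note $n(F)\ge 2$ always: $n(F)=0$ would give $c(F)=2$ while $F\sse\set{\bot}$, impossible; $n(F)=1$ would give $c(F)=3$ while only two clauses exist over one variable, impossible.) For the base case $n=2$ we have $c(F)=4$, and every clause uses only the two variables of $F$, so $\ell(F):=\sum_{C\in F}\abs{C}\le 2\cdot 4=8$; on the other hand nonsingularity gives $\vdeg_F(v)\ge 4$ for each of the two variables, so $\ell(F)=\sum_v\vdeg_F(v)\ge 8$. Hence $\ell(F)=8$, every clause has length exactly $2$, and therefore $F$ consists of all four distinct binary clauses over its two variables, i.e.\ $F=\Dt 2$.

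For the inductive step let $n\ge 3$. By the bounds on minimal variable occurrences in minimally unsatisfiable clause-sets (\cite{KullmannZhao2011Bounds}), combined with nonsingularity, $F$ has a variable $v$ with $\ldeg_F(v)=\ldeg_F(\ol v)=2$; write $A_1,A_2$ for the two clauses containing $v$ and $B_1,B_2$ for the two containing $\ol v$. The key claim is that DP-reduction on $v$ is ``clean'': among the (at most four) resolvents $A_i\res B_j$ exactly one ``collision'' occurs (a tautological pair, two coinciding resolvents, or a resolvent already present in $F$), and no variable besides $v$ disappears. Granting this, $G:=\dpi v(F)$ satisfies $c(G)=c(F)-1$ and $n(G)=n-1$, hence $\delta(G)=\delta(F)=2$; moreover $G$ is again minimally unsatisfiable (no clause needed for unsatisfiability is lost) and again nonsingular (a short computation of how the degree of a literal $x\ne v,\ol v$ changes shows $\ldeg_G(x)\ge 2$ whenever $\ldeg_F(x)\ge 2$). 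So $G\in\Musatnsi{\delta=2}$ with $n(G)=n-1$, and the induction hypothesis gives $G\cong\Dt{n-1}$. Finally one reconstructs $F$: reversing the single DP-step amounts to ``un-resolving'' three clauses of $\Dt{n-1}$ through $v$, and a short case analysis driven by nonsingularity and minimal unsatisfiability shows the only possibility is $F\cong\Dt n$ (the collision is the tautological resolvent of the all-positive and all-negative clauses of $\Dt{n-1}$, while the surviving resolvents record how one binary cycle-clause of $\Dt{n-1}$ splits into two through $v$).

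The main obstacle is the key claim: that for \emph{some} degree-$4$ variable $v$, DP-reduction preserves both minimal unsatisfiability and the deficiency (equivalently, that exactly one resolvent collision occurs and no further variable is lost) --- this is the genuinely combinatorial heart, and it is precisely where the hypothesis $\delta(F)=2$ (rather than higher deficiency) is used. A smoother route is to first establish that $F$ is already \emph{saturated}, i.e.\ $\Musatnsi{\delta=2}=\Smusatnsi{\delta=2}$: in a saturated clause-set the occurrences of a low-degree variable are rigidly constrained (compare Lemma \ref{lem:singDpSMU} and Corollary \ref{cor:sdppartsatur}), which makes both the ``cleanness'' of the elimination and the reconstruction much more transparent; for instance, one may then split $F$ on $v$ into two clause-sets of $\Musati{\delta=1}$ and invoke the known structure of deficiency-$1$ minimally unsatisfiable clause-sets. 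In either case the reconstruction step is routine but must be carried out carefully so as to pin down the isomorphism type uniquely.
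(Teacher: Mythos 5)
This theorem is not proved in the paper at all --- it is quoted from \cite{KleineBuening2000SubclassesMU} as an external result --- so there is no in-paper argument to compare against; your proposal has to stand on its own, and as it stands it is a proof \emph{plan} rather than a proof. The base case $n=2$ is correct and complete (the degree-counting argument pinning $F$ down as the four binary clauses over two variables is fine). The inductive step, however, rests entirely on the ``key claim'' that some variable $v$ with $\ldeg_F(v)=\ldeg_F(\ol v)=2$ admits a \emph{clean} elimination, i.e.\ that $\dpi{v}(F)$ again lies in $\Musatnsi{\delta=2}$ with exactly one fewer variable and one fewer clause. None of the three ingredients of this claim is established: (i) for a non-singular variable, $\dpi{v}(F)$ of an $F\in\Musat$ need not be minimally unsatisfiable, and its MU-core could a priori have smaller deficiency or fewer variables than $n-1$; (ii) the assertion that exactly one of the four potential resolvents ``collides'' is exactly equivalent to the deficiency being preserved, so it cannot be waved through; (iii) the claim that $\ldeg_G(x)\ge 2$ whenever $\ldeg_F(x)\ge 2$ is false for DP-reduction in general (collisions among resolvents can merge occurrences), and ruling this out here again requires knowing the collision pattern. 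You acknowledge this yourself (``the genuinely combinatorial heart''), but acknowledging the gap does not fill it. The final reconstruction step (``a short case analysis \dots shows the only possibility is $F\cong\Dt n$'') is likewise asserted, not carried out, and it is not routine: one must show that \emph{every} way of un-resolving three clauses of $\Dt{n-1}$ through a fresh non-singular variable, consistent with minimal unsatisfiability, yields $\Dt{n}$.

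Your closing suggestion --- first prove $\Musatnsi{\delta=2}=\Smusatnsi{\delta=2}$ and then split on a degree-$4$ variable into two pieces governed by the known structure of $\Musati{\delta=1}$ --- is in fact much closer to how this theorem is actually proved in the literature (splitting plus deficiency accounting for the two branches), and I would encourage you to develop that route instead: saturation rigidifies the two positive and two negative occurrences of $v$ (compare Lemma \ref{lem:singDpSMU}), which is what makes both the ``cleanness'' and the uniqueness of the reconstruction provable rather than merely plausible. But establishing that every nonsingular $F\in\Musati{\delta=2}$ is saturated is itself a nontrivial lemma that your sketch does not supply. As submitted, the argument has a genuine gap at its centre.
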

We show now that for $F \in \Musati{\delta=2}$ we have the non-singularity type of $F$, which can be encoded as the number of variables left after complete sDP-reduction, using that the isomorphism types in $\Musatnsi{\delta=2}$ are determined by their number of variables:
\begin{theorem}\label{thm:confmodisomu2}
  $\Musati{\delta=2} \sse \cflimusat$.
\end{theorem}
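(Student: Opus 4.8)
The plan is to reduce the statement $\Musati{\delta=2} \sse \cflimusat$ to the already-established $\Esmusat \subset \cflimusat$ (Theorem~\ref{thm:evsatcfl}), by showing that $\Musati{\delta=2} \sse \Esmusat$. Since $\Musati{\delta=2}$ is stable under singular DP-reduction (Corollary~\ref{cor:sdppresmu}), by the remark following Definition~\ref{def:eventsaturated} it suffices to prove that $\Musatnsi{\delta=2} \sse \Smusat$, i.e.\ that every \emph{nonsingular} minimally unsatisfiable clause-set of deficiency $2$ is already saturated. But by Theorem~\ref{thm:HKBMU2} (Kleine B\"uning) every $F \in \Musatnsi{\delta=2}$ is isomorphic to some $\Dt{n(F)}$, and Definition~\ref{def:mud2} asserts $\Dt{n} \in \Musatnsi{\delta=2}$ --- what is really needed is that $\Dt{n} \in \Smusat$. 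So the core task is the verification that each $\Dt{n}$ is saturated minimally unsatisfiable; once this is in hand, saturatedness is isomorphism-invariant, and the chain $\Musati{\delta=2} \sse \Esmusat \subset \cflimusat$ closes.

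To verify $\Dt{n} \in \Smusat$ I would use the criterion already employed in Example~\ref{exp:F2F3} for $\Dt{4}$, namely Corollary~5.3 of \cite{Kullmann2007ClausalFormZII}: $F \in \Smusat$ iff $F \in \Usat$ and for every variable $v$ and every $\ve \in \set{0,1}$ the clause-set $\pao v{\ve} * F$ is minimally unsatisfiable. By the cyclic symmetry of $\Dt{n}$ (rotating indices $v_i \mapsto v_{i+1}$ is an automorphism, and there is also the flip symmetry swapping $P_n \leftrightarrow N_n$ together with reversing the cycle) it is enough to treat $v = v_1$, $\ve = 0$. Setting $v_1 \to 0$ removes $N_n$'s literal $\ol{v_1}$, kills the clauses containing $v_1$ (that is $P_n$ and $C_n = \set{\ol{v_n},v_1}$ disappear or shrink), and one computes that $\pao{v_1}0 * \Dt{n} = \set{\set{v_2,\dots,v_n}} \cup \set{C_i : i \in \tb 2{n-1}} \cup \set{\set{\ol{v_n}}}$, which after noticing it is a ``staircase'' of binary clauses hanging off one long clause and one unit clause is readily seen to lie in $\Musati{\delta=1}$ (it is in fact renamable Horn with $\rk_1$ deriving $\bot$, so Lemma~\ref{lem:characupecmu} confirms minimal unsatisfiability of deficiency $1$). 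That $\Dt{n}$ itself is unsatisfiable is clear ($P_n$ forces some $v_i$ true, propagating around the cycle $C_i$ forces all true, contradicting $N_n$), and its minimal unsatisfiability and deficiency $2$ are already recorded in Definition~\ref{def:mud2}.

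The main obstacle --- really the only point requiring care --- is the bookkeeping in the reduction $\pao{v_1}0 * \Dt{n}$: one must correctly track which clauses vanish (those containing the satisfied literal $v_1$), which shrink (those containing $\ol{v_1}$, here just $N_n$), and confirm the result is subsumption-free and genuinely in $\Musati{\delta=1}$; the symmetric case $\ve = 1$ is handled by the flip automorphism, and the remaining variables $v_i$ by the rotation automorphism, so no separate work is needed there. An alternative route to $\Dt{n} \in \Smusat$ that avoids \cite{Kullmann2007ClausalFormZII} entirely would be to check the definition directly: for each clause $C \in \set{P_n,N_n} \cup \set{C_i}$ and each literal $x$ addable to $C$, exhibit a satisfying assignment of $\saturate(\Dt{n},C,x)$ --- for the long clauses $P_n, N_n$ this is immediate since weakening them to length $n+1$ over $n$ variables makes them tautological-free but removes the only ``global'' constraint, and for a binary $C_i$, adding a literal breaks exactly the implication that chained the cycle, leaving a satisfiable instance. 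Either way the argument is short, and with $\Dt{n} \in \Smusat$ secured the theorem follows formally from Theorems~\ref{thm:HKBMU2} and~\ref{thm:evsatcfl}.
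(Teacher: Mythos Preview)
Your proposal is correct and coincides with the paper's second proof: establish $\Musatnsi{\delta=2} \sse \Smusat$ (via Theorem~\ref{thm:HKBMU2} and $\Dt{n} \in \Smusat$), deduce $\Musati{\delta=2} \sse \Esmusat$, and invoke Theorem~\ref{thm:evsatcfl}. The paper also records a shorter first proof that you did not find and that sidesteps saturation entirely: by Corollary~\ref{cor:singindsamen} all elements of $\sdp(F)$ have the same number of variables, and by Theorem~\ref{thm:HKBMU2} the isomorphism type of any $G \in \Musatnsi{\delta=2}$ is already determined by $n(G)$, so confluence modulo isomorphism is immediate --- no need for $\Esmusat$ or for verifying $\Dt{n} \in \Smusat$.

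One small wording slip in your computation: setting $v_1 \to 0$ makes $\ol{v_1}$ true, so $N_n$ and $C_1$ are \emph{satisfied and removed} (not merely shortened), while $P_n$ and $C_n$ lose the literal $v_1$ and shrink; your displayed result for $\pao{v_1}0 * \Dt{n}$ is nevertheless correct.
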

\begin{proof}
  The first proof is obtained by applying Corollary \ref{cor:singindsamen} and the observation that non-isomorphic elements of $\Musatnsi{\delta=2}$ have different numbers of variables. The second proof is obtained by applying Theorem \ref{thm:evsatcfl} and the fact that $\Musatnsi{\delta=2} \sse \Smusat$, whence $\Musati{\delta=2} \sse \Esmusat$. \qed
\end{proof}

\begin{definition}\label{def:typeMU2}
  By Theorem \ref{thm:confmodisomu2} to every $F \in \Musati{\delta=2}$ we can associate its \textbf{non-singularity type} $\bmm{\mutt(F)} \in \NN_{\ge 2}$, the unique $n$ such that $F$ by singular DP-reduction can be reduced to a clause-set isomorphic to $\Dt{n}$.
\end{definition}
So, considering the structure of $\Dt{n}$ as a ``contradictory cycle'', we can say that every $F \in \Musati{\delta=2}$ contains a contradictory cycle, where the length of that cycle is $\mutt(F)$ (and thus uniquely determined), while, as Example \ref{exp:conflsdp} shows, the variables constituting such a cycle are not uniquely determined.

\section{Conclusion and open problems}
\label{sec:open}

We have discussed questions regarding confluence of singular DP-reduction on minimally unsatisfiable clause-sets. Besides various detailed characterisations, we obtained the invariance of the length of maximal sDP-reduction-sequences, confluence for saturated and confluence modulo isomorphism for eventually saturated clause-sets. The main open questions regarding these aspects are:
\begin{enumerate}
\item Can we obtain a better overview on singular tuples for $F \in \Musat$ ?
  \begin{enumerate}
  \item What are the structural properties of the set of all singular tuples, for $F \in \Musat, \Smusat, \Uclash$ ?
  \item Especially for $F \in \Uclash$ it should hold that if $\singind(F)$ is ``large'', then $\abs{\varsing(F)}$ must be ``large''. More precisely:
    \begin{conjecture}
      For every $k \in \NN$ there are $a \in \NN$ and $\alpha \in \RR_{>0}$ such that for all $F \in \Uclashi{\delta=k}$ with $\singind(F) \ge a$ we have $\abs{\varsing(F)} \ge \alpha \cdot \singind(F)$.
    \end{conjecture}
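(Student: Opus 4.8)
Here is a plan towards a proof; it translates the conjecture into a statement about how many variables can \emph{turn singular} during complete singular DP-reduction, and then tries to bound that number using that the deficiency stays fixed.

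\emph{Set-up.} Since $\Uclash \subset \Smusat$ (Lemma \ref{lem:UHITSMU}) and $\Uclash$ is stable under DP-reduction (Corollary \ref{cor:hituDPg}), singular DP-reduction on $F \in \Uclashi{\delta=k}$ is confluent (Theorem \ref{thm:confsat}), with a unique result $F^* \in \Uclashnsi{\delta=k}$, $\sdp(F) = \set{F^*}$. On a hitting clause-set one sDP-step on an $m$-singular variable $v$ with main clause $C$, singular literal $x$ and side clauses $D_1,\dots,D_m$ is simply ``delete $C$ and delete $\ol x$ from each $D_i$'': by Lemma \ref{lem:singDpSMU} (applicable as $\Uclash \subset \Smusat$) we have $C' = \bca_i D_i'$, hence $C \res D_i = D_i \sm \set{\ol x}$. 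Thus $n$ and $c$ each drop by exactly $1$ per step, so $\singind(F) = n(F) - n(F^*)$ by Theorem \ref{thm:singind}.

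\emph{Reformulation.} Fix a maximal singular tuple $(v_1,\dots,v_n)$ for $F$ with $n = \singind(F)$ and reduction sequence $F = F_0 \tsdp \cdots \tsdp F_n = F^*$. Since literal-degrees are non-increasing along the sequence (Lemma \ref{lem:sdpld}) and $F^*$ is nonsingular, no variable of $\varsing(F)$ can survive, so $\set{v_1,\dots,v_n} \spe \varsing(F)$; moreover each $v_i$ is either already singular in $F$, or ``turns singular'' at the unique step at which a main clause $C_j$ with $v_i \in \var(C_j)$ gets deleted whose corresponding literal has literal-degree $2$ in $F_{j-1}$ (Lemma \ref{lem:sdpld}, Part \ref{lem:sdpld4}), and in the latter case $v_i$ becomes \emph{non-1-singular} and the step $j$ is necessarily 1-singular (Corollary \ref{cor:propdps3}, Part \ref{cor:propdps311}). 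Writing $G$ for the number of variables of the second kind, we get the identity $\singind(F) = \abs{\varsing(F)} + G$. Hence the conjecture is \emph{equivalent} to a bound $G \le c_k \cdot \abs{\varsing(F)}$ (for $\singind(F)$ large), uniformly over $F \in \Uclashi{\delta=k}$.

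\emph{Bounding $G$ --- the hard part.} $G$ is at most the number of literal-occurrences of literal-degree $2$ sitting inside the $n$ deleted main clauses. Two avenues suggest themselves. (a) A direct structural analysis of $\Uclashi{\delta=k}$, using the hitting identity $\sum_{C \in F} 2^{-\abs C} = 1$ together with the deficiency bookkeeping $\sum_v (\vdeg_F(v) - 4) = \ell(F) - 4\,n(F)$ (non-negative on the nonsingular reduct by Lemma \ref{lem:UHITSMU}) to show that a single 1-singular step cannot expose many new non-1-singular variables. (b) Using the (still open) conjecture of \cite{KullmannZhao2011Bounds} that $\Uclashnsi{\delta=k}$ has finitely many isomorphism types, which gives $n(F^*) \le N_k$, hence $\singind(F) = n(F) - n(F^*) \ge n(F) - N_k$, reducing the claim to: $F \in \Uclashi{\delta=k}$ has at least $\alpha_k\, n(F)$ singular variables --- attackable by ``un-reducing'' $F^*$ one inverse sDP-step at a time and checking that each step adds exactly one variable while turning at most boundedly (in $k$) many earlier variables from singular to nonsingular. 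I expect the genuine obstacle to be exactly the control required in (a): ruling out that a 1-singular step with a large main clause creates many new non-1-singular variables at once seems, in full generality, of comparable difficulty to the finiteness of $\Uclashnsi{\delta=k}$ itself. I would therefore pursue route (b), first establishing (or assuming) the bound $n(F^*) \le N_k$ and then running the inverse-sDP accounting, which looks comparatively routine.
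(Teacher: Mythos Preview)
The statement you are addressing is not a theorem of the paper but an \emph{open conjecture}, listed in Section~\ref{sec:open} among the problems the authors do not resolve. There is therefore no proof in the paper to compare your proposal against; your task would be to supply a genuinely new argument.

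Your reformulation is correct and is the natural first step. Using $\Uclash \subset \Smusat$, literal-degrees are non-increasing under sDP (this is Corollary~\ref{cor:propdps3b}, not Lemma~\ref{lem:sdpld} alone), so every $v \in \varsing(F)$ is eventually reduced, whence $\varsing(F) \sse \set{v_1,\dots,v_n}$ and the identity $\singind(F) = \abs{\varsing(F)} + G$ holds, with $G$ counting the variables that \emph{turn} singular along the reduction. The equivalence with a bound $G \le c_k \abs{\varsing(F)}$ is right.

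The gap is in the claimed ``comparatively routine'' bound for route~(b). Even granting the (open) finiteness of $\Uclashnsi{\delta=k}$, your inverse-sDP accounting does not go through: one inverse step introduces a new main clause $C = C' \cup \set{x}$ with $C' = \bca_i D_i'$ (by saturation), and a variable $w \in \varnosing(F')$ with singular literal $y$ becomes nonsingular in $F$ precisely when $y \in C'$. Hence the number of variables lost from the singular set in a single inverse step is bounded only by $\abs{C'}$, and $\abs{C'}$ is \emph{not} bounded in terms of $k$ (already for $m=1$ one has $C' = D_1'$, an arbitrary clause of $F'$). So the per-step loss is not bounded by any function of $k$, and the accounting you outline does not yield the linear lower bound $\abs{\varsing(F)} \ge \alpha_k\, n(F)$. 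In short: your reduction of the conjecture is sound, but neither of your proposed routes closes the gap; the conjecture remains open, exactly as the paper states.
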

  \end{enumerate}
\item Can we characterise $\cflmusat$ and/or $\cflimusat$? Especially, what is the decision complexity of these classes?
\item Are there other interesting classes for which we can show confluence resp.\ confluence mod isomorphism of singular DP-reduction?
\end{enumerate}
As a first application of our results, in Subsection \ref{sec:appmu2} we considered the types of (arbitrary) elements of $\Musati{\delta=2}$. This detailed knowledge is a stepping stone for the determination of the isomorphism types of the elements of $\Musatnsi{\delta=3}$, which we have obtained meanwhile (to be published; based on a mixture of general insights into the structure of $\Musat$ and detailed investigations into $\Musati{\delta\le2}$).

The major open problem of the field is the classification (of isomorphism types) of $\Musatnsi{\delta=k}$ for arbitrary $k$. The point of departure is the conjecture stated in \cite{KullmannZhao2011Bounds} that for $F \in \Uclashnsi{\delta=k}$ the number $n(F)$ of variables is bounded.

Regarding the potential applications from Subsection \ref{sec:introapp}, applying singular DP-reductions in algorithms searching for MUS's is a natural next step.

Finally, a promising direction is the generalisation of the results of this paper beyond minimal unsatisfiability, possibly to arbitrary clause-sets: The analysis of sDP-reduction is much simplified by the fact that for $F \in \Musat$ all possible resolutions must actually occur, without producing tautologies and without producing any contractions. To handle arbitrary $F \in \Cls$, these complications have to be taken into account.

\bibliographystyle{plain}

\begin{thebibliography}{10}

\bibitem{AhLi86}
Ron Aharoni and Nathan Linial.
\newblock Minimal non-two-colorable hypergraphs and minimal unsatisfiable
  formulas.
\newblock {\em Journal of Combinatorial Theory}, A 43:196--204, 1986.

\bibitem{CramaHammer2011BooleanFunctions}
Yves Crama and Peter~L. Hammer.
\newblock {\em Boolean Functions: Theory, Algorithms, and Applications}, volume
  142 of {\em Encyclopedia of Mathematics and Its Applications}.
\newblock Cambridge University Press, 2011.
\newblock ISBN 978-0-521-84751-3.

\bibitem{DP60}
Martin Davis and Hilary Putnam.
\newblock A computing procedure for quantification theory.
\newblock {\em Journal of the ACM}, 7:201--215, 1960.

\bibitem{DDK98}
Gennady Davydov, Inna Davydova, and Hans {Kleine B{\"u}ning}.
\newblock An efficient algorithm for the minimal unsatisfiability problem for a
  subclass of {CNF}.
\newblock {\em Annals of Mathematics and Artificial Intelligence}, 23:229--245,
  1998.

\bibitem{EenBiere2005Satelite}
Niklas E{\'{e}}n and Armin Biere.
\newblock Effective preprocessing in {SAT} through variable and clause
  elimination.
\newblock In Fahiem Bacchus and Toby Walsh, editors, {\em Theory and
  Applications of Satisfiability Testing 2005}, volume 3569 of {\em Lecture
  Notes in Computer Science}, pages 61--75, Berlin, 2005. Springer.
\newblock ISBN 3-540-26276-8.

\bibitem{FKS00}
Herbert Fleischner, Oliver Kullmann, and Stefan Szeider.
\newblock Polynomial--time recognition of minimal unsatisfiable formulas with
  fixed clause--variable difference.
\newblock {\em Theoretical Computer Science}, 289(1):503--516, November 2002.

\bibitem{FlRe94}
T.~Fliti and G.~Reynaud.
\newblock Sizes of minimally unsatisfiable conjunctive normal forms.
\newblock Fa\-cult{\'e} des Sciences de Luminy, Dpt. Mathematique-Informatique,
  13288 Marseille, France, November 1994.

\bibitem{Franco1991InfrequentVariables}
John Franco.
\newblock Elimination of infrequent variables improves average case performance
  of satisfiability algorithms.
\newblock {\em SIAM Journal on Computing}, 20(6):1119--1127, 1991.

\bibitem{Gelder2001ResSATsolver}
Allen~Van Gelder.
\newblock Combining preorder and postorder resolution in a satisfiability
  solver.
\newblock {\em Electronic Notes in Discrete Mathematics (ENDM)}, 9:115--128,
  June 2001.

\bibitem{He74}
Lawrence~J. Henschen and Lawrence Wos.
\newblock Unit refutations and {H}orn sets.
\newblock {\em Journal of the Association for Computing Machinery},
  21(4):590--605, October 1974.

\bibitem{KleineBuening2000SubclassesMU}
Hans {Kleine B{\"u}ning}.
\newblock On subclasses of minimal unsatisfiable formulas.
\newblock {\em Discrete Applied Mathematics}, 107:83--98, 2000.

\bibitem{Kullmann2007HandbuchMU}
Hans {Kleine B{\"{u}}ning} and Oliver Kullmann.
\newblock Minimal unsatisfiability and autarkies.
\newblock In Armin Biere, Marijn~J.H. Heule, Hans van Maaren, and Toby Walsh,
  editors, {\em Handbook of Satisfiability}, volume 185 of {\em Frontiers in
  Artificial Intelligence and Applications}, chapter~11, pages 339--401. IOS
  Press, February 2009.

\bibitem{Ku92}
Oliver Kullmann.
\newblock {Obere und untere Schranken f{\"u}r die Komplexit{\"a}t von
  aussagenlogischen Resolutionsbeweisen und Klassen von SAT-Algorithmen}.
\newblock Master's thesis, Johann Wolfgang Goethe-Universit{\"a}t Frankfurt am
  Main, April 1992.
\newblock (Upper and lower bounds for the complexity of propositional
  resolution proofs and classes of SAT algorithms (in German); Diplomarbeit am
  Fachbereich Mathematik).

\bibitem{Ku99b}
Oliver Kullmann.
\newblock Investigating a general hierarchy of polynomially decidable classes
  of {CNF}'s based on short tree-like resolution proofs.
\newblock Technical Report TR99-041, Electronic Colloquium on Computational
  Complexity (ECCC), October 1999.

\bibitem{Ku99dKo}
Oliver Kullmann.
\newblock An application of matroid theory to the {SAT} problem.
\newblock In {\em Fifteenth Annual IEEE Conference on Computational Complexity
  (2000)}, pages 116--124. IEEE Computer Society, July 2000.

\bibitem{Ku00f}
Oliver Kullmann.
\newblock Lean clause-sets: Generalizations of minimally unsatisfiable
  clause-sets.
\newblock {\em Discrete Applied Mathematics}, 130:209--249, 2003.

\bibitem{Ku00g}
Oliver Kullmann.
\newblock Upper and lower bounds on the complexity of generalised resolution
  and generalised constraint satisfaction problems.
\newblock {\em Annals of Mathematics and Artificial Intelligence},
  40(3-4):303--352, March 2004.

\bibitem{Kullmann2007ClausalFormZI}
Oliver Kullmann.
\newblock Constraint satisfaction problems in clausal form {I}: Autarkies and
  deficiency.
\newblock {\em Fundamenta Informaticae}, 109(1):27--81, 2011.

\bibitem{Kullmann2007ClausalFormZII}
Oliver Kullmann.
\newblock Constraint satisfaction problems in clausal form {II}: Minimal
  unsatisfiability and conflict structure.
\newblock {\em Fundamenta Informaticae}, 109(1):83--119, 2011.

\bibitem{KuLu97}
Oliver Kullmann and Horst Luckhardt.
\newblock Deciding propositional tautologies: Algorithms and their complexity.
\newblock Preprint, 82 pages; the ps-file can be obtained at
  \url{http://cs.swan.ac.uk/~csoliver/Artikel/tg.ps}, January 1997.

\bibitem{KuLu98}
Oliver Kullmann and Horst Luckhardt.
\newblock Algorithms for {SAT}/{TAUT} decision based on various measures.
\newblock Preprint, 71 pages; the ps-file can be obtained from
  \url{http://cs.swan.ac.uk/~csoliver/Artikel/TAUT.ps}, February 1999.

\bibitem{KullmannSzeider2003Communication}
Oliver Kullmann and Stefan Szeider.
\newblock Personal communication.
\newblock On minimally unsatisfiable clause-sets, 2002-2004.

\bibitem{KullmannZhao2011Bounds}
Oliver Kullmann and Xishun Zhao.
\newblock On variables with few occurrences in conjunctive normal forms.
\newblock In Laurent Simon and Karem Sakallah, editors, {\em Theory and
  Applications of Satisfiability Testing - SAT 2011}, volume LNCS 6695 of {\em
  Lecture Notes in Computer Science}, pages 33--46. Springer, 2011.
\newblock ISBN-13 978-3-642-14185-0.

\bibitem{KullmannZhao2012ConfluenceC}
Oliver Kullmann and Xishun Zhao.
\newblock On {D}avis-{P}utnam reductions for minimally unsatisfiable
  clause-sets.
\newblock In Alessandro Cimatti and Roberto Sebastiani, editors, {\em Theory
  and Applications of Satisfiability Testing - SAT 2012}, volume LNCS 7317 of
  {\em Lecture Notes in Computer Science}, pages 270--283. Springer,
  \noopsort{a}2012.
\newblock ISBN-13 978-3-642-31611-1.

\bibitem{KullmannZhao2012ConfluenceJ}
Oliver Kullmann and Xishun Zhao.
\newblock On {D}avis-{P}utnam reductions for minimally unsatisfiable
  clause-sets.
\newblock {\em Theoretical Computer Science}, \noopsort{c}2013.
\newblock To appear.

\bibitem{MinLiManyaMohamedouPlanes2010Maxsat}
Chu~Min Li, Felip Many{\'{a}}, Nouredine~Ould Mohamedou, and Jordi Planes.
\newblock Resolution-based lower bounds in {MaxSAT}.
\newblock {\em Constraints}, 15(4):456--484, October 2010.

\bibitem{MarquesSilva2012MUS}
Joao Marques-Silva.
\newblock Computing minimally unsatisfiable subformulas: State of the art and
  future directions.
\newblock {\em Journal of Multiple-Valued Logic and Soft Computing}, 2012.
\newblock To appear.

\bibitem{SubbarayanPradhan2004LDP}
Sathiamoorthy Subbarayan and Dhiraj~K. Pradhan.
\newblock {NiVER}: Non-increasing variable elimination resolution for
  preprocessing {SAT} instances.
\newblock In Holger~H. Hoos and David~G. Mitchell, editors, {\em The Seventh
  International Conference on Theory and Applications of Satisfiability
  Testing}, volume 3542 of {\em Lecture Notes in Computer Science}, pages
  276--291, Berlin, 2005. Springer.
\newblock ISBN 3-540-27829-X.

\bibitem{Szei2002FixedParam}
Stefan Szeider.
\newblock Minimal unsatisfiable formulas with bounded clause-variable
  difference are fixed-parameter tractable.
\newblock {\em Journal of Computer and System Sciences}, 69(4):656--674, 2004.

\bibitem{XD99}
Xishun Zhao and Ding Decheng.
\newblock Two tractable subclasses of minimal unsatisfiable formulas.
\newblock {\em Science in China (Series A)}, 42(7):720--731, July 1999.

\end{thebibliography}

\newcommand{\noopsort}[1]{}

\end{document}